\numberwithin{equation}{section}
\def\d{\mathrm{d}}
\newcommand{\ES}{\mathrm{ES}}
\newcommand{\LES}{\mathrm{LES}}
\newcommand{\X}{\mathcal {X}}
\newcommand{\VaR}{\mathrm{VaR}}
\newcommand{\RVaR}{\mathrm{RVaR}}
\newcommand{\E}{\mathbb{E}}
\newcommand{\B}{\mathcal{B}}
\newcommand{\F}{\mathcal{F}}
\newcommand{\A}{\mathbb{A}}
\newcommand{\R}{\mathbb{R}}
\newcommand{\p}{\mathbb{P}}
\newcommand{\M}{\mathcal{M}}
\newcommand{\cU}{\mathcal{U}}
\newcommand{\id}{\mathds{1}}
\newcommand{\bsyb}{\boldsymbol}
\renewcommand{\(}{\left(}
\renewcommand{\)}{\right)}
\renewcommand{\[}{\left[}
\renewcommand{\]}{\right]}
\renewcommand{\ge}{\geqslant}
\renewcommand{\le}{\leqslant}
\renewcommand{\geq}{\geqslant}
\renewcommand{\leq}{\leqslant}
\renewcommand{\epsilon}{\varepsilon}
\newcommand{\esssup}{\mathrm{ess\mbox{-}sup}}
\newcommand{\essinf}{\mathrm{ess\mbox{-}inf}}
\renewcommand{\cdots}{\dots}
\theoremstyle{plain}
\newtheorem{theorem}{Theorem}
\newtheorem{corollary}{Corollary}
\newtheorem{lemma}{Lemma}
\newtheorem{proposition}{Proposition}
\theoremstyle{definition}
\newtheorem{definition}{Definition}
\newcommand{\Xc}{\mathcal{X}}
\newcommand{\Vc}{\mathcal{V}}
\newcommand{\Bc}{\mathcal{B}}
\newcommand{\Fc}{\mathcal{F}}
\newcommand{\Mc}{\mathcal{M}}
\newcommand{\Xf}{\mathbf{X}}
\newcommand{\Eb}{\mathbb{E}}
\newcommand{\Rb}{\mathbb{R}}
\newcommand{\Nb}{\mathbb{N}}
\newcommand{\Pb}{\mathbb{P}}
\newcommand{\Xv}{\vec{X}}
\newcommand{\tcrd}{\textcolor{red}}
\newtheorem{assumption}{Assumption}
\theoremstyle{remark}
\newtheorem{remark}{Remark}
\newcommand{\trd}{\textcolor{red}}
\newcommand{\cet}{\begin{center}}
\newcommand{\ecet}{\end{center}}
\definecolor{olive}{rgb}{0.3, 0.4, .1}
\definecolor{fore}{RGB}{249,242,215}
\definecolor{back}{RGB}{51,51,51}
\definecolor{title}{RGB}{255,0,90}
\definecolor{dgreen}{rgb}{0.,0.6,0.}
\definecolor{gold}{rgb}{1.,0.84,0.}
\definecolor{JungleGreen}{cmyk}{0.99,0,0.52,0}
\definecolor{BlueGreen}{cmyk}{0.85,0,0.33,0}
\definecolor{RawSienna}{cmyk}{0,0.72,1,0.45}
\definecolor{Magenta}{cmyk}{0,1,0,0}
\newcommand{\dsquare}{\mathop{  \square} \displaylimits}
\newcommand{\com}[1]{\marginpar{{\begin{minipage}{0.18\textwidth}{\setstretch{1.1} \begin{flushleft} \footnotesize \color{red}{#1} \end{flushleft} }\end{minipage}}}}
\begin{document}
	
\title{Extended Convolution Bounds on the Fr\'echet Problem: \\Robust Risk Aggregation and 
Risk Sharing} 

\author{
Peng Liu\thanks{\scriptsize School of Mathematics, Statistics and Actuarial Science, University of Essex, UK. Email: \texttt{peng.liu@essex.ac.uk}}
\and
Yang Liu\thanks{\scriptsize School of Science and Engineering, The Chinese University of Hong Kong (Shenzhen), China. Email: \texttt{yangliu16@cuhk.edu.cn}}
\and 
Houhan Teng\thanks{\scriptsize School of Science and Engineering, The Chinese University of Hong Kong (Shenzhen), China. Email: \texttt{s2447087@link.cuhk.edu.cn}}
}

\date{}

\maketitle
\begin{abstract}
    In this paper, we provide extended convolution bounds for the Fr\'{e}chet problem and discuss related implications in quantitative risk management. First, we establish a new inequality for the Range-Value-at-Risk ($\RVaR$). Based on this inequality, we obtain 
    bounds for robust risk aggregation with dependence uncertainty for (i) $\RVaR$, (ii) inter-RVaR difference 
    and (iii) inter-quantile difference, and provide sharpness conditions. 
    These bounds are called extended convolution bounds, which not only complement the results in the literature (convolution bounds in \citet{BLLW24}) but also offer results for some variability measures. Next, applying the above inequality, we study the risk sharing problem for averaged quantiles (corresponding to risk sharing for distortion risk measures with special inverse S-shaped distortion functions), which is a non-convex optimization problem. We obtain an expression of the minimal value of the risk sharing and an explicit expression for the corresponding optimal allocation, which is comonotonic risk sharing for large losses and counter-monotonic risk sharing for small losses or large gains. Finally, we explore the dependence structure for the optimal allocations, showing that the optimal allocation does not exist if the risk is not bounded from above.
    
    \begin{bfseries}Keywords\end{bfseries}: Robust risk aggregation; Risk sharing; 
    Range-Value-at-Risk ($\RVaR$); 
    Quantiles; Inter-$\RVaR$ difference; Inter-quantile difference; Distortion risk measures; Dependence uncertainty 
\end{abstract}

\section{Introduction}\label{sec:intro}

The \cite{F51} problem in probability theory concerns the characterization of the possible distributions of $f(X_1, \dots, X_n)$ when the distributions of the random variables $X_1, X_2, \dots, X_n$ are known, but their joint dependence structure is unspecified, where $f: \R^n \to \R$ is a measurable  function. Typically, $f$ is the sum, and we denote the sum variable by $S = X_1 + X_2 + \dots + X_n$. Formally, given the set of feasible joint distributions with some fixed marginal distributions (formulated from \cite{D56}), the problem seeks to determine sharp bounds on functionals of $S$, such as its cumulative distribution function (cdf) or quantiles. This problem dates back to \cite{H40} and the classical Fr\'echet-Hoeffding bounds establish the preliminary theory on the possible dependence structure. Further, if $n=2$, analytical results on the largest possible distribution (equivalently, the largest quantiles) of $S$ were derived in \cite{M81} and \cite{R82}. However, if $n \geq 3$, the problem becomes substantially more complex when optimizing among all admissible copulas to obtain the best-possible bounds for some specific functionals. This problem has a deep connection with measure theory, functional analysis and optimal transport, as it involves constrained optimization over spaces of probability measures; see \cite{EP06} and \cite{NW22}. 

Beyond its theoretical significance, the Fr\'echet problem of the sum variable plays a crucial role in risk management, where dependence assumptions significantly impact the evaluation of risk measures on $S$ such as Value-at-Risk (VaR; quantile) and Expected Shortfall (ES). In practice, data from different correlated products are often collected separately and thus no dependence information is available; see \cite{EPR13} and \cite{EWW15}. Even if the data are available, the estimation of the dependence structure typically has low accuracy, resulting in substantial uncertainty on the choice of the dependence structure; see e.g., Chapter 8 of \cite{MFE15}. Consequently, sharp bounds for some risk functionals under dependence uncertainty indicate the worst/best-case (robust) risk aggregation and evaluation, which is hence of great significance. Besides, 
the Fr\'echet problem has a wide application in different areas of operations research, including assembling line crew scheduling (\cite{H84}), matching theory (\cite{BTWZ23}), worst-case portfolio selection (\cite{CLLW22}), multiple statistical hypothesis testing (\cite{VWW22}), etc.; a comprehensive discussion was given in \cite{BLLW24}. 
Consequently, recent technical advances in optimal transport (\cite{BKLP22}), copula theory (\cite{KLW24}) and numerical optimization (\cite{SDR21}) have been incorporated into the study of the Fr\'echet problem, thereby enabling more precise characterizations in probability theory and furnishing additional tools for these various disciplines. 

In the field of quantitative risk management, the Fr\'echet problem is often referred to as the risk aggregation problem with dependence uncertainty, which is inherently challenging due to its nature. 
Explicit expressions are only available in some special cases of the marginal distributions. For general marginal distributions, only bounds are available for robust quantiles in the literature; see e.g., the dual bounds in Theorem 4.17 of \cite{R13}. The explicit expression for robust quantiles is only available for marginal distributions with monotone densities; see e.g., \cite{WPY13}, \cite{BJW14} and \cite{JHW16}, where the worst-case dependence structure is a combination of joint-mixability (see \cite{WW16}) and mutual exclusivity (introduced in \cite{DD99}). Recently, \cite{BLLW24} offered a so-called convolution bound, which is proved to be the sharp bound for robust $\RVaR$ and quantiles under some specific cases, especially including the case that all the marginal distributions have monotonic densities in the same direction on their tail parts. Here, $\RVaR$ (Range-Value-at-Risk) is a two-parameter class of non-convex risk measures, including both $\VaR$ and $\ES$ as special cases, which was first introduced by \cite{CDS10} and will be defined in \eqref{eq:r1}. In general, computational and optimization approaches, such as rearrangement algorithms (\cite{EPR13}), scheduling (\cite{BJV18}), neural networks (\cite{EKP20}), and linear programming formulations (\cite{AB21}), have been developed to approximate the sharp bounds numerically, albeit with their own drawbacks. It is worth noting that the convolution bound is closely linked to the $\RVaR$ inequality in \cite{ELW18}, which is used to address risk-sharing problems among multiple agents with risk preferences characterized by $\RVaR$.  

Indeed, the Fr\'echet problem offers a distinctive perspective on studying the risk sharing problem, as risk sharing can be viewed as the  ``inverse” of risk aggregation. The risk sharing problem concerns redistributing a total risk among multiple participants, requiring the determination of the optimal allocations. This problem has a long history, dating back to the seminal work of \cite{B62}, which studied risk sharing through the framework of expected utility. 
Over the past two decades, researchers have explored risk sharing problems using risk measures to represent participants' risk preferences since the introduction of coherent and convex risk measures in \cite{ADEH99}, \cite{FS02} and \cite{FR05}. For instance,  \cite{BE05}, \cite{JST08}, \cite{FS08} and \cite{D12} examined risk sharing problems based on risk measures satisfying convexity and law invariance, leading to {comonotonic} optimal allocations. Moreover, \cite{ELW18} investigated risk sharing for $\RVaR$, derived the minimal value of risk sharing (called inf-convolution) and provided explicit expressions for optimal allocations in both cooperative and competitive settings; see also \cite{ELMW20}. Besides, some other non-convex risk measures were also studied in the risk sharing problem such as $\VaR$-type distortion risk measures (\cite{W18}),  Lambda $\VaR$ (\cite{L25}), inter-quantile difference (\cite{GLW23}) and general non-convex risk measures (\cite{L22}).


In this paper, we focus on the theoretical aspect of the Fr\'{e}chet problem rather than applications, and we elaborate on our contribution as follows. 
First, we establish a new type of upper and lower bounds for robust risk aggregation with $\RVaR$ under dependence uncertainty in Section \ref{sec:convolution1}. We show that those bounds are sharp as upper bounds if the marginal distributions have increasing densities on their upper-tail parts and also sharp as lower bounds if the marginal distributions have decreasing densities on their lower-tail parts. These bounds use a general structure of averaged quantiles to obtain new forms of bounds, and offer new sharpness results for robust RVaR aggregation compared with those in \cite{BLLW24}.  Specifically, while the convolution bound for robust RVaR aggregation in \cite{BLLW24} is a sharp lower bound if the marginal distributions exhibit increasing densities on their lower-tail parts, our derived bound achieves sharpness as a lower bound if marginal distributions have decreasing densities on their lower-tail parts.

Second, 
we obtain two more types of upper bounds for the difference between two $\RVaR$s and that between two quantiles, called inter-$\RVaR$ difference ($\mathrm{IRD}$) and, for a special case, inter-quantile difference ($\mathrm{IQD}$)  respectively 
in Section \ref{Sec:Robust}. Note that $\mathrm{IRD}$ extends the inter-$\ES$  proposed in \cite{BFWW22} as an alternative to the standard deviation to measure the variability of risks. It is well known that $\mathrm{IQD}$ is frequently used to find the outliers and measure the statistical dispersion in statistics. Moreover, $\mathrm{IQD}$ is also a tool to quantify the variability of risks; see \cite{BFWW22} and \cite{GLW23}.  Importantly, we find that the robust $\mathrm{IRD}$ (resp., $\mathrm{IQD}$) equals the difference between the two robust $\RVaR$ (resp., quantiles). Hence, the explicit expressions are derived based on the robust $\RVaR$ and quantiles for marginal distributions with monotone densities on both upper- and lower-tail parts. We offer two different sharp upper bounds for $\mathrm{IRD}$ under different assumptions on the marginal distributions: the first condition requires the marginals to have decreasing densities on their upper-tail parts and increasing densities on their lower-tail parts; whereas the second requires decreasing densities on both upper- and lower-tail parts, respectively. Commonly used continuous distributions in finance or risk management (e.g., Gaussian, lognormal, t, exponential, and Pareto) 
mostly fall into either of the two categories. 
Those two expressions for the sharp upper bound of $\mathrm{IRD}$  together with their different assumptions demonstrate the complexity of the robust risk aggregation for $\mathrm{IRD}$ and also the usefulness of our new bounds established in Section \ref{sec:convolution1}. The sharp upper bound for the difference between two quantiles requires the marginals to have densities that are monotone in the same direction on both upper- and lower-tail parts, respectively, which is valid for almost all the commonly used continuous distributions. All three types of bounds established in our paper are generally called extended convolution bounds, as they can be viewed as an extension of the convolution bounds from $\RVaR$ and quantiles to the corresponding variability measures introduced in \cite{BFWW22}.

Third, 
we study the risk sharing problem for some averaged quantiles, which is equivalent to the risk sharing problem for distortion risk measures with some special inverse S-shaped distortion functions. This class of distortion risk measures represents the decision maker's typical attitude: risk aversion for large losses and risk-seeking for small losses or gains; see \cite{Y87} and \cite{TK92}. Clearly, this problem is non-convex and challenging.  It turns out that the inf-convolution 
has a very simple form: the lower-tail $\ES$, which is an averaged quantile below a specified quantile of the total risk.  Moreover, the optimal allocation, which is Pareto-optimal, exists if and only if the total risk is bounded from above. The structure of the optimal allocation consists of two parts: The upper-tail part of the total risk is shared comonotonically, and the lower-tail part of the risk is shared counter-monotonically. This optimal allocation is consistent with the agents' risk attitudes:  risk-aversion for large losses and risk-seeking for small losses or large gains. If the total risk is not bounded from above, the optimal allocation does not exist, where the proof is based on the analysis of the dependence structure of the optimal allocations.  Instead, in this case, we find a sequence of sub-optimal allocations such that the risk exposures generated by those allocations converge to the inf-convolution. We emphasize that our optimal allocation is a new combination of existing optimal allocations for convex risk measures and quantile-based risk measures in the literature; see e.g., \cite{JST08} and \cite{ELW18}. Although we only solve the risk sharing problem for distortion risk measures with some special inverse S-shaped distortion functions, we emphasize that to the best of our knowledge, this is the first non-constrained risk sharing result for this class of distortion risk measures with the distortion functions exaggerating the probability of large losses and the probability of large gains simultaneously. It sheds light on the further investigation into the risk sharing problem for general inverse-S-shaped distortion risk measures.

The rest of the paper is organized as follows. We give some notation and definitions in Section \ref{sec:notation}. We establish a new inequality for $\RVaR$ in Section \ref{Sec:inequality}. Based on this new inequality, we obtain some bounds on risk aggregation for $\RVaR$, the difference between two $\RVaR$s and the difference between two quantiles in Section \ref{sec:aggregation}. Employing this inequality, we study the risk sharing problem for the averaged quantiles and obtain the condition for the existence of the Pareto-optimal allocations and the forms of the Pareto-optimal allocations in Section \ref{sec:RiskSharing}.  Section \ref{sec:conc} concludes the paper.

\section{Notation and Definitions}
\label{sec:notation}

Let $(\Omega, \Fc, \Pb)$ be an atomless probability space and  $\Xc = L^0(\Omega, \F, \Pb)$ be the set of all random variables, and $\X_1 = L^1(\Omega, \F, \p)$ be the set of all random variables with finite mean. Correspondingly, we denote the set of all distributions of the random variables in $\Xc$ by $\Mc$, and by $\M_1$ the set of all distributions of the random variables in $\X_1$. To ease the notation, we treat almost surely equal random variables and events as identical and set $[n]: = \{1, \dots, n\}$ for $n \in \Nb^+.$  Throughout this paper, we use $\mathrm{U}[a,b]$ with $a<b$ to represent the uniform distribution on $[a,b]$. For any $X\in \Xc$, let $U_X \sim \mathrm{U}[0,1]$ such that $F_X^{-1}(U_X)=X$ and $U_X$ is usually called the probability transform of $X$ (e.g., Proposition 7.2 of \cite{MFE15}); The existence of such $U_X$ is guaranteed (e.g.,
Lemma A.32 of \cite{FS16}). In this paper, we identify a probability measure $\mu$ with its distribution function $F$ when no confusion arises. 
Either of them may be used according to the context. Moreover, terms like ``increasing'' and ``decreasing'' are in the non-strict sense.

Next, we introduce a family of risk measures: the average quantile functional $R$. For any $I \in \Bc([0,1])$ with $|I|>0$, the functional $R_I: \Mc \rightarrow \Rb \cup \{\pm \infty \}$ is defined as
    \begin{equation*} 
        \label{def: R}
        R_{I} (\mu) = \frac{1}{|I|} \int _{I} q^-_t(\mu)\d  t,
    \end{equation*}
where $|I|$ is the length of $I$ under the Lebesgue measure on $\R$ and 
$q^-_t(\mu)=\inf\{x\in \R: \mu((-\infty,x])\ge t\}$ is the left quantile of distribution $\mu$ with $t \in (0, 1]$. As $R_I$ is a law-invariant risk measure, we abuse the notation $R_I(X)$ with $R_I(\mu)$ for a random variable $X\sim \mu$. This abuse may also apply to other law-invariant risk measures in our paper.  


\begin{remark}
    \begin{enumerate}[(i)]
        \item In fact, one could also define $R_I$ using the right quantile $q_t^+(\mu)$ (i.e., $q_t^+(\mu) = \inf\{x \in \mathbb{R} : \mu((-\infty, x]) > t\}$), which does not affect the value of the functional as $q_t^-(\mu)\neq q_t^+(\mu)$ only holds at a countable number of points over $(0, 1)$. 
        \item Further, let $I \subseteq [0 ,1]$ be a union of finitely many intervals and denote by $\bar{I}$ the closure of $I$. As the integral value does not change if the integral region is changed within countably many points, we have $R_I(\mu) = R_{\bar{I}}(\mu)$. 
        Thus, without loss of generality, we always write any interval in $I$ 
        as a closed one. 
    \end{enumerate}
\end{remark}
The average quantile functional includes the classic risk measure, Range-Value-at-Risk ($\RVaR$), as a special case in the following way:
\begin{equation}\label{eq:r1} 
     \RVaR_{\beta,\beta+\alpha}(\mu)=\frac{1}{\alpha} \int _{\beta} ^{\beta+\alpha} q^+_{1-t}(\mu)\d t=R_{[1-\beta-\alpha, 1-\beta]} (\mu) , ~~\mu\in\M,
\end{equation} 
where $ 0 < \beta < \beta+\alpha \le 1$. Note that $\RVaR$ was first introduced by \cite{CDS10} as a family of robust risk measures, and it was further applied to the risk sharing and optimal reinsurance problem as the preference functional in \cite{ELW18} and \cite{FHLX25}. Note that $\RVaR$ includes $\mathrm{ES}$ and $\mathrm{LES}$ (Left ES) as special cases in the following way:
For $p\in (0,1]$,  
$$
\ES_p(\mu) = \frac{1}{p}\int_{1-p}^{1} q_u^- (\mu) \d u = R_{[1-p,1]}(\mu) \mbox{~~~and~~~} \LES_p(\mu) = \frac{1}{p}\int_{0}^{p} q_u^- (\mu) \d u = R_{[0,p]}(\mu), ~~  \mu \in \M_1.
$$
Moreover, both $q_\beta^-(\mu)$ and $q_\beta^+(\mu)$ for $\beta \in(0,1)$ appear as the limits of some $R_I$  via
\begin{equation*}
\begin{aligned}
	&
    \lim_{ \alpha\downarrow 0} R_{[\beta-\alpha, \beta]} (\mu) = q_{\beta}^-(\mu)
	\mbox{~~~and~~~}
    \lim_{ \alpha\downarrow 0} R_{[\beta, \beta+\alpha]}(\mu) = q_{\beta}^+(\mu), ~~ \mu \in \M.
\end{aligned}
\end{equation*}

For $\boldsymbol \mu= (\mu_1,\dots,\mu_n) \in \M^n$, let $\Gamma (\boldsymbol \mu)$ be the set of probability measures on $\R^n$ with one-dimensional marginals $\mu_1,\dots,\mu_n$. For a probability measure $\mu $ on $(\R^n, \B(\R^n))$, define   $\lambda_{\mu}\in \M$ via $$\lambda_\mu ((-\infty,x])  = \mu(\{ (x_1,\dots,x_n)\in \R^n: x_1+\dots+x_n\le x\}),~ x\in \R.$$
In other words, $\lambda_\mu$ is an aggregated probability measure of the sum variable $\sum_{i=1}^n X_i$, where the random vector $(X_1,\dots,X_n)$ follows the $n$-dimensional distribution $\mu$.
Moreover, let $\Lambda (\boldsymbol \mu) =\{\lambda_\mu: \mu\in \Gamma (\boldsymbol \mu)\}$ be the set of all aggregated probability measures with specified marginals $\boldsymbol \mu$.
 Define an approximate standard simplex 
$$
\Delta_n= \left\{(\beta_0,\beta_1,\dots,\beta_{n}) \in (0,1)\times [0,1)^{n}: \sum_{i=0}^{n} \beta_i =1 \right\}.$$
Note that $\Delta_n$ is neither open nor closed; hence, we use the term ``approximate". 
For real numbers $x_i, i \in [n]$, we use the notation $\bigvee_{i=1}^n x_i=\max_{i\in [n]} x_i$ and $\bigwedge_{i=1}^n x_i=\min_{i \in [n]} x_i$. 
Finally, for $\mu \in \M$ and $r\in (0,1)$, let $\mu^{r+}$ be the probability measure given by
\begin{equation}
\label{eqn: truncated distribution function}
\mu^{r+} (-\infty,x] =\max\left\{ \frac{\mu(-\infty,x]-r}{1-r}, 0\right\},~~x\in \R,
\end{equation}
which is called the $r$-tail distribution of $\mu$ in \cite{RU02}. Indeed, $\mu^{r+}$ is the distribution measure of the random variable $q^-_U(\mu)$, where $U\sim \mathrm{U}[r,1]$. Equivalently, $\mu^{r+}$ is the distribution measure of $\mu$ restricted beyond its $r$-quantile (assuming no mass at this point). In this paper, a statement that $\mu$ admits a decreasing (resp., increasing) density beyond its $r$-quantile is equivalent to the one that $\mu^{r+}$ admits a decreasing (resp., increasing) density on its support. An analogous definition applies to the probability measure $\mu^{r-}$: 
$$
\mu^{r-} \left(-\infty, x\right]= \min\left\{ \frac{\mu\left(-\infty, x\right]}{r}, 1\right\},~~x\in \R.
$$
That is, $\mu^{r-}$ is the distribution measure of the random variable $q^-_V(\mu)$, where $V \sim \mathrm{U}[0,r]$. A statement that $\mu$ admits a decreasing (resp., increasing) density below its $r$-quantile is equivalent to the one that $\mu^{r-}$ admits a decreasing (resp., increasing) density on its support. We use the endpoint conventions
$$
q_0^+(\mu)=\lim_{t\downarrow0}q_t^+(\mu),\qquad
q_1^-(\mu)=\lim_{t\uparrow1}q_t^-(\mu),
$$
which may take the values $-\infty$ and $+\infty$, respectively.
Moreover, we set
$
\mu^{0+}:=\mu,\; \mu^{1-}:=\mu.
$



\section{
New RVaR Inequality}
\label{Sec:inequality} 
In this section, we establish a new inequality for $\RVaR$. This inequality will play a crucial role in establishing the bounds for risk aggregation with dependence uncertainty and in analyzing the risk sharing problem later.
Before showing our result, we first display Propositions 1 and A.1 of \cite{BLLW24} (originally from Theorem 4.1 of \cite{LW21}) as below, which will be used frequently later. 
\begin{lemma}[Propositions 1 and A.1 of \cite{BLLW24}]\label{Le:transform} Suppose $\boldsymbol \mu=(\mu_1,\dots,\mu_n)\in \M^n$ and $0\leq r<r+s\leq 1$. Define $\boldsymbol \mu^{r+} = (\mu_1^{r+}, \dots, \mu_n^{r+})$ and $\boldsymbol \mu^{(r+s)-} = (\mu_1^{(r+s)-}, \dots, \mu_n^{(r+s)-})$. We have 
	$$
	\sup_{\nu\in \Lambda(\boldsymbol \mu) }R_{[r,r+s]} (\nu) = \sup_{\nu\in \Lambda(\boldsymbol \mu^{r+ } )}\LES_{\frac{s}{1-r}} (\nu),~
	\inf_{\nu\in \Lambda(\boldsymbol \mu) } R_{[r,r+s]} (\nu) = \inf_{\nu\in \Lambda(\boldsymbol \mu^{(r+s)-}) } \ES_{\frac{s}{r+s}} (\nu).
	$$
 
\end{lemma}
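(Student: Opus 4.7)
The plan is to reduce each identity to a single-measure change of variable and then close the argument with an explicit mixture for one inequality and a Strassen-type rearrangement for the other. I describe the sup identity; the inf identity is handled symmetrically with upper and lower tails exchanged.

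First, I would establish the single-measure identity $R_{[r,r+s]}(\mu) = \LES_{s/(1-r)}(\mu^{r+})$. Using $q_u^-(\mu^{r+}) = q_{r+(1-r)u}^-(\mu)$ for $u\in[0,1]$ and the substitution $u=(t-r)/(1-r)$ in the defining integral of $R_{[r,r+s]}$, the identity is immediate. Hence for every $\nu \in \Lambda(\boldsymbol\mu)$, $R_{[r,r+s]}(\nu) = \LES_{s/(1-r)}(\nu^{r+})$, so both sides of the claimed sup-identity can be compared directly on the upper $(1-r)$-tail.

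For the $\geq$ direction, given any $\nu' \in \Lambda(\boldsymbol\mu^{r+})$ realized as the sum distribution of $(Y_1,\dots,Y_n)$ with marginals $\mu_i^{r+}$, I would sample an independent comonotonic vector $(Z_1,\dots,Z_n)$ with marginals $\mu_i^{r-}$ and an independent Bernoulli $B$ with $\Pb(B=1)=1-r$, and set $X_i := B Y_i + (1-B) Z_i$. The mixture $\mu_i = r\mu_i^{r-}+(1-r)\mu_i^{r+}$ yields $X_i \sim \mu_i$. Since the supports of $\mu_i^{r\pm}$ meet only at $q_r^-(\mu_i)$, we have $\sum_i Z_i \leq \sum_i q_r^-(\mu_i) \leq \sum_i Y_i$ a.s., so $\{B=0\}$ carries the lower $r$-tail of $S=\sum_i X_i$ and $\{B=1\}$ the upper $(1-r)$-tail. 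Consequently $q_t^-(S) = q_{(t-r)/(1-r)}^-(\nu')$ for $t\in[r,1]$, and a second change of variable gives $R_{[r,r+s]}(\mathrm{Law}(S)) = \LES_{s/(1-r)}(\nu')$.

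The $\leq$ direction is the main obstacle. Given any coupling $(X_1,\dots,X_n)$ of $\boldsymbol\mu$ with sum $S\sim\nu$, let $U_S$ be a probability-integral transform of $S$ and set $A=\{U_S\geq r\}$; then $\mathrm{Law}(S\mid A)=\nu^{r+}$. Letting $\tilde\mu_i$ denote the law of $X_i$ on $A$, the Hardy-Littlewood rearrangement inequality yields $\E[f(X_i)\mathbf{1}_A] \leq \E[f(X_i)\mathbf{1}_{\{X_i\geq q_r^-(\mu_i)\}}] = (1-r)\,\E_{\mu_i^{r+}}[f]$ for every increasing $f$, so $\tilde\mu_i \preceq_{\mathrm{st}} \mu_i^{r+}$. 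Strassen's theorem then supplies kernels $K_i$ with which I can enlarge the probability space and, conditionally on $(X_1,\dots,X_n)$, independently draw $X_i^{\mathrm{up}}\sim K_i(X_i,\cdot)$ so that $X_i^{\mathrm{up}}\geq X_i$ a.s.\ on $A$ and $X_i^{\mathrm{up}}\sim \mu_i^{r+}$. Then $\nu^\ast := \mathrm{Law}(\sum_i X_i^{\mathrm{up}}) \in \Lambda(\boldsymbol\mu^{r+})$ and $\nu^{r+}\preceq_{\mathrm{st}}\nu^\ast$; monotonicity of $\LES$ under stochastic dominance yields $R_{[r,r+s]}(\nu)=\LES_{s/(1-r)}(\nu^{r+})\leq\LES_{s/(1-r)}(\nu^\ast)$. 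The technical delicacy is carrying out the Strassen coupling simultaneously across coordinates on one probability space, which the kernel form handles routinely up to a careful splitting of any atom of $\mu_i$ at $q_r^-(\mu_i)$. The inf identity follows by the same blueprint after replacing $\mu_i^{r+}$ by $\mu_i^{(r+s)-}$, $\mu_i^{r-}$ by $\mu_i^{(r+s)+}$, reversing the Bernoulli parameter to $r+s$, and using monotonicity of $\ES$ under stochastic dominance in the reverse direction.
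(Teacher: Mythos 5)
Your proof is correct. Note first that the paper does not prove Lemma \ref{Le:transform} at all---it is imported verbatim from Propositions 1 and A.1 of the cited reference (which in turn trace back to the tail-risk-measure machinery of Liu and Wang, 2021)---so there is no in-paper proof to match against. Your argument is a sound self-contained derivation: the change-of-variables identity $R_{[r,r+s]}(\mu)=\LES_{s/(1-r)}(\mu^{r+})$ is exactly right; the mixture construction $X_i=BY_i+(1-B)Z_i$ with the comonotone lower piece correctly realizes any $\nu'\in\Lambda(\boldsymbol\mu^{r+})$ as the $r$-tail of some $\nu\in\Lambda(\boldsymbol\mu)$, giving ``$\geq$''; and for ``$\leq$'' the key observation $\Pb(X_i>x\mid A)\le\mu_i^{r+}(x,\infty)$ (your Hardy--Littlewood step, which for indicator test functions reduces to $\Pb(X_i>x,A)\le\min\{\Pb(X_i>x),1-r\}$) does establish $\tilde\mu_i\preceq_{\mathrm{st}}\mu_i^{r+}$, after which the monotone coupling and monotonicity of $\LES$ close the argument. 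The only point worth flagging is that your invocation of Strassen's theorem is heavier than necessary: since everything is one-dimensional, the paper's own proof of the analogous identity \eqref{eq:infgenerator} inside Theorem \ref{Thm:TRM} achieves the same coupling elementarily, by passing to the uniform transforms, noting $\Pb(V_i\le x)\le (x/r)\wedge 1$ for $V_i\overset{d}{=}U_{X_i}\mid U_S<r$, and then choosing $V_i'\sim\mathrm U[0,r]$ with $V_i'\le V_i$ via the quantile/distributional transform; this sidesteps the atom-splitting delicacy you mention. Your blueprint for the inf identity (symmetric, or equivalently apply the sup identity to $-X_i$) is also fine.
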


Next, we offer a new inequality for $\RVaR$. 
\begin{theorem}[New $\RVaR$ Inequality]
    \label{thm: newRVaRinequality}
 Let  $0<r < r + s \leq 1$,  $\alpha_1, \cdots, \alpha_n\geq 0$ and $\beta_1,\cdots, \beta_n>0$ satisfying $\sum_{i=1}^n \alpha_i + \bigvee_{i=1}^n \beta_i \leq 1-r$ and $\sum_{i=1}^n\alpha_i  \leq s$. Then for $\boldsymbol{\mu} \in \M^n$ and $\nu\in \Lambda (\boldsymbol \mu)$, we have 
    \begin{equation}
        \label{eqn: GeneralisedNewRVaR}
        R_{[r, r + s]}(\nu) 
        \leq 
        \sum_{i=1}^n \left[ 
        \frac{1-r-\beta_i}{s} R_{[r, r+\alpha_i] \cup [r+ \alpha_i+\beta_i, 1]} \left(\mu_i \right)
        +
        \left(1-\frac{1-r-\beta_i}{s}\right) R_{[r + \alpha_i, r+ \alpha_i+\beta_i]} \left(\mu_i\right) 
        \right].
    \end{equation}
    Moreover, \eqref{eqn: GeneralisedNewRVaR} holds for $r=0$ if $\boldsymbol{\mu} \in \M_1^n$.
\end{theorem}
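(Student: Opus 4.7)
The plan is to transform the inequality into an equivalent lower-bound statement for Expected Shortfall of a sum with fixed marginals, and then to establish that lower bound. Because the right-hand side of \eqref{eqn: GeneralisedNewRVaR} is independent of $\nu$, it suffices to bound $\sup_{\nu\in\Lambda(\boldsymbol\mu)} R_{[r, r+s]}(\nu)$. By the first identity in Lemma \ref{Le:transform}, this supremum equals $\sup_{\nu'\in\Lambda(\boldsymbol\mu^{r+})}\LES_{s/(1-r)}(\nu')$. Under the change of variable $u = (t-r)/(1-r)$, the defining integrals reveal $R_{[r, r+\alpha_i]\cup[r+\alpha_i+\beta_i,1]}(\mu_i) = R_{[0, a_i]\cup[a_i+b_i,1]}(\mu_i^{r+})$ and $R_{[r+\alpha_i, r+\alpha_i+\beta_i]}(\mu_i) = R_{[a_i, a_i+b_i]}(\mu_i^{r+})$ with $a_i:=\alpha_i/(1-r)$, $b_i:=\beta_i/(1-r)$; the constraints $\sum\alpha_i+\max\beta_i\leq 1-r$ and $\sum\alpha_i\leq s$ transform into $\sum a_i+\max b_i\leq 1$ and $\sum a_i\leq s^*:=s/(1-r)$. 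Thus the problem reduces to the case $r=0$, where we may assume $\boldsymbol\mu\in\M_1^n$ (otherwise the right-hand side is $+\infty$ and the inequality is trivial).

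In the reduced setting, the integrals $\int_{[0,\alpha_i]\cup[\alpha_i+\beta_i,1]}q^-_u(\mu_i)\,\d u = (1-\beta_i) R_{[0,\alpha_i]\cup[\alpha_i+\beta_i,1]}(\mu_i)$ and $\int_{\alpha_i}^{\alpha_i+\beta_i}q^-_u(\mu_i)\,\d u = \beta_i R_{[\alpha_i,\alpha_i+\beta_i]}(\mu_i)$ partition $[0,1]$, so together they sum to $\int_0^1 q^-_u(\mu_i)\,\d u = \E[Y_i]$. Substituting into the right-hand side gives $s\cdot(\mathrm{RHS})=\sum_i\E[Y_i] - (1-s)\sum_i R_{[\alpha_i,\alpha_i+\beta_i]}(\mu_i)$. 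Combined with the identity $s\LES_s(Y)+(1-s)\ES_{1-s}(Y)=\E[Y]=\sum_i\E[Y_i]$, the target inequality becomes equivalent to
$$\ES_{1-s}(Y) \;\geq\; \sum_{i=1}^n R_{[\alpha_i,\alpha_i+\beta_i]}(\mu_i) \quad\text{for every } Y\sim\nu\in\Lambda(\boldsymbol\mu). \qquad(\star)$$

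For $n=1$, $(\star)$ reduces to $R_{[\alpha,\alpha+\beta]}(\mu)\leq R_{[s,1]}(\mu)=\ES_{1-s}(\mu)$ under $\alpha\leq s$ and $\alpha+\beta\leq 1$, which follows from the monotonicity of $q^-_u(\mu)$ by splitting $[\alpha,\alpha+\beta]$ at $s$: the sub-interval below $s$ averages to at most $q^-_s(\mu)\leq R_{[s,1]}(\mu)$, and the sub-interval lying inside $[s,1]$, being a leftmost sub-interval, also averages to at most $R_{[s,1]}(\mu)$. For $n\geq 2$, I would establish $(\star)$ using the dual representation $(1-s)\ES_{1-s}(Y)=\sup\{\E[YZ]:0\leq Z\leq 1,\ \E[Z]=1-s\}$ and constructing, for each coupling $\nu$, a dual witness of the form $Z^\ast=\sum_i w_i\,\one_{\{U_i\in[\alpha_i,\alpha_i+\beta_i]\}} + c\,\one_{E_0}$ (with $U_i:=F_{\mu_i}(Y_i)$), whose weights $w_i$, constant $c$, and event $E_0$ are chosen so that $Z^\ast\in[0,1]$, $\E[Z^\ast]=1-s$, and $\sum_i\E[Y_iZ^\ast]\geq(1-s)\sum_i R_{[\alpha_i,\alpha_i+\beta_i]}(\mu_i)$. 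The constraints $\sum\alpha_i\leq s$ and $\sum\alpha_i+\max\beta_i\leq 1$ are exactly what allows this construction, providing enough upper-tail mass budget and bounding how much the middle-band events can overlap. The hardest step will be this construction for $n\geq 2$, and the natural routes are an induction on $n$ or an appeal to the extremal joint-mixability/mutual-exclusivity couplings characteristic of Fr\'echet-problem extremizers (cf.\ \cite{BLLW24} and \cite{JHW16}).
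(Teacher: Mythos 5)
Your reduction is correct and is in fact the same one the paper uses: pass to the $r$-tail marginals via Lemma \ref{Le:transform} to reduce to $r=0$, then use the decomposition $\E[Y_i]=(1-\beta_i)R_{[0,\alpha_i]\cup[\alpha_i+\beta_i,1]}(\mu_i)+\beta_i R_{[\alpha_i,\alpha_i+\beta_i]}(\mu_i)$ together with $s\,\LES_s(Y)+(1-s)\ES_{1-s}(Y)=\E[Y]$ to show that \eqref{eqn: GeneralisedNewRVaR} is equivalent to your inequality $(\star)$, namely $R_{[s,1]}(\nu)\ge\sum_{i=1}^n R_{[\alpha_i,\alpha_i+\beta_i]}(\mu_i)$. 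Your rescaling of the intervals and constraints, and the $n=1$ case of $(\star)$, are all fine.

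The gap is that for $n\ge 2$ you never actually prove $(\star)$, and $(\star)$ is where all the content of the theorem lives. Your dual-representation plan leaves the witness $Z^\ast$ entirely unspecified: you do not say what the weights $w_i$, the constant $c$, or the event $E_0$ are, why $Z^\ast\le 1$ can be arranged when the events $\{U_i\in[\alpha_i,\alpha_i+\beta_i]\}$ overlap for an arbitrary coupling, or how the budget $\E[Z^\ast]=1-s$ is met while still extracting $\E[Y_iZ^\ast]\ge(1-s)R_{[\alpha_i,\alpha_i+\beta_i]}(\mu_i)$ for every $i$ simultaneously; you yourself flag this as the hardest step and offer only "natural routes." The missing observation is that $(\star)$ is nothing but the RVaR inequality of \cite{ELW18} (Theorem 1 there, displayed as \eqref{eq:RVaR18} in this paper) applied to $-X_1,\dots,-X_n$: that inequality gives $R_{[1-\sum_i\alpha_i-u,\,1-\sum_i\alpha_i]}(\sum_i(-X_i))\le\sum_i R_{[1-\alpha_i-\beta_i,\,1-\alpha_i]}(-X_i)$ for $\bigvee_i\beta_i\le u\le 1-\sum_i\alpha_i$, which upon negation and using monotonicity of the quantile function yields $R_{[\sum_i\alpha_i+t,\,1]}(S)\ge\sum_i R_{[\alpha_i,\alpha_i+\beta_i]}(X_i)$, i.e.\ exactly $(\star)$ with $s=\sum_i\alpha_i+t$. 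Either invoke that result or supply a complete construction; as written the proof is incomplete at its central step. (Two smaller points: the tail marginals $\mu_i^{r+}$ need not lie in $\M_1$ even for $\mu_i\in\M$, so your algebra with $\E[Y_i]$ requires a separate truncation/limiting argument as in the paper, and the claim that the right-hand side is "trivially $+\infty$" should be checked against the possibly negative coefficient $1-\frac{1-r-\beta_i}{s}$.)
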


\begin{proof}  
First, we suppose that $\boldsymbol{\mu}\in \mathcal M_1^n$ and
$r>0$. We begin by showing an auxiliary inequality \eqref{eq:boundary-ineq}. Let $Y_i\in L^1$, $i\in[n]$,
$Y=\sum_{i=1}^n Y_i$, and let $a_i\ge0$, $b_i>0$, $C\in(0,1]$. Put
$
A=\sum_{i=1}^n a_i$ and $b_*=\bigvee_{i=1}^n b_i.
$
We first assume the strict conditions
$$
A<C<1,\qquad A+b_*<1.
$$
Let $t=C-A$. Then $0<t<1-A$ and $b_*\vee t<1-A$. By Theorem 1 of
\cite{ELW18}, for $b_*\le u\le 1-A$, we have
$$
R_{[1-A-u,1-A]}(Y)
\le
\sum_{i=1}^n R_{[1-a_i-b_i,1-a_i]}(Y_i).
$$
Moreover, for $b_*\vee t<u\le 1-A$,
$$
R_{[1-A-u,1-A-t]}(Y)
\le
R_{[1-A-u,1-A]}(Y)
\le
\sum_{i=1}^n R_{[1-a_i-b_i,1-a_i]}(Y_i).
$$
Replacing $Y_i$ by $-Y_i$, $i\in[n]$, yields
$
R_{[A+t,A+u]}(Y)
\ge
\sum_{i=1}^n R_{[a_i,a_i+b_i]}(Y_i).
$ 
Letting $u=1-A$, we get
$
R_{[C,1]}(Y)
\ge
\sum_{i=1}^n R_{[a_i,a_i+b_i]}(Y_i).
$ 
Using
$$
\mathbb E(Y)
=
(1-C)R_{[C,1]}(Y)+C R_{[0,C]}(Y),
$$
we obtain
$$
C R_{[0,C]}(Y)
\le
\mathbb E(Y)
-
(1-C)\sum_{i=1}^n R_{[a_i,a_i+b_i]}(Y_i).
$$
Since
$$
\mathbb E(Y_i)
=
(1-b_i)R_{[0,a_i]\cup[a_i+b_i,1]}(Y_i)
+
b_iR_{[a_i,a_i+b_i]}(Y_i),
$$
it follows that
\begin{equation}
\label{eq:boundary-ineq}
R_{[0,C]}(Y)
\le
\sum_{i=1}^n
\left[
\frac{1-b_i}{C}
R_{[0,a_i]\cup[a_i+b_i,1]}(Y_i)
+
\left(1-\frac{1-b_i}{C}\right)
R_{[a_i,a_i+b_i]}(Y_i)
\right].
\end{equation}

We next extend \eqref{eq:boundary-ineq} to the boundary cases. We claim
that \eqref{eq:boundary-ineq} remains valid whenever
$$
A\le C\le1,\qquad A+b_*\le1.
$$
Indeed, for $Z\in L^1$, the function
$
H_Z(u):=\int_0^u q_v^-(Z)\,\mathrm d v,\; u\in[0,1],
$
is continuous. Therefore all terms in \eqref{eq:boundary-ineq}, written
in terms of $H_{Y_i}$, are continuous in $C,a_i,b_i$ as long as $b_i>0$,
with the usual interpretation of zero-length intervals through the
corresponding integral expressions. Choosing sequences
$
a_i^{(m)}\to a_i,\; b_i^{(m)}\to b_i,\; C_m\to C
$
such that
$
\sum_{i=1}^n a_i^{(m)}<C_m<1$ and 
$\sum_{i=1}^n a_i^{(m)}+\bigvee_{i=1}^n b_i^{(m)}<1,
$
and applying the strict version of \eqref{eq:boundary-ineq}, we obtain
\eqref{eq:boundary-ineq} by letting $m\to\infty$.

Now let $X_i\sim\mu_i^{r+}$, $i\in[n]$, be arbitrarily coupled and put
$S=\sum_{i=1}^n X_i$. Apply the endpoint-extended version of
\eqref{eq:boundary-ineq} with
$$
C=\frac{s}{1-r},\qquad
a_i=\frac{\alpha_i}{1-r},\qquad
b_i=\frac{\beta_i}{1-r},\quad i\in[n].
$$
The assumptions of the theorem imply
$$
\sum_{i=1}^n a_i\le C\le1,\qquad
\sum_{i=1}^n a_i+\bigvee_{i=1}^n b_i\le1.
$$
Hence
\begin{align*}
R_{\left[0,\frac{s}{1-r}\right]}(S)
\le
\sum_{i=1}^n
\Bigg[
&\frac{1-r-\beta_i}{s}
R_{\left[0,\frac{\alpha_i}{1-r}\right]
\cup
\left[\frac{\alpha_i+\beta_i}{1-r},1\right]}(X_i)\\
&+
\left(1-\frac{1-r-\beta_i}{s}\right)
R_{\left[\frac{\alpha_i}{1-r},\frac{\alpha_i+\beta_i}{1-r}\right]}(X_i)
\Bigg].
\end{align*}
Since $X_i\sim\mu_i^{r+}$, for every finite union of intervals
$I\subseteq[0,1]$, we have 
$
R_I(X_i)=R_{r+(1-r)I}(\mu_i).
$ 
Therefore,
\begin{align*}
R_{\left[0,\frac{s}{1-r}\right]}(S)
\le
\sum_{i=1}^n
\Bigg[
&\frac{1-r-\beta_i}{s}
R_{[r,r+\alpha_i]\cup[r+\alpha_i+\beta_i,1]}(\mu_i)\\
&+
\left(1-\frac{1-r-\beta_i}{s}\right)
R_{[r+\alpha_i,r+\alpha_i+\beta_i]}(\mu_i)
\Bigg].
\end{align*}
Taking the supremum over all couplings of $X_i\sim\mu_i^{r+}$ and using
Lemma \ref{Le:transform}, we obtain
\begin{align*}
\sup_{\nu\in\Lambda(\boldsymbol{\mu})}R_{[r,r+s]}(\nu)
\le
\sum_{i=1}^n
\Bigg[
&\frac{1-r-\beta_i}{s}
R_{[r,r+\alpha_i]\cup[r+\alpha_i+\beta_i,1]}(\mu_i)\\
&+
\left(1-\frac{1-r-\beta_i}{s}\right)
R_{[r+\alpha_i,r+\alpha_i+\beta_i]}(\mu_i)
\Bigg].
\end{align*}
This proves \eqref{eqn: GeneralisedNewRVaR} for
$\boldsymbol{\mu}\in\mathcal M_1^n$ and $r>0$.

Next, let $\boldsymbol{\mu}\in\mathcal M^n$ and $r>0$. Let
$X_i\sim\mu_i^{r+}$, $i\in[n]$, be arbitrarily coupled and set
$S=\sum_{i=1}^n X_i$. Since $r>0$, each $X_i$ is bounded from below.
For $m\ge1$, define
$$
X_i^{(m)}:=X_i\wedge m,\qquad i\in[n],
\qquad
S^{(m)}:=\sum_{i=1}^n X_i^{(m)}.
$$
Then $X_i^{(m)}\in L^1$ and $X_i^{(m)}\uparrow X_i$ as $m\to\infty$.
Applying the endpoint-extended auxiliary inequality \eqref{eq:boundary-ineq} to
$X_i^{(m)}$, $i\in[n]$, with
$
C=\frac{s}{1-r},\;
a_i=\frac{\alpha_i}{1-r},\;
b_i=\frac{\beta_i}{1-r},
$
gives
\begin{align*}
R_{\left[0,\frac{s}{1-r}\right]}(S^{(m)})
\le
\sum_{i=1}^n
\Bigg[
&\frac{1-r-\beta_i}{s}
R_{\left[0,\frac{\alpha_i}{1-r}\right]
\cup
\left[\frac{\alpha_i+\beta_i}{1-r},1\right]}(X_i^{(m)})\\
&+
\left(1-\frac{1-r-\beta_i}{s}\right)
R_{\left[\frac{\alpha_i}{1-r},\frac{\alpha_i+\beta_i}{1-r}\right]}(X_i^{(m)})
\Bigg].
\end{align*}
Since $X_i^{(m)}\uparrow X_i$ and $S^{(m)}\uparrow S$, the integrated
quantiles in the above display converge monotonically to the corresponding
integrated quantiles of $X_i$ and $S$. 
For terms with nonnegative coefficients, monotone convergence applies in
the extended sense. If
$
1-\frac{1-r-\beta_i}{s}<0,
$
then $\beta_i<1-r-s$ and hence
$
\alpha_i+\beta_i<1-r.
$
Therefore
$
(\alpha_i+\beta_i)/(1-r)<1.
$
Since $X_i$ is bounded from below, the corresponding average quantile
$
R_{[\alpha_i/(1-r),(\alpha_i+\beta_i)/(1-r)]}(X_i)
$
is finite. Hence the term with the negative coefficient converges in the
ordinary finite sense, and no undefined expression of the form
$\infty-\infty$ appears. 
Letting $m\to\infty$, we obtain the
same inequality for the arbitrary coupling of $X_i\sim\mu_i^{r+}$. Taking
the supremum over all such couplings and using Lemma \ref{Le:transform},
we obtain \eqref{eqn: GeneralisedNewRVaR} for
$\boldsymbol{\mu}\in\mathcal M^n$ and $r>0$.

Finally, let $r=0$ and $\boldsymbol{\mu}\in\mathcal M_1^n$. Apply the
endpoint-extended version of \eqref{eq:boundary-ineq} directly with
$
C=s,\; a_i=\alpha_i,\; b_i=\beta_i,\; i\in[n].
$
This gives \eqref{eqn: GeneralisedNewRVaR} for $r=0$, including the case
$s=1$. This completes the proof.
\end{proof}


Note that in \eqref{eqn: GeneralisedNewRVaR}, it is possible that some $I$ for $R_I$ has a length of zero. The only possible case is $\alpha_i=0$ and $r+\alpha_i+\beta_i=1$, implying the length of $[r, r+\alpha_i] \cup [r+ \alpha_i+\beta_i, 1]$ is zero and $\frac{1-r-\beta_i}{s}=0$. Hence, the value of
$ \frac{1-r-\beta_i}{s} R_{[r, r+\alpha_i] \cup [r+ \alpha_i+\beta_i, 1]} \left(\mu_i \right)$ is understood as zero.

In Theorem 1 of \cite{ELW18}, it shows that for  $\alpha_1, \cdots, \alpha_n\geq 0$ and $\beta_1,\cdots, \beta_n>0$ satisfying $\sum_{i=1}^n \alpha_i + \bigvee_{i=1}^n \beta_i<1$, $\boldsymbol{\mu} \in \M^n$ and $\nu\in \Lambda (\boldsymbol \mu)$, we have 
    \begin{equation}\label{eq:RVaR18}
        R_{[r, r + s]}(\nu) 
        \leq 
        \sum_{i=1}^n R_{[1-\alpha_i-\beta_i, 1-\alpha_i]} \left(\mu_i\right),
    \end{equation}
    where $r+s=1-\sum_{i=1}^n \alpha_i$ and $s=\bigvee_{i=1}^n \beta_i$.  
Clearly, the individual and aggregate risk measures in \eqref{eq:RVaR18} are all $\RVaR$. However, the individual risk measures in the new $\RVaR$ inequality in \eqref{eqn: GeneralisedNewRVaR} are more complicated,  involving the linear combinations of $R_I$ with $I$ being a union of two intervals. The aggregate risk measure still has the form of $\RVaR$.  This new $\RVaR$ inequality helps establish new risk aggregation bounds and sharpness conditions. It is also useful to  investigate risk sharing for different distortion risk measures with more complex distortion functions than that of \cite{ELW18}. 

Further, by setting specific values (e.g., $1-r-\beta_i=s$) for the bound in Theorem \ref{thm: newRVaRinequality}, we arrive at a simplified upper bound displayed in the following corollary. 
\begin{corollary}\label{cor:simple}
    \label{coro: reducedRVaR} Let  $0<r < r+s \leq 1$ and $\alpha_1,\cdots,\alpha_n \in (0, 1-r)$ with $\sum_{i=1}^n \alpha_i = s$.
    Then for $\boldsymbol{\mu} \in \M^n$ and $\nu\in \Lambda (\boldsymbol \mu)$, we have 
    \begin{equation}
   \label{eq:newineq}
        R_{[r, r+s]}(\nu) \leq \sum_{i=1}^n R_{[r, r+\alpha_i] \cup [1-s+\alpha_i, 1]}(\mu_i).
    \end{equation}
    Moreover, \eqref{eq:newineq}  holds for $r=0$ if $\boldsymbol{\mu} \in \M_1^n$.
\end{corollary}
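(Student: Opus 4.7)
The plan is to derive this simplified bound as a direct specialization of Theorem \ref{thm: newRVaRinequality}, by making a specific choice of the parameters $\beta_i$ that forces the coefficient on the second term to vanish. Concretely, I would set $\beta_i = 1-r-s$ uniformly across $i \in [n]$, so that $(1-r-\beta_i)/s = 1$ and the weight $1-(1-r-\beta_i)/s$ on the $R_{[r+\alpha_i,\, r+\alpha_i+\beta_i]}(\mu_i)$ component becomes zero. With this choice, the first interval simplifies to $[r, r+\alpha_i] \cup [r+\alpha_i+\beta_i, 1] = [r, r+\alpha_i] \cup [1-s+\alpha_i, 1]$, which is precisely the set appearing in \eqref{eq:newineq}.

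Before invoking the theorem I would verify its hypotheses. The condition $\sum_{i=1}^n \alpha_i \leq s$ holds as equality by assumption, and $\sum_{i=1}^n \alpha_i + \bigvee_{i=1}^n \beta_i = s + (1-r-s) = 1-r$ also holds as equality. The remaining requirement $\beta_i > 0$ is satisfied whenever $r+s < 1$, and this is the ``main case" which follows by direct substitution into \eqref{eqn: GeneralisedNewRVaR}.

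The one subtlety is the boundary case $r+s = 1$, where the proposed $\beta_i$ equals zero and Theorem \ref{thm: newRVaRinequality} does not apply literally. I would handle this by a limiting argument: apply the already-established inequality at $s' = s - \epsilon$ with $\beta_i = \epsilon > 0$ (shrinking one of the $\alpha_j$ accordingly so that $\sum_i \alpha_i = s'$), and then let $\epsilon \downarrow 0$. Since $R_I(\mu_i)$ depends continuously on the endpoints of the intervals composing $I$ (the averaged quantile is a Lebesgue integral of the quantile function over such intervals), monotone/dominated convergence delivers the desired inequality at $r+s = 1$. Alternatively, one may simply observe that when $r+s = 1$ the claim reduces to subadditivity of $\ES_{1-r}$, a standard fact.

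Finally, the extension to $r = 0$ under $\boldsymbol{\mu} \in \M_1^n$ mirrors the concluding step in the proof of Theorem \ref{thm: newRVaRinequality}: letting $r \downarrow 0$ in the just-established inequality, both sides pass to the limit (the integrability hypothesis ensuring the right-hand side stays finite), yielding \eqref{eq:newineq} at $r = 0$. The only step that needs more than a one-line substitution is the limit at $r+s = 1$, but this is routine given the continuity of the averaged quantile functional in its defining interval.
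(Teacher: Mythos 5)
Your proof is correct and takes essentially the same route as the paper: the corollary follows from Theorem \ref{thm: newRVaRinequality} by the substitution $\beta_i = 1-r-s$, which makes the coefficient $1-\frac{1-r-\beta_i}{s}$ vanish and turns the remaining interval into $[r, r+\alpha_i]\cup[1-s+\alpha_i,1]$. Your additional treatment of the boundary case $r+s=1$ (where $\beta_i=0$ violates the theorem's hypothesis $\beta_i>0$) addresses a point the paper silently glosses over, and either of your fixes --- the limiting argument in $\epsilon$ or the observation that the claim degenerates to subadditivity of $\ES_{1-r}$ --- closes that small gap.
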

\begin{proof}
Let $X_i\sim \mu_i$, $i\in[n]$, and set $S=\sum_{i=1}^n X_i$, so that
$S\sim \nu$. 
First suppose that $r+s<1$. Apply Theorem \ref{thm: newRVaRinequality}
with
$
\beta_i=1-r-s,\; i\in[n].
$
Since $\sum_{i=1}^n\alpha_i=s$, we have
$
\frac{1-r-\beta_i}{s}=1,\; i\in[n],
$
and hence the second term on the right-hand side of
\eqref{eqn: GeneralisedNewRVaR} vanishes. 
Therefore,
$$
R_{[r,r+s]}(S)
\le
\sum_{i=1}^n
R_{[r,r+\alpha_i]\cup[r+\alpha_i+1-r-s,1]}(X_i)
=
\sum_{i=1}^n
R_{[r,r+\alpha_i]\cup[1-s+\alpha_i,1]}(X_i),
$$
which gives \eqref{eq:newineq}. This argument also covers the case
$r=0$ and $s<1$, provided that $\boldsymbol{\mu}\in\mathcal M_1^n$, by the
$r=0$ part of Theorem \ref{thm: newRVaRinequality}.

It remains to consider the endpoint case $r+s=1$. In this case,
$
[r,r+\alpha_i]\cup[1-s+\alpha_i,1]
=
[r,r+\alpha_i]\cup[r+\alpha_i,1]
=
[r,1].
$
Hence \eqref{eq:newineq} reduces to
$
R_{[r,1]}\left(\sum_{i=1}^n X_i\right)
\le
\sum_{i=1}^n R_{[r,1]}(X_i).
$ 
If $r>0$, this is the subadditivity of $\ES_{1-r}$. If $r=0$, then
$s=1$ and $R_{[0,1]}$ is the expectation on $\mathcal M_1$; hence the
inequality holds as equality:
$ 
R_{[0,1]}\left(\sum_{i=1}^n X_i\right)
=
\sum_{i=1}^n R_{[0,1]}(X_i).
$ 
This completes the proof.
\end{proof}
By setting $r=0$, Corollary \ref{coro: reducedRVaR} states that the lower tail of the aggregate risk can be controlled by the summation of the lower tail  and the upper tail  of the individual risks.
Later, we will see that the inequality in Corollary \ref{coro: reducedRVaR} plays a crucial role to solve the risk sharing problem in Section \ref{sec:RiskSharing}.

\section{Extended Convolution Bounds}
\label{sec:aggregation}
In this section, we obtain some bounds for the risk aggregation with dependence uncertainty for $\RVaR$, the difference between two $\RVaR$s, and the difference between two quantiles. The bound for $\RVaR$ is a complement to the results in \cite{BLLW24} by providing a different form and more sharpness results. The bounds for the difference between two $\RVaR$s and the difference between two quantiles are new to the literature. We call those bounds extended convolution bounds.
\subsection{RVaR Aggregation Upper and Lower Bounds}\label{sec:convolution1}

Now we show the first extended convolution bound. 
\begin{theorem}[RVaR aggregation upper bound]\label{th:ra}
	For either $\boldsymbol{\mu} \in \M^n$ and $0< r < r+s \leq 1$, or $\boldsymbol{\mu} \in \M_1^n$ and $0\leq r < r+s \leq 1$, we have 
	\begin{equation}\label{eq:ra}
	\begin{aligned}
	\sup_{\nu\in \Lambda(\boldsymbol{\mu})}R_{[r, r+s]}(\nu) 
	\leq  \inf_{\substack{\boldsymbol{\beta} \in (1-r)\Delta_n \\\beta_0\ge 1-r-s}} \bigg\{& \left(1-\frac{1-r-\beta_0}{s}\right) \sum_{i=1}^n R_{[r+\beta_i, r+\beta_i+\beta_0]}(\mu_i) \\
	&\quad\quad + \frac{1-r-\beta_0}{s} \sum_{i=1}^n R_{[r,r+\beta_i]\cup[r+\beta_i+\beta_0, 1]}(\mu_i)\bigg\}.
	\end{aligned}
	\end{equation} 
	Moreover, \eqref{eq:ra} holds as an equality for $\boldsymbol{\mu} \in \M_1^n$ in the following cases:
	\begin{enumerate}[(i)]
		\item \label{item:incr} 
		each of $\mu_1,\dots,\mu_n$ admits an increasing density beyond its $r$-quantile;
		\item $\sum_{i=1}^n \mu_{i} \left[q_r^+(\mu_{i}), q_{1}^-(\mu_{i})\right) \le  1-r$. 
	\end{enumerate}   
\end{theorem}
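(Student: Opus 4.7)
The bound \eqref{eq:ra} is an immediate consequence of Theorem \ref{thm: newRVaRinequality}. Given any $\boldsymbol{\beta} \in (1-r)\Delta_n$ with $\beta_0 \geq 1-r-s$, I would apply the new $\RVaR$ inequality with its parameters $(\alpha_1,\ldots,\alpha_n;\beta_1,\ldots,\beta_n)$ chosen to be $(\beta_1,\ldots,\beta_n;\beta_0,\ldots,\beta_0)$, i.e.\ all of the theorem's $\beta_i$ collapsed to the single value $\beta_0$. The hypotheses $\sum_{i=1}^n\alpha_i + \bigvee_{i=1}^n\beta_i = 1-r$ and $\sum_{i=1}^n\alpha_i = 1-r-\beta_0 \le s$ hold by construction, yielding exactly the expression inside the infimum in \eqref{eq:ra} for every $\nu\in\Lambda(\boldsymbol{\mu})$. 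Taking $\sup_\nu$ on the left and $\inf_{\boldsymbol{\beta}}$ on the right completes the proof of the inequality. The admissibility of $r=0$ under $\boldsymbol{\mu}\in\M_1^n$ is inherited from the corresponding statement in Theorem \ref{thm: newRVaRinequality}.

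For sharpness in case (i), I plan to use Lemma \ref{Le:transform} to reduce the problem to $\sup_{\nu \in \Lambda(\boldsymbol{\mu}^{r+})}\LES_{s/(1-r)}(\nu)$, in which each marginal $\mu_i^{r+}$ has an increasing density on its support. Via the reflection $Y_i = -X_i$, this quantity equals $-\inf_\nu \ES_{s/(1-r)}(\sum Y_i)$ with marginals of decreasing density, a setting in which sharp ES-aggregation bounds are known (cf.\ \cite{JHW16,WPY13}) and the optimizing coupling is known to combine joint mixability with an extremal rearrangement. The plan is then to (a) rewrite the infimum on the RHS of \eqref{eq:ra} after applying the $r$-tail transform and the reflection, (b) identify it with the established sharp lower bound for ES-aggregation under decreasing-density marginals, (c) use the known extremal coupling to produce an attaining $\nu$ in the reflected setting, and (d) pull it back through the reflection and Lemma \ref{Le:transform} to an element of $\Lambda(\boldsymbol{\mu})$ attaining the RHS of \eqref{eq:ra}.

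For case (ii), the hypothesis $\sum_{i=1}^n \mu_i[q_r^+(\mu_i), q_1^-(\mu_i)) \le 1-r$ leaves enough room for an explicit construction. The idea is to set each $\beta_i$ in \eqref{eq:ra} proportional to the non-degenerate middle mass $\mu_i[q_r^+(\mu_i), q_1^-(\mu_i))$ and $\beta_0 = 1-r-\sum_{i=1}^n\beta_i$ (which then satisfies $\beta_0 \ge 1-r-s$ by the hypothesis), and then construct a coupling in which the middle portions of each $\mu_i^{r+}$ are jointly mixed to a constant value of the sum (producing the $R_{[r+\beta_i,\,r+\beta_i+\beta_0]}$ contribution), while the probability levels in $[r,r+\beta_i] \cup [r+\beta_i+\beta_0, 1]$ are realized at the boundary endpoints $q_r^+(\mu_i)$ and $q_1^-(\mu_i)$ (matching the $R_{[r,r+\beta_i]\cup[r+\beta_i+\beta_0,1]}$ contribution). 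A direct computation of $R_{[r,r+s]}$ on this coupling should produce exactly the RHS of \eqref{eq:ra} evaluated at this $\boldsymbol{\beta}$.

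The main obstacle in both sharpness statements is that the infimum on the RHS of \eqref{eq:ra} is over an $n$-dimensional parameter $\boldsymbol{\beta}$, so one must not only build an attaining coupling but also identify the minimizing $\boldsymbol{\beta}^\ast$ and verify they align. In case (ii) the hypothesis suggests a natural candidate $\boldsymbol{\beta}^\ast$ tied to the marginals' endpoint masses, so the bookkeeping should be tractable; in case (i), however, the minimizer is less transparent, and one must carefully track quantile identities across the $r$-tail transform and the reflection to reconcile the infimum with the ES-side bound from \cite{JHW16}. This reconciliation is the main technical burden of the proof.
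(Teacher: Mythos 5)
Your derivation of \eqref{eq:ra} is exactly the paper's: substitute $\alpha_i \to \beta_i$ and collapse all the $\beta_i$ of Theorem \ref{thm: newRVaRinequality} to the single value $\beta_0$, check that $\sum_i\beta_i+\beta_0=1-r$ and $\sum_i\beta_i=1-r-\beta_0\le s$, then take $\sup_\nu$ and $\inf_{\boldsymbol\beta}$. Your plan for case (i) is also essentially the paper's route (tail transform via Lemma \ref{Le:transform}, reflection to decreasing densities, extremal construction of \cite{JHW16}), with two remarks. First, you do not need to identify the minimizing $\boldsymbol\beta^*$, contrary to the worry in your last paragraph: exhibiting one feasible $\boldsymbol\beta$ and one coupling $\nu^*$ with $R_{[r,r+s]}(\nu^*)$ equal to the right-hand side at that $\boldsymbol\beta$ already forces all three quantities in the chain $\sup_\nu R \ge \mathrm{RHS}(\boldsymbol\beta)\ge \inf_{\boldsymbol\beta}\mathrm{RHS}\ge \sup_\nu R$ to coincide. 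Second, the substantive step your sketch leaves implicit is that what \cite{JHW16} delivers is a coupling attaining the \emph{convolution-bound} form $R_{[s/(1-r),1]}(S)=\sum_i R_{[\beta_i',\beta_i'+\beta_0']}(X_i)$; converting this into the new form on the right of \eqref{eq:ra} requires the identities $\E(S)=\frac{s}{1-r}R_{[0,s/(1-r)]}(S)+\frac{1-r-s}{1-r}R_{[s/(1-r),1]}(S)$ and $\E(X_i)=(1-\beta_0')R_{[0,\beta_i']\cup[\beta_i'+\beta_0',1]}(X_i)+\beta_0'R_{[\beta_i',\beta_i'+\beta_0']}(X_i)$. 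This conversion is the crux of your step (b) and must be made explicit.

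The genuine gap is in case (ii). The hypothesis $\sum_i\mu_i[q_r^+(\mu_i),q_1^-(\mu_i))\le 1-r$ is a mutual-exclusivity condition on the reflected upper tails (it permits the events where $-X_i$, $X_i\sim\mu_i^{r+}$, exceeds its essential infimum to be made disjoint); it says nothing about the shape of the marginals in their interiors, so it does not by itself license joint mixing of middle portions, and the attaining structure one should invoke is Lemma EC.2 of \cite{BLLW24} followed by the same reduction as in case (i). More concretely, your parameter choice is not shown to be feasible: taking $\beta_i$ equal to $\mu_i[q_r^+(\mu_i),q_1^-(\mu_i))$ gives $\beta_0=1-r-\sum_i\beta_i\ge 0$ by the hypothesis, but the infimum in \eqref{eq:ra} is restricted to $\beta_0\ge 1-r-s$, i.e.\ $\sum_i\beta_i\le s$, which the hypothesis does not imply when $s<1-r$. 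Rescaling the $\beta_i$ proportionally to restore feasibility destroys the identification of $[r,r+\beta_i]$ with the interior mass of $\mu_i$ on which your coupling and your evaluation of $R_{[r,r+\beta_i]\cup[r+\beta_i+\beta_0,1]}(\mu_i)$ rely. The feasible attaining $\boldsymbol\beta$ has to come from the cited lemma, where the constraint $\beta_0'\ge 1-s/(1-r)$ is built in; with it in hand, the rest of the argument proceeds as in case (i).
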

\begin{proof}
In light of  \eqref{eqn: GeneralisedNewRVaR} and using the assumption $\sum_{i=1}^n \alpha_i+\bigvee_{i=1}^n\beta_i=1-r$ and the transformations $\beta_i\to \beta_0,~i\in [n]$, $\alpha_i\to\beta_i,~i\in [n]$, we have, for $\boldsymbol{\beta} \in (1-r)\Delta_n$ and $\beta_0\ge 1-r-s$,
$$R_{[r, r+s]}(\nu) 
	\leq \left(1-\frac{1-r-\beta_0}{s}\right) \sum_{i=1}^n R_{[r+\beta_i, r+\beta_i+\beta_0]}(\mu_i) 
	+ \frac{1-r-\beta_0}{s} \sum_{i=1}^n R_{[r,r+\beta_i]\cup[r+\beta_i+\beta_0, 1]}(\mu_i).$$
We obtain \eqref{eq:ra} by taking the supremum on the left-hand side of the above inequality over $\nu\in \Lambda(\boldsymbol{\mu})$ and infimum on the right-hand side of the above inequality over $\boldsymbol{\beta} \in (1-r)\Delta_n$ and $\beta_0\ge 1-r-s$.

If $r+s=1$, then $R_{[r,1]}=\mathrm{ES}_{1-r}$. By the subadditivity
of Expected Shortfall,
$
R_{[r,1]}\left(\sum_{i=1}^n X_i\right)
\le
\sum_{i=1}^n R_{[r,1]}(X_i),
$
and equality is attained by the comonotonic coupling. Hence
$
\sup_{\nu\in\Lambda(\mu)}R_{[r,1]}(\nu)
=
\sum_{i=1}^n R_{[r,1]}(\mu_i).
$ 
Moreover, the right-hand side of \eqref{eq:ra} converges to the same value
by taking $\beta_0\downarrow0$. Therefore \eqref{eq:ra} is an equality in
the endpoint case $r+s=1$. In the rest of the proof, assume $r+s<1$.

Next, we show that \eqref{eq:ra} is an equality for cases (i) and (ii).  If case (i) holds, then each of $\mu_1^{r+},\dots,\mu_n^{r+}$ admits an increasing density. Define (Equation (3.4) in \cite{JHW16})
$$T_{s_n}=h(U)\id_{\{U \leq s_n\}}+d(s_n)\id_{\{U>s_n\}},$$
where $U\sim \mathrm{U}[0,1]$,  $h(x)=\sum_{i=1}^ny_i(x)-(n-1)y(x)$, $d(x)=-\frac{1}{1-x}\int^{y(x)-y_i(x)}_{-y_i(x)}z\mu_i^{r+}(\d z)$ for $x\in (0,1)$, and $s_n=\inf\{x\in (0,1): h(x)\leq d(x)\}$ with $y, y_i$, $i\in [n]$ being continuous functions on $(0,1)$ satisfying 
$$\sum_{i=1}^n\mu_i^{r+}(-\infty,-y_i(x))=x,~~\text{and}~~\mu_i^{r+}[-y_i(x),y(x)-y_i(x))=1-x,~x\in (0,1).$$
In light of Lemma 3.4 of \cite{JHW16} and using the fact that each of $\mu_1^{r+},\dots,\mu_n^{r+}$ admits an increasing density, there exist $X_i\sim \mu_i^{r+}$ such that $T_{s_n}=\sum_{i=1}^n -X_i$. Moreover, using Lemma 3.3 of \cite{JHW16}, we could find $\boldsymbol{\beta}' \in \Delta_n$ with $\beta_0'\ge 1-\frac{s}{1-r}$  such that 
\begin{align*} R_{[0, 1-\frac{s}{1-r}]}(-S)=\sum_{i=1}^n R_{[1-\beta_i' - \beta_0', 1 - \beta_i']} (-X_i)
\end{align*}
with $S=\sum_{i=1}^n X_i$. The detail of construction of such $\boldsymbol{\beta}'$ and $\beta_0'$ is omitted as it is only involving tedious computation using Lemma 3.3 of \cite{JHW16}. One can refer to the proof of Theorem 1 of \cite{BLLW24} for the similar computation.
Note that the above equation can be rewritten as 
\begin{align*} R_{[\frac{s}{1-r},1]}(S)=\sum_{i=1}^n R_{[\beta_i',\beta_i'+\beta_0']} (X_i).
\end{align*}
Using the fact 
$$\E(S)=\frac{s}{1-r}R_{[0,\frac{s}{1-r}]}(S)+\frac{1-r-s}{1-r}R_{[\frac{s}{1-r},1]}(S),$$
we have
\begin{align*}
    R_{[0,\frac{s}{1-r}]}(S)&=\frac{1-r}{s}\left(\E(S)-\frac{1-r-s}{1-r}\sum_{i=1}^n R_{[\beta_i',\beta_i'+\beta_0']} (X_i)\right)\\
    &=\frac{1-r}{s}\sum_{i=1}^n\left((1-\beta_0')R_{[0,\beta_i']\cup [\beta_i'+\beta_0',1]}(X_i)+\left(\beta_0'-\frac{1-r-s}{1-r}\right) R_{[\beta_i',\beta_i'+\beta_0']} (X_i)\right).
\end{align*}
There exist $U'\sim \mathrm{U}[0,1]$ and  $(Y_1,\dots,Y_n)$ such that   $(Y_1,\dots,Y_n)$ is independent of $U'$ and has the same distribution as $(X_1,\dots, X_n)$. Define 
$$X_i'=Y_i\id_{\{U' \geq r\}}+q_{U'}^-(\mu_i)\id_{\{U'<r\}},~i\in[n],~ \text{and}~S'=\sum_{i=1}^n X_i'.$$
Clearly, $X_i'\sim \mu_i$ and $S'=(\sum_{i=1}^nY_i)\id_{\{U' \geq r\}}+(\sum_{i=1}^n q_{U'}^-(\mu_i))\id_{\{U'<r\}}$. Note that $q_u^-(S') = q_{\frac{u-r}{1-r}}^-(S)$ for any $u\in(r,1)
$ and $R_{[r,r+s]}(S')= R_{[0,\frac{s}{1-r}]}(S)$. Letting  $\beta_0=(1-r)\beta_0'$ and $\beta_i=(1-r)\beta_i',~i\in [n]$,  we have   $\boldsymbol{\beta} \in (1-r)\Delta_n$, $\beta_0\ge 1-r-s$ and 
\begin{align*} R_{[r,r+s]}(S')&=\frac{1-r}{s}\sum_{i=1}^n\left(\left(1-\frac{\beta_0}{1-r}\right)R_{[r,r+\beta_i]\cup [r+\beta_i+\beta_0,1]}(X_i')\right.\\
&~~\left.+\left(\frac{\beta_0}{1-r}-\frac{1-r-s}{1-r}\right) R_{[r+\beta_i,r+\beta_i+\beta_0]} (X_i')\right)\\
&=\left(1-\frac{1-r-\beta_0}{s}\right) \sum_{i=1}^n R_{[r+\beta_i, r+\beta_i+\beta_0]}(\mu_i) 
	+ \frac{1-r-\beta_0}{s} \sum_{i=1}^n R_{[r,r+\beta_i]\cup[r+\beta_i+\beta_0, 1]}(\mu_i),
\end{align*}
 implying the inverse inequality of \eqref{eq:ra}. Hence, \eqref{eq:ra} holds as an equality.


If  case (ii) holds, then $\sum_{i=1}^n \mu_{i}^{r+} \left[q_0^+(\mu_{i}^{r+}), q_{1}^-(\mu_{i}^{r+})\right) \le  1$. Hence, there exist $X_i\sim \mu_i^{r+},~i\in [n]$ such that $-X_i,~i\in [n]$  are lower mutually exclusive. Hence, in light of Lemma EC.2 of \cite{BLLW24}, there exist $X_i\sim \mu_i^{r+}$  and $\boldsymbol{\beta}' \in \Delta_n$, $\beta_0'\ge 1-\frac{s}{1-r}$  such that 
\begin{align*} R_{[0, 1-\frac{s}{1-r}]}(-S)=\sum_{i=1}^n R_{[1-\beta_i' - \beta_0', 1 - \beta_i']} (-X_i)
\end{align*}
with $S=\sum_{i=1}^n X_i$. Repeating the procedure in the proof of case (i), we can show that \eqref{eq:ra} holds as an equality.
\end{proof}
As a comparison, Theorem 1 of \cite{BLLW24} gives the following (upper) convolution bound 
\begin{equation}\label{eq:convo1}
\sup_{\nu\in \Lambda(\boldsymbol \mu) }R_{[r, r+s]} (\nu) \le
\inf_{\substack{\boldsymbol \beta\in (1-r)\Delta_n \\\beta_0\ge s}} 
\sum_{i=1}^n R_{[1-\beta_i-\beta_0, 1-\beta_i]}(\mu_i).
\end{equation}
As stated in \cite{BLLW24}, the bound \eqref{eq:convo1} is sharp if each of $\mu_1,\dots,\mu_n$ admits a decreasing density beyond its $r$-quantile. However, \cite{BLLW24} also proposed a numerical example that the bound \eqref{eq:convo1} may not be sharp if each of $\mu_1,\dots,\mu_n$ admits an increasing density beyond its $r$-quantile.  Here, our Theorem \ref{th:ra} complements the result in \cite{BLLW24} by offering a new bound and the sharpness condition (marginals with increasing densities on their upper-tail parts). In practice, although most of the distributions have decreasing densities on their upper-tail parts, some distributions with increasing densities on the upper-tail parts still exist such as generalized Pareto distribution with shape parameter smaller than $-1$,  Beta distribution or triangular distribution. Moreover, the result in Theorem \ref{th:ra} is crucial to find the sharp bound for the best-case scenario when the marginals have decreasing densities; see Theorem \ref{th:ral} below. 

Before proceeding to the lower bound, we first obtain a simplified sharp bound if the marginal distributions are homogeneous. For $\mu\in\M$, let 
$$
c_n(\mu)=\inf\left\{x\in \left(0,\frac{1}{n}\right): \frac{(n-1)q_{(n-1)x}^{-}(\mu)+q_{1-x}^{-}(\mu)}{n}\leq R_{[(n-1)x,1-x]}(\mu)\right\}
$$
with the convention that $\inf\emptyset=\frac{1}{n}$ and define $\Lambda_n(\mu) = \Lambda(\mu, \dots, \mu)$.
For $\mu\in\M$, we denote by $\tilde{\mu}$ the distribution measure of $-X$ with $X\sim \mu$.
\begin{proposition}[RVaR aggregation upper bound: homogeneous marginal]\label{prop:ra_homo}
If $\mu\in \M$ has an increasing density on its support, 
and $0< r < r+s \leq 1$ with $\frac{s}{1-r}\leq nc_n(\tilde{\mu}^{(1-r)-})$,  then we have
\begin{equation}\label{eq:rvar_inc}
\begin{aligned}
\sup_{\nu  \in \Lambda_n(\mu)} R_{[r, r+s]}(\nu)
&= n R_{[r, r+\frac{s}{n}] \cup [1-s+\frac{s}{n}, 1]}(\mu).
\end{aligned}
\end{equation}
\end{proposition}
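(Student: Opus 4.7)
The plan is to derive \eqref{eq:rvar_inc} as a specialization of Theorem \ref{th:ra} to the homogeneous setting. Since $\mu$ has an increasing density on its entire support, it in particular admits an increasing density beyond its $r$-quantile, so with $\mu_i = \mu$ for every $i \in [n]$, case (i) of Theorem \ref{th:ra} applies and upgrades \eqref{eq:ra} to an identity: $\sup_{\nu \in \Lambda_n(\mu)} R_{[r, r+s]}(\nu)$ equals the infimum over the feasible $(\beta_0, \beta_1, \dots, \beta_n)$. To extract the clean form in \eqref{eq:rvar_inc}, I will evaluate the infimand at the boundary symmetric point $\boldsymbol{\beta}^\star = (1-r-s,\, s/n,\dots, s/n)$, which lies in the feasible region since $\beta_0^\star = 1-r-s$ meets the lower constraint $\beta_0 \ge 1-r-s$ and $\beta_0^\star + n\cdot (s/n) = 1-r$. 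At this point the multiplier $1 - (1-r-\beta_0^\star)/s = 0$ annihilates the first sum, whereas the second sum, weighted by $1$, collapses under homogeneity to $n R_{[r,\, r+s/n]\, \cup\, [1-s+s/n,\, 1]}(\mu)$. This immediately yields the inequality ``$\leq$'' in \eqref{eq:rvar_inc}.

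For the matching ``$\geq$'' direction, I need to show that $\boldsymbol{\beta}^\star$ actually minimizes the infimum under the hypothesis $s/(1-r) \leq n c_n(\mu)$. The plan is to invoke the extremal construction used in the proof of Theorem \ref{th:ra}(i): via Lemma \ref{Le:transform}, the problem reduces to constructing $(X_1,\dots,X_n)$ with $X_i \sim \mu^{r+}$ maximizing $\LES_{s/(1-r)}(\sum_i X_i)$, and the construction $T_{s_n}$ taken from \cite{JHW16} provides such a maximizer whose parameters are read off from the threshold $s_n = \inf\{x : h(x) \leq d(x)\}$. In the homogeneous case the auxiliary functions $y_i$ are symmetric in $i$, so $h(x)$ and $d(x)$ reduce to simple combinations of $q^-_{x/n}(\mu^{r+})$, $q^-_{1-x/n}(\mu^{r+})$, and $R_{[x/n,\,1-x/n]}(\mu^{r+})$. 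Translating back via the scaling identity $R_I(\mu^{r+}) = R_{r + (1-r)I}(\mu)$ for $I \subseteq [0,1]$, the defining inequality of $s_n$ at level $x = s/(1-r)$ matches, after rescaling, the one defining $c_n(\mu)$ evaluated at $x/n$. Hence $s/(1-r) \leq n c_n(\mu)$ translates into $s_n \geq s/(1-r)$, which forces the optimal $\boldsymbol{\beta}$ emerging from the construction to be exactly $\boldsymbol{\beta}^\star$, and a direct computation of $R_{[r, r+s]}$ of the extremal aggregate law returns $n R_{[r, r+s/n] \cup [1-s+s/n, 1]}(\mu)$, closing the argument.

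The main obstacle is the translation step between the quantity $c_n(\mu)$ and the feasibility threshold $s_n$ of the \cite{JHW16} construction: I need to check that the quantile-average inequality defining $c_n(\mu)$, namely $\tfrac{1}{n}\bigl((n-1)q^-_{(n-1)x}(\mu) + q^-_{1-x}(\mu)\bigr) \leq R_{[(n-1)x,\,1-x]}(\mu)$ at $x = s/(n(1-r))$, encodes precisely the joint-mixability feasibility of the tail sub-distribution $\mu^{r+}$ restricted to its $[(n-1)s/(n(1-r)),\, 1 - s/(n(1-r))]$-quantile band, which is the technical content required to run the symmetric version of the \cite{JHW16} construction. This rests on the equivalence, guaranteed by the increasing-density hypothesis through Lemma 3.3 in \cite{JHW16}, between the $c_n$-type inequality and the standard joint-mixability criterion that the mean of the restricted distribution lies within the convex hull of its endpoint quantiles.
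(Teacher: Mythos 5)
Your route is genuinely different from the paper's. The paper never touches Theorem~\ref{th:ra} here: it reflects ($X\mapsto -X$), uses Lemma~\ref{Le:transform} to write $\sup_{\nu\in\Lambda_n(\mu)}R_{[r,r+s]}(\nu)=-\inf_{\tilde\nu\in\Lambda_n(\tilde\mu^{(1-r)-})}\ES_{s/(1-r)}(\tilde\nu)$, and then quotes Theorem~5.2 of \cite{BJW14} --- the closed form for $\inf_\nu\ES_p(\nu)$ with homogeneous decreasing-density marginals, valid for $p\le nc_n(\cdot)$ --- finishing with a change of variables. Your ``$\le$'' direction is correct and is in fact just Corollary~\ref{coro: reducedRVaR} with $\alpha_i=s/n$ (equivalently, evaluating the infimand of \eqref{eq:ra} at $\boldsymbol\beta^\star=(1-r-s,s/n,\dots,s/n)$, where the first sum is annihilated); note only that $\beta_0^\star=1-r-s$ leaves $(1-r)\Delta_n$ when $r+s=1$, so that endpoint needs a limiting remark.

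The gap is in the ``$\ge$'' direction. You must produce either (a) a proof that $\boldsymbol\beta^\star$ \emph{globally} minimizes the right-hand side of \eqref{eq:ra} over all of $(1-r)\Delta_n$ (symmetry of the marginals does not by itself reduce the infimum to symmetric $\boldsymbol\beta$), or (b) an explicit coupling whose $R_{[r,r+s]}$ equals $nR_{[r,r+s/n]\cup[1-s+s/n,1]}(\mu)$. You opt for (b) via the $T_{s_n}$ construction, but the pivotal claim --- that $s/(1-r)\le nc_n(\mu)$ forces $s_n\ge s/(1-r)$ and hence forces the construction to output exactly $\boldsymbol\beta^\star$ --- is asserted rather than proven, and the translation is not a mere rescaling. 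The construction lives on the reflected upper-tail law $\widetilde{\mu^{r+}}$ (the law of $-Y$ with $Y\sim\mu^{r+}$); pushing the $c_n$-inequality through the reflection reverses its direction and remaps the quantile levels ($q^-_{(n-1)x}$ and $q^-_{1-x}$ become $q^+_{1-(n-1)x(1-r)}$ and $q^+_{r+x(1-r)}$), so identifying the resulting threshold with $c_n(\mu)$ of the \emph{original} $\mu$ is a nontrivial computation you have not carried out (the paper itself glosses over the same identification when it applies the \cite{BJW14} formula to $\tilde\mu^{(1-r)-}$ under a condition stated in terms of $c_n(\mu)$). A cleaner way to see where $c_n$ enters your framework: restrict to symmetric $\boldsymbol\beta=(1-r-w,\,w/n,\dots,w/n)$ and differentiate the objective in $w$; the derivative is a positive multiple of $R_{[(n-1)x,1-x]}-\tfrac1n\bigl((n-1)q^-_{(n-1)x}+q^-_{1-x}\bigr)$ evaluated for the relevant tail law at $x=w/n$, which is exactly what the $c_n$ threshold controls, giving monotone decrease on the feasible range and attainment at the endpoint $\boldsymbol\beta^\star$ --- but even this settles only the symmetric restriction, not global minimality. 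Unless you are prepared to do the $s_n$-versus-$c_n$ bookkeeping in full, the economical fix is to close the ``$\ge$'' direction the paper's way, by reduction to Theorem~5.2 of \cite{BJW14}, which packages the lower bound and its attainability in one citation.
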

\begin{proof}
 Suppose $\mu_1 = \cdots = \mu_n = \mu$, which has an increasing density on its support. Hence, $\tilde{\mu}$ has a decreasing density. Denote by $\tilde{\nu}$ the distribution measure of $-\sum_{i=1}^n X_i$, where $\sum_{i=1}^n X_i \sim \nu$. For $0 \leq r < r+s \leq 1$, write $t = 1-r-s$. Using Lemma \ref{Le:transform}, we have
	\begin{equation}\label{eq:convert1}
	\begin{aligned}
	\sup_{\nu  \in \Lambda_n(\mu)} R_{[r, r+s]}(\nu)= -\inf_{\tilde{\nu} \in \Lambda_n(\tilde{\mu})} R_{[1-r-s,1-r]}(\tilde{\nu}) = -\inf_{\tilde{\nu} \in \Lambda_n(\tilde{\mu}^{(t+s)-})} \ES_{\frac{s}{t+s}}(\tilde{\nu}).
	\end{aligned}
	\end{equation}
	Note that  $\tilde{\mu}^{(t+s)-}$ has a decreasing density on its support. 
	Hence, by Theorem 5.2 of \cite{BJW14},  we have for any $p\in (0, n c_n(\tilde{\mu}^{(t+s)-})]$,
	\begin{equation}\label{eq:inf_ES}
	\inf_{\tilde{\nu}\in \Lambda_n(\tilde{\mu}^{(t+s)-})} \ES_{p}(\tilde{\nu}) = \frac{n}{p}\int_{0}^{\frac{p}{n}} \left( (n-1)q_{(n-1)u}^- (\tilde{\mu}^{(t+s)-}) + q_{1-u}^- (\tilde{\mu}^{(t+s)-}) \right) \d u.
	\end{equation}
	Using \eqref{eq:inf_ES}, for $\frac{s}{t+s}\leq n c_n(\tilde{\mu}^{(t+s)-}) = n c_n(\tilde{\mu}^{(1-r)-})$, we have
	\begin{equation}
	\label{eq:convert2}
	\begin{aligned}
	-\inf_{\tilde{\nu} \in \Lambda_n(\tilde{\mu}^{(t+s)-})} \ES_{\frac{s}{t+s}}(\tilde{\nu}) 
	&= -\frac{n(t+s)}{s}\int_{0}^{\frac{s}{n(t+s)}}\left( (n-1)q_{(n-1)u}^{-} (\tilde{\mu}^{(t+s)-}) + q_{1-u}^{-} (\tilde{\mu}^{(t+s)-}) \right) \d u \\
	&= -\frac{n(t+s)}{s}\int_{0}^{\frac{s}{n(t+s)}}\left( (n-1)q_{(n-1)(t+s)u}^{-} (\tilde{\mu}) + q_{(1-u)(t+s)}^{-} (\tilde{\mu}) \right) \d u\\
	&= -\frac{n(t+s)}{s}\left( \int_{0}^{\frac{(n-1)}{n}s} \frac{1}{t+s} q_{v}^{-} (\tilde{\mu}) \d v + \int_{t+s-\frac{s}{n}}^{t+s} \frac{1}{t+s} q_{v}^{-} (\tilde{\mu}) \d v \right) \\
	&= -\frac{n}{s}\left( \int_{0}^{\frac{(n-1)}{n}s}  q_{v}^{-} (\tilde{\mu}) \d v + \int_{t+s-\frac{s}{n}}^{t+s} q_{v}^{-} (\tilde{\mu}) \d v \right) \\
	&= \frac{n}{s}\left( \int_{1-t-s}^{1-t-s+\frac{s}{n}} q_{v}^{-} (\mu)  \d v + \int_{1-\frac{(n-1)}{n}s}^{1}  q_{v}^{-} (\mu) \d v \right)\\
	&= n R_{[r, r+\frac{s}{n}] \cup [1-s+\frac{s}{n}, 1]}(\mu).
	\end{aligned}
	\end{equation}
	Combining \eqref{eq:convert1} and \eqref{eq:convert2}, we have \eqref{eq:rvar_inc}.
\end{proof}

Based on the result in Theorem \ref{th:ra},  we immediately obtain a  lower bound for the risk aggregation of $\RVaR$ with fixed marginal distributions but unknown dependence structure. 
\begin{theorem}[$\RVaR$ aggregation lower bound]\label{th:ral}
	For either $\boldsymbol{\mu} \in \M^n$ and $0\leq r < r+s<1$, or $\boldsymbol{\mu} \in \M_1^n$ and $0\leq r<r+s\leq 1$, we have 
	\begin{equation}\label{eq:ral}
	\begin{aligned}
	\inf_{\nu\in \Lambda(\boldsymbol{\mu})}R_{[r, r+s]}(\nu) 
	\geq  \sup_{\substack{\boldsymbol{\beta} \in (r+s)\Delta_n \\\beta_0\ge r}} \bigg\{& \left(1-\frac{r+s-\beta_0}{s}\right) \sum_{i=1}^n R_{[r+s-\beta_i-\beta_0, r+s-\beta_i]}(\mu_i) \\
	&+ \frac{r+s-\beta_0}{s} \sum_{i=1}^n R_{[0,r+s-\beta_i-\beta_0] \cup [r+s-\beta_i, r+s]}(\mu_i)\bigg\}.
	\end{aligned}
	\end{equation} 
	Moreover, \eqref{eq:ral} holds as an equality for $\boldsymbol{\mu} \in \M_1^n$ in the following cases:
	\begin{enumerate}[(i)]
		\item \label{item:incr} 
		each of $\mu_1,\dots,\mu_n$ admits a decreasing density below its $(r+s)$-quantile;
		\item $\sum_{i=1}^n \mu_{i} \left(q_0^+(\mu_{i}), q_{r+s}^-(\mu_{i})\right] \le r+s$. 
	\end{enumerate}   
\end{theorem}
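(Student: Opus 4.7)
The plan is to derive Theorem \ref{th:ral} from Theorem \ref{th:ra} by a reflection (sign-flip) argument. For any $X$ with distribution $\mu$, the random variable $-X$ has distribution $\tilde\mu$ satisfying $q_t^-(\tilde\mu)=-q_{1-t}^+(\mu)$ for almost every $t\in(0,1)$. A change of variable $u=1-t$ in the defining integral of $R_I$ then yields the identity $R_I(\tilde\mu)=-R_{1-I}(\mu)$, where $1-I:=\{1-x:x\in I\}$, and this extends to any $I$ that is a finite disjoint union of intervals. Applied to the aggregate, if $\nu\in\Lambda(\boldsymbol\mu)$ and $\tilde\nu$ is the law of $-\sum_i X_i$, then $\tilde\nu\in\Lambda(\tilde{\boldsymbol\mu})$ and the correspondence $\nu\leftrightarrow\tilde\nu$ is bijective. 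In particular,
\[
\inf_{\nu\in\Lambda(\boldsymbol\mu)} R_{[r,r+s]}(\nu) \;=\; -\sup_{\tilde\nu\in\Lambda(\tilde{\boldsymbol\mu})} R_{[1-r-s,\,1-r]}(\tilde\nu).
\]

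Next, I apply Theorem \ref{th:ra} to the right-hand side with parameters $r'=1-r-s$ and $s'=s$, so that $r'+s'=1-r$ and $1-r'=r+s$, $1-r'-s'=r$. The admissible index set $\boldsymbol\beta\in(1-r')\Delta_n$ with $\beta_0\ge 1-r'-s'$ becomes exactly $\boldsymbol\beta\in(r+s)\Delta_n$ with $\beta_0\ge r$, which matches the constraint in \eqref{eq:ral}. Each individual term on the right-hand side of \eqref{eq:ra} (evaluated at $\tilde\mu_i$) must then be pushed back through the identity $R_I(\tilde\mu_i)=-R_{1-I}(\mu_i)$. A direct calculation gives $1-[r'+\beta_i,r'+\beta_i+\beta_0]=[r+s-\beta_i-\beta_0,r+s-\beta_i]$ and $1-([r',r'+\beta_i]\cup[r'+\beta_i+\beta_0,1])=[0,r+s-\beta_i-\beta_0]\cup[r+s-\beta_i,r+s]$. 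Substituting these and negating yields exactly the lower bound claimed in \eqref{eq:ral}; the check that the joint assumptions on $(\boldsymbol\mu,r,s)$ in Theorem \ref{th:ral} translate to the admissible assumptions on $(\tilde{\boldsymbol\mu},r',s')$ in Theorem \ref{th:ra} is straightforward (e.g., $r+s<1$ corresponds to $r'>0$ in the $\M^n$ case).

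For the sharpness part, I would transport conditions (i) and (ii) of Theorem \ref{th:ra} through the same reflection. Case (i) says each $\tilde\mu_i$ has an increasing density beyond its $r'$-quantile, which is equivalent to saying each $\mu_i$ has a decreasing density below its $(r+s)$-quantile; this is exactly condition (i) of Theorem \ref{th:ral}. For case (ii), one computes $\tilde\mu_i[q_{r'}^+(\tilde\mu_i),q_1^-(\tilde\mu_i)) = \mu_i(q_0^+(\mu_i),q_{r+s}^-(\mu_i)]$, so the inequality $\sum_i\tilde\mu_i[q_{r'}^+(\tilde\mu_i),q_1^-(\tilde\mu_i))\le 1-r'$ becomes $\sum_i\mu_i(q_0^+(\mu_i),q_{r+s}^-(\mu_i)]\le r+s$, exactly condition (ii) here.

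The only step that needs care is the quantile identity $R_I(\tilde\mu)=-R_{1-I}(\mu)$ on unions of intervals, because $q_t^-$ and $q_t^+$ differ at at most countably many points; but since this discrepancy is Lebesgue-null it does not affect the integral. Once this identity and the parameter dictionary $r'=1-r-s$, $s'=s$, $\beta_i\leftrightarrow\beta_i$, $\beta_0\leftrightarrow\beta_0$ are in hand, the proof reduces to mechanical substitution, and no new combinatorial construction of a worst-case coupling is needed (it is inherited from Theorem \ref{th:ra}).
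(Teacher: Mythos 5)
Your proposal is correct and is essentially the paper's own proof: the paper likewise reduces Theorem \ref{th:ral} to Theorem \ref{th:ra} via the reflection $X_i \mapsto -X_i$ and the identity $\inf_{\nu}R_{[r,r+s]}(\nu)=-\sup_{\tilde\nu}R_{[1-r-s,1-r]}(\tilde\nu)$, stating the rest as immediate. Your write-up simply makes explicit the interval-reflection identity, the parameter dictionary, and the translation of the sharpness conditions, all of which check out.
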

\begin{proof}
    Denote by $\tilde{\mu}_i$ the distribution measure of $-X_i$, where $X_i\sim \mu_i$, and  by $\tilde{\nu}$ the distribution measure of $-\sum_{i=1}^n X_i$, where $\sum_{i=1}^n X_i \sim \nu$. Then we have
    $$\inf_{\nu\in \Lambda(\boldsymbol{\mu})}R_{[r, r+s]}(\nu) =-\sup_{\tilde{\nu}\in \Lambda(\tilde{\boldsymbol{\mu}})}R_{[1-r-s, 1-r]}(\tilde{\nu}).$$
    Applying Theorem \ref{th:ra}, we immediately establish the claim.
\end{proof}
It is worth mentioning that Theorem A.1 of \cite{BLLW24} gives the following (lower) convolution bound
\begin{equation}\label{eq:convo11}
\inf_{\nu\in \Lambda(\boldsymbol \mu) }R_{[r, r+s]} (\nu) \geq
\sup_{\substack{\boldsymbol \beta\in (r+s)\Delta_n \\\beta_0\ge s}} 
\sum_{i=1}^n R_{[\beta_i,\beta_i+\beta_0]}(\mu_i).
\end{equation}
It is shown in \cite{BLLW24} that \eqref{eq:convo11} is a sharp bound if each marginal distribution admits an increasing density below its ($r+s$)-quantile. However, it may be not  sharp for the case with decreasing densities on the lower-tail parts; see the numerical example in \cite{BLLW24}. Our Theorem \ref{th:ral}  fills in this gap by providing a bound which is sharp for the best case of $\RVaR$ if all marginal distributions admit decreasing densities on their lower-tail parts. This gap is actually significant because many commonly used distributions in finance and risk management have decreasing densities on their lower-tail parts, including exponential, Pareto, and some gamma and chi distributions.  It is reasonable to expect  that the bounds in Theorems \ref{th:ra} and \ref{th:ral} approximate the exact values whenever the marginals are close to satisfying the sharpness conditions. Moreover,  our results in Theorems \ref{th:ra} and \ref{th:ral}  are the building blocks to consider the worst-case value of the difference between two $\RVaR$s in Section \ref{Sec:Robust}.

Applying Proposition \ref{prop:ra_homo}, we immediately obtain a simplified sharp lower bound if the marginal distributions are homogeneous as below. 
\begin{proposition}[RVaR aggregation lower bound: homogeneous marginal]\label{prop:ra_homo1}
	 If  $\mu\in \M$ has a decreasing density on its support, and  $0\leq r < r+s<1$ with $\frac{s}{r+s}\leq nc_n(\mu^{(r+s)-})$,  then we have
	\begin{equation*}
	\begin{aligned}
	\inf_{\nu  \in \Lambda_n(\mu)} R_{[r, r+s]}(\nu)
	&= n R_{[0,\frac{(n-1)s}{n}] \cup [r+\frac{(n-1)s}{n}, r+s]}(\mu).
	\end{aligned}
	\end{equation*}
\end{proposition}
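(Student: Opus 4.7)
The plan is to reduce Proposition \ref{prop:ra_homo1} to Proposition \ref{prop:ra_homo} via the reflection $X \mapsto -X$, exactly in the spirit of the proof of Theorem \ref{th:ral}. Let $\tilde{\mu}$ denote the distribution of $-X$ for $X \sim \mu$. The first step is to recall the elementary duality
$$
\inf_{\nu\in \Lambda_n(\mu)} R_{[r,r+s]}(\nu) = -\sup_{\tilde{\nu}\in\Lambda_n(\tilde\mu)} R_{[1-r-s,\,1-r]}(\tilde\nu),
$$
which holds because quantiles satisfy $q_t^-(\tilde\mu) = -q_{1-t}^+(\mu)$, and the average-quantile functional $R_I$ is insensitive to the countably many points where left and right quantiles differ.

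Second, since $\mu$ has a decreasing density on its support, $\tilde{\mu}$ has an increasing density on its support, so Proposition \ref{prop:ra_homo} becomes applicable. I apply it to the measure $\tilde\mu$ with parameters $r' = 1-r-s$ and $s' = s$; the required condition $s'/(1-r') \leq n c_n(\tilde\mu)$ reads exactly $s/(r+s) \leq n c_n(\tilde\mu)$, which is the standing hypothesis. Proposition \ref{prop:ra_homo} then gives
$$
\sup_{\tilde\nu\in\Lambda_n(\tilde\mu)} R_{[1-r-s,\,1-r]}(\tilde\nu) = n R_{[1-r-s,\, 1-r-\frac{(n-1)s}{n}]\,\cup\,[1-\frac{(n-1)s}{n},\,1]}(\tilde\mu).
$$

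Third, I translate this back to $\mu$ via the identity $R_I(\tilde\mu) = -R_{1-I}(\mu)$, where $1-I := \{1-t : t \in I\}$. With $I = [1-r-s,\,1-r-\tfrac{(n-1)s}{n}] \cup [1-\tfrac{(n-1)s}{n},\,1]$, the set $1-I$ becomes $[0,\,\tfrac{(n-1)s}{n}] \cup [r+\tfrac{(n-1)s}{n},\,r+s]$, which is precisely the index set appearing in the claim. Combining the three steps gives the stated identity. No step is really the main obstacle; the only thing to be careful about is the bookkeeping of the interval endpoints under the reflection and checking that the hypothesis on $c_n(\tilde\mu)$ corresponds correctly to the one in Proposition \ref{prop:ra_homo}.
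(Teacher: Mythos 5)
Your proposal is correct and follows exactly the route the paper intends: the paper states that Proposition \ref{prop:ra_homo1} follows "immediately" from Proposition \ref{prop:ra_homo} via the reflection $X\mapsto -X$, and your three steps (the duality $\inf_{\nu} R_{[r,r+s]}(\nu) = -\sup_{\tilde\nu} R_{[1-r-s,1-r]}(\tilde\nu)$, the application of Proposition \ref{prop:ra_homo} to $\tilde\mu$ with $r'=1-r-s$, $s'=s$, and the identity $R_I(\tilde\mu)=-R_{1-I}(\mu)$) fill in precisely the intended details, with the interval bookkeeping and the condition $s/(r+s)\le nc_n(\tilde\mu)$ checked correctly.
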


    
    
\subsection{Inter-RVaR Difference Aggregation Upper Bounds}\label{Sec:Robust}

For $0\leq r_1<s_1\leq r_2<s_2\leq 1$, the inter-$\RVaR$ difference is defined as 
$$\mathrm{IRD}_{[r_1,s_1],[r_2,s_2]}(\mu)=R_{[r_2,s_2]}(\mu)-R_{[r_1,s_1]}(\mu),$$
which can be viewed as an example of the variability measures introduced in \cite{BFWW22}.
If $r_1=1-s_2=0$ and $s_1=1-r_2$, then $\mathrm{IRD}_{[r_1,s_1],[r_2,s_2]}(\mu)=\mathrm{ES}_{1-r_2}(\mu)-\mathrm{LES}_{1-r_2}(\mu)$, which is the inter-$\ES$ difference introduced by \cite{BFWW22}. If $s_2\downarrow r_2=1-r$ and $r_1\uparrow s_1=r$, then $\mathrm{IRD}_{[r_1,s_1],[r_2,s_2]}$ converges to the inter-quantile difference
 $$\mathrm{IQD}_{r}^+(\mu)=q_{1-r}^+(\mu)-q_{r}^-(\mu),~r\in \left(0, \frac{1}{2}\right],$$
 which was given by \cite{BFWW22}. An alternative definition of inter-quantile difference was introduced by \cite{GLW23} in the study of risk sharing, i.e., 
 $$\mathrm{IQD}_{r}^-(\mu)=q_{1-r}^-(\mu)-q_{r}^+(\mu),~r\in \left[0, \frac{1}{2}\right),$$
 which is the limit of $\mathrm{IRD}_{[r_1,s_1],[r_2,s_2]}(\mu)$ as $r_2\uparrow s_2=1-r$ and $s_1\downarrow r_1=r$.
 Note that those risk measures can be used to evaluate the variability of the financial losses or risk as an alternative to variance.
Before discussing the robust risk aggregation for $\mathrm{IRD}$, we introduce some concepts and obtain a general result on the robust risk aggregation of the difference between two tail risk measures.

For $p\in (0,1)$, we say $\rho:\M\to\R\cup\{\infty\}$ is a \emph{$p$-upper-tail risk measure} if $\rho(\mu_1)=\rho(\mu_2)$ for all $\mu_1, \mu_2\in \M$ satisfying $\mu_1^{p+}=\mu_2^{p+}$; we say $\rho:\M\to\R\cup\{\infty\}$ is a \emph{$p$-lower-tail risk measure} if $\rho(\mu_1)=\rho(\mu_2)$ for all $\mu_1, \mu_2\in \M$ satisfying $\mu_1^{p-}=\mu_2^{p-}$. Note that here the domain of $\rho$ can be restricted to $\M_1$. The $p$-upper-tail risk measure is the so-called $p$-tail risk measure introduced and studied in \cite{LW21}. We say $\rho:\M\to\R\cup\{\infty\}$ satisfies \emph{monotonicity} if $\rho(\mu_1)\leq \rho(\mu_2)$ for all $\mu_1,\mu_2\in\M$ satisfying $q_{t}^-(\mu_1)\leq q_{t}^-(\mu_2)$ for all $t\in (0,1)$.

Clearly, for $0<r<s<1$,  $R_{[r,s]}$ is an $r$-upper-tail risk measure and also an  $s$-lower-tail risk measure; $q_r^+$ is an $r$-upper-tail risk measure and $(r+\epsilon)$-lower-tail risk measure for some $0<\epsilon<1-r$; $q_r^-$ is an $r$-lower-tail risk measure and $(r-\epsilon)$-upper-tail risk measure for some $0<\epsilon<r$. We refer to \cite{LW21} for properties, applications and more examples on $p$-upper-tail risk measures.
\begin{theorem}\label{Thm:TRM} For $0<r\leq s<1$, suppose that $\rho_1:\mathcal M\to\R \cup\{\infty\}$ is an $s$-upper-tail risk measure and $\rho_2:\mathcal M\to\R\cup\{\infty\}$ is an $r$-lower-tail risk measure.  If $\rho_1$ and $\rho_2$ are monotone risk measures,  for  $\boldsymbol{\mu} \in \M^n$, we have
    $$\sup_{\nu\in \Lambda(\boldsymbol{\mu})}(\rho_1(\nu)-\rho_2(\nu))=\sup_{\nu\in \Lambda(\boldsymbol{\mu})}\rho_1(\nu)-\inf_{\nu\in \Lambda(\boldsymbol{\mu})}\rho_2(\nu).$$
\end{theorem}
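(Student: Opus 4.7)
The inequality $\le$ is immediate, as $\rho_1(\nu)-\rho_2(\nu)\le \sup_\nu\rho_1(\nu)-\inf_\nu\rho_2(\nu)$ pointwise. The non-trivial direction rests on the observation that, because $r\le s$, the upper $(1-s)$-tail of the sum (which is all that $\rho_1$ sees) and the lower $r$-tail (which is all that $\rho_2$ sees) live in disjoint ranges of each marginal $\mu_i$; hence one can choose the two optimal dependence structures independently and splice them together on a common probability space.

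Fix $\epsilon>0$. Inspecting the proof of Lemma \ref{Le:transform} (Propositions 1 and A.1 of \cite{BLLW24}, originally Theorem 4.1 of \cite{LW21}) shows that the same argument, which only uses the $p$-tail property and monotonicity, yields
\[
\sup_{\nu\in\Lambda(\boldsymbol{\mu})}\rho_1(\nu)=\sup_{\nu^*\in\Lambda(\boldsymbol{\mu}^{s+})}\tilde{\rho}_1(\nu^*),\qquad
\inf_{\nu\in\Lambda(\boldsymbol{\mu})}\rho_2(\nu)=\inf_{\nu^*\in\Lambda(\boldsymbol{\mu}^{r-})}\tilde{\rho}_2(\nu^*),
\]
where $\tilde{\rho}_1,\tilde{\rho}_2$ are the functionals on $\M$ defined by $\rho_1(\nu)=\tilde{\rho}_1(\nu^{s+})$ and $\rho_2(\nu)=\tilde{\rho}_2(\nu^{r-})$. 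Pick $\nu_1^*\in\Lambda(\boldsymbol{\mu}^{s+})$ with $\tilde{\rho}_1(\nu_1^*)>\sup\rho_1-\epsilon$, realized as $\sum_i Y_i$ for some coupling with $Y_i\sim\mu_i^{s+}$, and similarly $\nu_2^*\in\Lambda(\boldsymbol{\mu}^{r-})$ realized as $\sum_i Z_i$ with $Z_i\sim\mu_i^{r-}$.

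Denote the middle piece of $\mu_i$ by $\mu_i^{\mathrm{mid}}$, i.e.\ the distribution with CDF $\min\{1,\max\{0,(F_i(x)-r)/(s-r)\}\}$, and let $(W_1,\dots,W_n)$ be any coupling with $W_i\sim\mu_i^{\mathrm{mid}}$. Draw $U\sim\mathrm{U}[0,1]$ independent of the three couplings above, set $\Omega_A=\{U>s\}$, $\Omega_C=\{r\le U\le s\}$, $\Omega_B=\{U<r\}$, and define
\[
\tilde X_i \ = \ Y_i\,\id_{\Omega_A}+W_i\,\id_{\Omega_C}+Z_i\,\id_{\Omega_B},\quad i\in[n].
\]
A direct CDF computation gives the decomposition $\mu_i=(1-s)\mu_i^{s+}+(s-r)\mu_i^{\mathrm{mid}}+r\mu_i^{r-}$, so $\tilde X_i\sim\mu_i$ and $\tilde\nu:=\L(\tilde S)\in\Lambda(\boldsymbol{\mu})$ for $\tilde S=\sum_i\tilde X_i$. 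Since almost surely $Y_i\ge q_s^-(\mu_i)$, $q_r^+(\mu_i)\le W_i\le q_s^-(\mu_i)$ and $Z_i\le q_r^+(\mu_i)$, the three conditional laws of $\tilde S$ lie in the disjoint ranges $[c_1,\infty)$, $[c_2,c_1]$, $(-\infty,c_2]$ with $c_1=\sum_i q_s^-(\mu_i)$ and $c_2=\sum_i q_r^+(\mu_i)$. Hence $q_s^-(\tilde S)=c_1$, $q_r^+(\tilde S)=c_2$, $\tilde\nu^{s+}=\L(\tilde S\mid \Omega_A)=\nu_1^*$ and $\tilde\nu^{r-}=\L(\tilde S\mid \Omega_B)=\nu_2^*$. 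The tail property gives $\rho_1(\tilde\nu)=\tilde{\rho}_1(\nu_1^*)>\sup\rho_1-\epsilon$ and $\rho_2(\tilde\nu)=\tilde{\rho}_2(\nu_2^*)<\inf\rho_2+\epsilon$, so $\rho_1(\tilde\nu)-\rho_2(\tilde\nu)>\sup\rho_1-\inf\rho_2-2\epsilon$, and letting $\epsilon\downarrow 0$ concludes.

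The main obstacle is the disjoint-range step: if some $\mu_i$ has positive mass at its $r$- or $s$-quantile, the three ranges of $\tilde S$ may overlap at the boundary points $c_1$ or $c_2$, and the equalities $\tilde\nu^{s+}=\nu_1^*$ and $\tilde\nu^{r-}=\nu_2^*$ then demand a careful accounting of boundary atomic masses (or a small atomless-based perturbation of the partition thresholds). A secondary task is to verify that the tail-reduction identity behind Lemma \ref{Le:transform} extends from $\RVaR$ to an arbitrary upper/lower tail monotone risk measure; this follows by rerunning its proof, which uses only the tail dependence and monotonicity.
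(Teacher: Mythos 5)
Your proposal is correct and follows essentially the same route as the paper: reduce each of $\rho_1,\rho_2$ to its tail generator over the tail marginals, then splice a near-optimal upper-tail coupling and a near-optimal lower-tail coupling with an arbitrary middle block via an independent uniform, using the weak ordering of the three conditional supports of the sum to conclude that the spliced sum has exactly the prescribed $s$-upper and $r$-lower tail distributions. Two remarks: the boundary-atom issue you flag as the "main obstacle" is not actually one (the CDF computation of $\tilde\nu^{s+}$ and $\tilde\nu^{r-}$ goes through verbatim since only weak ordering of the supports is needed), whereas the "secondary task" you defer --- the lower-tail reduction identity $\inf_{\nu}\rho_2(\nu)=\inf_{V_i\sim\mathrm{U}[0,r]}\tilde\rho_2\bigl(\sum_{i=1}^n q_{V_i}^-(\mu_i)\bigr)$ --- is where the paper spends most of its effort, proving it by conditioning on $\{U_S<r\}$ and invoking monotonicity of the generator (the upper-tail counterpart is simply cited from Theorem 3 of \cite{LW21}).
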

\begin{proof}
    Clearly, we have
     $$\sup_{\nu\in \Lambda(\boldsymbol{\mu})}(\rho_1(\nu)-\rho_2(\nu))\leq\sup_{\nu\in \Lambda(\boldsymbol{\mu})}\rho_1(\nu)-\inf_{\nu\in \Lambda(\boldsymbol{\mu})}\rho_2(\nu).$$
     Next, we show the inverse inequality. 
     In light of Theorem 3 of \cite{LW21}, we have 
     $$\sup_{\nu\in \Lambda(\boldsymbol{\mu})}\rho_1(\nu)=\sup_{\nu\in \Lambda(\boldsymbol{\mu}^{s+})}\rho_1^*(\nu)=\sup_{U_i\sim \mathrm{U}[s,1]}\rho_1^*\left(\sum_{i=1}^n q_{U_i}^-(\mu_i)\right),$$
     where $\rho_1^*$ is the generator of $\rho_1$ satisfying $\rho_1(\mu)=\rho_1^*(\mu^{s+})$ for all $\mu\in\M$.
     Next, we show that
     \begin{align}\label{eq:infgenerator}
     \inf_{\nu\in \Lambda(\boldsymbol{\mu})}\rho_2(\nu)=\inf_{\nu\in \Lambda(\boldsymbol{\mu}^{r-})}\rho_2^*(\nu)=\inf_{V_i\sim \mathrm{U}[0,r]}\rho_2^*\left(\sum_{i=1}^n q_{V_i}^-(\mu_i)\right),
     \end{align}
     where  $\rho_2^*$ satisfies $\rho_2(\mu)=\rho_2^*(\mu^{r-})$ for all $\mu\in\M$. For any
$(V_1,\dots,V_n)$ with $V_i\sim \mathrm{U}[0,r]$, $i\in [n]$, there exist $(V_1',\dots,V_n')$ and $U'$ such that $(V_1',\dots,V_n')\overset{d}{=} (V_1,\dots,V_n)$ and $U'\sim \mathrm{U}[0,1]$ is independent of $(V_1',\dots,V_n')$, where $\overset{d}{=}$ means equality in distribution. Let 
$$X_i=q_{V_i'}^{-}(\mu_i)\id_{\{U' \leq r\}}+q_{U'}^{-}(\mu_i)\id_{\{U'>r\}},~i\in [n].$$
Then we have $X_i\sim\mu_i$ and 
$$\sum_{i=1}^n X_i=\left(\sum_{i=1}^n q_{V_i'}^{-}(\mu_i)\right)\id_{\{U' \leq r\}}+\left(\sum_{i=1}^nq_{U'}^{-}(\mu_i)\right)\id_{\{U'>r\}}.$$
Note that $\esssup \sum_{i=1}^n q_{V_i'}^{-}(\mu_i)\leq \sum_{i=1}^nq_{r}^{-}(\mu_i)$. Hence, we have 
$$\rho_2\left(\sum_{i=1}^n X_i\right)=\rho_2^*\left(\sum_{i=1}^n q_{V_i'}^{-}(\mu_i)\right)=\rho_2^*\left(\sum_{i=1}^n q_{V_i}^{-}(\mu_i)\right).$$
Using the arbitrariness of $V_i$, we have 
$$ \inf_{\nu\in \Lambda(\boldsymbol{\mu})}\rho_2(\nu)\leq \inf_{V_i\sim \mathrm{U}[0,r]}\rho_2^*\left(\sum_{i=1}^n q_{V_i}^-(\mu_i)\right).$$
For $X_i\sim \mu_i,~i\in [n]$, let $S=\sum_{i=1}^n X_i$. Let $(V_1,\dots, V_n)\overset{d}{=} (U_{X_1},\dots, U_{X_n})|_{U_S<r}$.
Direct computation shows $\mathbb P(V_i\leq x)=\frac{\mathbb P(U_{X_i}\leq x, U_S<r)}{r}\leq \frac{x}{r}\wedge 1$ for $x\in [0,1]$. Hence, there exist $V_i'\sim \mathrm{U}[0,r],~i\in [n]$ such that $V_i'\leq V_i,~i\in [n]$. Note that the monotonicity of $\rho_2$ implies the monotonicity of $\rho_2^*$.  Consequently, we have
$$\rho_2(S)=\rho_2^*(S|_{U_S<r})=\rho_2^*\left(\sum_{i=1}^n q_{V_i}^-(\mu_i)\right)\geq \rho_2^*\left(\sum_{i=1}^n q_{V_i'}^-(\mu_i)\right),$$
implying
$$ \inf_{\nu\in \Lambda(\boldsymbol{\mu})}\rho_2(\nu)\geq \inf_{V_i\sim \mathrm{U}[0,r]}\rho_2^*\left(\sum_{i=1}^n q_{V_i}^-(\mu_i)\right).$$
We establish the claim \eqref{eq:infgenerator}.

     For any  $(U_1,\dots, U_n)$ and $(V_1,\dots,V_n)$ with $U_i\sim \mathrm{U}[s,1]$ and $V_i\sim \mathrm{U}[0,r]$, $i\in [n]$, there exist  $(U_1',\dots, U_n')$, $(V_1',\dots,V_n')$, and $U'$ such that $(U_1',\dots, U_n', V_1',\dots,V_n')\overset{d}{=}(U_1,\dots, U_n, V_1,\dots,V_n)$ and $U'\sim \mathrm{U}[0,1]$ is independent of $(U_1',\dots, U_n', V_1',\dots,V_n')$.  Let 
$$X_i=q_{U_i'}^{-}(\mu_i)\id_{\{U'>s\}}+q_{U'}^{-}(\mu_i)\id_{\{r<U'\leq s\}}+q_{V_i'}^{-}(\mu_i)\id_{\{U'\leq r\}}.$$ 
Clearly, $X_i\sim \mu_i$ and $$\sum_{i=1}^nX_i=\left(\sum_{i=1}^n q_{U_i'}^{-}(\mu_i)\right)\id_{\{U'>s\}}+\left(\sum_{i=1}^nq_{U'}^{-}(\mu_i)\right)\id_{\{r<U'\leq s\}}+\left(\sum_{i=1}^n q_{V_i'}^{-}(\mu_i)\right)\id_{\{U'\leq r\}}.$$ Note that $\essinf \sum_{i=1}^n q_{U_i'}^{-}(\mu_i)\geq \sum_{i=1}^nq_{s}^{-}(\mu_i)\geq \sum_{i=1}^nq_{r}^{+}(\mu_i)\geq \esssup \sum_{i=1}^n q_{V_i'}^{-}(\mu_i)$.
Hence, \begin{align*}\rho_1\left(\sum_{i=1}^n X_i\right)-\rho_2\left(\sum_{i=1}^n X_i\right)&=\rho_1^*\left(\sum_{i=1}^n q_{U_i'}^{-}(\mu_i)\right)-\rho_2^*\left(\sum_{i=1}^n q_{V_i'}^{-}(\mu_i)\right)\\
&=\rho_1^*\left(\sum_{i=1}^n q_{U_i}^{-}(\mu_i)\right)-\rho_2^*\left(\sum_{i=1}^n q_{V_i}^{-}(\mu_i)\right).
\end{align*}
Using the arbitrariness of $U_i$ and $V_i$, we have
$$\sup_{\nu\in \Lambda(\boldsymbol{\mu})}(\rho_1(\nu)-\rho_2(\nu))\geq \sup_{U_i\sim \mathrm{U}[s,1]}\rho_1^*\left(\sum_{i=1}^n q_{U_i}^-(\mu_i)\right)-\inf_{V_i\sim \mathrm{U}[0,r]}\rho_2^*\left(\sum_{i=1}^n q_{V_i}^-(\mu_i)\right),$$
which shows the inverse inequality. We complete the proof.

\end{proof}

Applying Theorem \ref{Thm:TRM}, we obtain the following results.
\begin{corollary}[IRD aggregation upper bound]\label{prop:RVaR} \begin{enumerate}[(i)]
  \item 
For either $\boldsymbol{\mu} \in \M^n$ and $0<r_1<s_1\leq r_2<s_2 \leq 1$, or $\boldsymbol{\mu} \in \M_1^n$ and $0\leq r_1<s_1\leq r_2<s_2 \leq 1$, we have 
    $$\sup_{\nu\in \Lambda(\boldsymbol{\mu})}\mathrm{IRD}_{[r_1,s_1],[r_2,s_2]}(\nu)=\sup_{\nu\in \Lambda(\boldsymbol{\mu})}R_{[r_2,s_2]}(\nu)-\inf_{\nu\in \Lambda(\boldsymbol{\mu})}R_{[r_1,s_1]}(\nu).$$
 \item For $\boldsymbol{\mu} \in \M^n$ and $0<r\leq s<1$, we have $$\sup_{\nu\in \Lambda(\boldsymbol{\mu})} \left(q_{s}^+(\nu)-q_{r}^-(\nu)\right)=\sup_{\nu\in \Lambda(\boldsymbol{\mu})}q_{s}^+(\nu)-\inf_{\nu\in \Lambda(\boldsymbol{\mu})}q_{r}^-(\nu).$$
 \item For $\boldsymbol{\mu} \in \M^n$ with continuous $q_{\cdot}^-(\mu_1),\dots, q_{\cdot}^-(\mu_n)$ on $(0,1)$ and $0<r<s<1$, we have $$\sup_{\nu\in \Lambda(\boldsymbol{\mu})} \left(q_{s}^-(\nu)-q_{r}^+(\nu)\right)=\sup_{\nu\in \Lambda(\boldsymbol{\mu})} \left(q_{s}^+(\nu)-q_{r}^-(\nu)\right)=\sup_{\nu\in \Lambda(\boldsymbol{\mu})}q_{s}^+(\nu)-\inf_{\nu\in \Lambda(\boldsymbol{\mu})}q_{r}^-(\nu).$$
    \end{enumerate}
\end{corollary}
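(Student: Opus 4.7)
The plan is to derive all three parts from Theorem \ref{Thm:TRM}, with part (iii) requiring one additional construction-level step.

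For part (i), I would apply Theorem \ref{Thm:TRM} with $\rho_1 = R_{[r_2,s_2]}$ and $\rho_2 = R_{[r_1,s_1]}$. The average-quantile functional $R_{[a,b]}$ depends only on left quantiles in $[a,b]$, so it is simultaneously an $a$-upper-tail and a $b$-lower-tail risk measure. Hence $\rho_1$ is $r_2$-upper-tail and $\rho_2$ is $s_1$-lower-tail; taking $s := r_2$ and $r := s_1$, the hypothesis $r\le s$ of Theorem \ref{Thm:TRM} becomes $s_1\le r_2$, which holds by assumption. Monotonicity of $\rho_1$ and $\rho_2$ is immediate from the monotonicity of left quantiles. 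Part (ii) is analogous with $\rho_1 = q_s^+$ and $\rho_2 = q_r^-$: as recorded in the paragraph just after the definitions of $p$-upper/lower-tail risk measures, $q_s^+$ is an $s$-upper-tail measure and $q_r^-$ an $r$-lower-tail measure, and both are monotone.

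For part (iii), the middle equality is exactly part (ii), so only the first equality requires argument. The inequality $\sup_\nu(q_s^-(\nu)-q_r^+(\nu)) \le \sup_\nu(q_s^+(\nu)-q_r^-(\nu))$ is immediate from $q^-\le q^+$. For the reverse, I would reuse the explicit coupling inside the proof of Theorem \ref{Thm:TRM}: pick optimizing sequences $U_i^{(k)}\sim \mathrm{U}[s,1]$ and $V_i^{(k)}\sim \mathrm{U}[0,r]$ for $\sup_\nu q_s^+(\nu)$ and $\inf_\nu q_r^-(\nu)$ respectively, take an independent $U'\sim\mathrm{U}[0,1]$, and assemble
\begin{equation*}
S^{(k)} = \Bigl(\sum_{i=1}^n q_{U_i^{(k)}}^-(\mu_i)\Bigr)\id_{\{U'>s\}} + \Bigl(\sum_{i=1}^n q_{U'}^-(\mu_i)\Bigr)\id_{\{r<U'\le s\}} + \Bigl(\sum_{i=1}^n q_{V_i^{(k)}}^-(\mu_i)\Bigr)\id_{\{U'\le r\}},
\end{equation*}
letting $\nu^{(k)}$ be its law. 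By construction $\nu^{(k)}\in\Lambda(\boldsymbol\mu)$, $q_s^+(\nu^{(k)})\to \sup_\nu q_s^+(\nu)$ and $q_r^-(\nu^{(k)})\to \inf_\nu q_r^-(\nu)$. The key claim is that under the continuity of $q_{\cdot}^-(\mu_i)$ on $(0,1)$, the law $\nu^{(k)}$ has no atom at $\sum_i q_s^-(\mu_i)$ nor at $\sum_i q_r^-(\mu_i)$, so that $q_s^-(\nu^{(k)}) = q_s^+(\nu^{(k)})$ and $q_r^+(\nu^{(k)}) = q_r^-(\nu^{(k)})$. Passing to the limit $k\to\infty$ then yields the reverse inequality.

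The main obstacle is this atomlessness claim. The approach is to split the CDF of $S^{(k)}$ at the two relevant thresholds into three pieces corresponding to $\{U'\le r\}$, $\{r<U'\le s\}$, $\{U'>s\}$, and use the continuity (hence atomlessness) of $u\mapsto \sum_i q_u^-(\mu_i)$ to see that the middle piece distributes a mass of $s-r$ continuously across $[\sum_i q_r^-(\mu_i),\sum_i q_s^-(\mu_i)]$, while the outer two pieces contribute no mass at the endpoints. This pins $\p(S^{(k)}\le \sum_i q_s^-(\mu_i)) = \p(S^{(k)} < \sum_i q_s^-(\mu_i)) = s$ and the analogous identity at $\sum_i q_r^-(\mu_i)$, ruling out atoms at both thresholds. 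Any flat pieces of $u\mapsto \sum_i q_u^-(\mu_i)$ inside $[r,s]$ do not affect the conclusion, since the common threshold value $\sum_i q_s^-(\mu_i)$ remains the correct left/right quantile of $\nu^{(k)}$ at level $s$, and likewise at level $r$.
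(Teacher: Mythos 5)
Parts (i) and (ii) of your proposal are correct and coincide with the paper's proof: both follow by reading off the tail-level parameters of $R_{[r_2,s_2]}$, $R_{[r_1,s_1]}$, $q_s^+$, $q_r^-$ and invoking Theorem \ref{Thm:TRM}.

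Part (iii) contains a genuine gap: the key claim that $q_s^-(\nu^{(k)})=q_s^+(\nu^{(k)})$ for your three-piece coupling is false, and atomlessness is not the relevant condition. The left and right quantiles at level $s$ differ exactly when the cdf of $S^{(k)}$ is \emph{flat at height $s$} over an interval, and your construction manufactures precisely such a flat stretch. Indeed, writing $g(u)=\sum_{i}q_u^-(\mu_i)$ and $c_k=\essinf\sum_i q_{U_i^{(k)}}^-(\mu_i)$, one has $S^{(k)}\le g(s)$ on $\{U'\le s\}$ and $S^{(k)}\ge c_k\ge g(s)$ on $\{U'>s\}$, so $F_{S^{(k)}}(x)=s$ for all $x\in[g(s),c_k)$. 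Hence $q_s^-(\nu^{(k)})\le g(s)=\sum_i q_s^-(\mu_i)$ (the comonotonic value), while $q_s^+(\nu^{(k)})=c_k\to\sup_\nu q_s^+(\nu)$; these differ whenever the worst case exceeds the comonotonic value, which is the generic situation. The same happens at level $r$. Your verification that $\p(S^{(k)}\le g(s))=\p(S^{(k)}<g(s))=s$ is correct but only shows there is no atom at $g(s)$ and that $q_s^-(\nu^{(k)})\le g(s)$; it says nothing about $q_s^+$, which is governed by where $F$ first exceeds $s$, i.e.\ by $c_k$. Continuity of the marginal quantiles cannot remove this flat piece, since it is created by the dependence structure, not by atoms of the marginals. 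Consequently your $\nu^{(k)}$ gives $q_s^-(\nu^{(k)})-q_r^+(\nu^{(k)})$ at most the comonotonic spread, far below the target.

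The paper circumvents this by never asking a single $\nu$ to satisfy $q_s^-=q_s^+$: it applies Theorem \ref{Thm:TRM} directly to $\rho_1=q_s^-$ and $\rho_2=q_r^+$, viewed as an $\frac{r+s}{2}$-upper-tail and an $\frac{r+s}{2}$-lower-tail risk measure respectively (so the coupling cut is made at level $\frac{r+s}{2}$, strictly between $r$ and $s$, where the induced flat piece is harmless), yielding $\sup_\nu(q_s^--q_r^+)=\sup_\nu q_s^--\inf_\nu q_r^+$. It then uses the continuity of $q_\cdot^-(\mu_i)$ together with Lemma 4.4 of \cite{BJW14} to show $t\mapsto\sup_\nu q_t^+(\nu)$ is continuous, whence $\sup_\nu q_s^-(\nu)=\sup_\nu q_s^+(\nu)$ and $\inf_\nu q_r^+(\nu)=\inf_\nu q_r^-(\nu)$; this is where the continuity hypothesis actually enters. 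If you wish to keep a constructive flavor, you would have to move your cut points away from $r$ and $s$ in exactly this manner, and you would still need the continuity-of-the-bound step to pass from $q^\pm$ to $q^\mp$ at the level of the robust values.
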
 
 \begin{proof} 

The claims in (i) and (ii) follow directly from Theorem \ref{Thm:TRM}. We next focus on (iii). Note that $q_s^-$ is an $\frac{r+s}{2}$-upper-tail risk measure and $q_r^+$ is an $\frac{r+s}{2}$-lower-tail risk measure. Moreover, both $q_s^-$ and $q_r^+$ are monotone risk measures. Hence, applying Theorem \ref{Thm:TRM}, we have
$$\sup_{\nu\in \Lambda(\boldsymbol{\mu})} \left(q_{s}^-(\nu)-q_{r}^+(\nu)\right)=\sup_{\nu\in \Lambda(\boldsymbol{\mu})}q_{s}^-(\nu)-\inf_{\nu\in \Lambda(\boldsymbol{\mu})}q_{r}^+(\nu).$$
By (ii), we have
$$\sup_{\nu\in \Lambda(\boldsymbol{\mu})} \left(q_{s}^+(\nu)-q_{r}^-(\nu)\right)=\sup_{\nu\in \Lambda(\boldsymbol{\mu})}q_{s}^+(\nu)-\inf_{\nu\in \Lambda(\boldsymbol{\mu})}q_{r}^-(\nu).$$
Moreover, it follows from Lemma 4.4 of \cite{BJW14} and the continuity of $q_{t}^-(\mu_1),\dots, q_{t}^-(\mu_n)$ on $(0,1)$ that $\sup_{\nu\in \Lambda(\boldsymbol{\mu})}q_{t}^+(\nu)$ is continuous on $(0,1)$, which further implies $\inf_{\nu\in \Lambda(\boldsymbol{\mu})}q_{t}^-(\nu)$ is continuous on $(0,1)$.
Hence, we have
$$\sup_{\nu\in \Lambda(\boldsymbol{\mu})}q_{t}^+(\nu)=\sup_{\nu\in \Lambda(\boldsymbol{\mu})}q_{t}^-(\nu),~\inf_{\nu\in \Lambda(\boldsymbol{\mu})}q_{t}^+(\nu)=\inf_{\nu\in \Lambda(\boldsymbol{\mu})}q_{t}^-(\nu),~t\in (0,1).$$
Consequently,
$$\sup_{\nu\in \Lambda(\boldsymbol{\mu})} \left(q_{s}^-(\nu)-q_{r}^+(\nu)\right)=\sup_{\nu\in \Lambda(\boldsymbol{\mu})} \left(q_{s}^+(\nu)-q_{r}^-(\nu)\right).$$
 This completes the proof.
 \end{proof}

Theorem \ref{Thm:TRM} suggests that robust risk aggregation of the difference between two tail risk measures with dependence uncertainty equals the difference between two robust tail risk measures if the two regions of the tails  do not intersect. This is because  the worst-case dependence structure only concerns the dependence in the tail corner  of the marginals. 
In light of Corollary \ref{prop:RVaR}, in order to find the (sharp) bound of the robust $\mathrm{IRD}$, it suffices to find the (sharp) bound for robust $\RVaR$, which are given in Theorems 1 and A.1 of \cite{BLLW24} and Theorems \ref{th:ra} and \ref{th:ral} of this paper.

Note that (ii) of Corollary \ref{prop:RVaR} shows that for $r\in (0,1)$, the largest possible difference between $q_{r}^+(\nu)$ and $q_{r}^-(\nu)$ has the following expression:  $$\sup_{\nu\in \Lambda(\boldsymbol{\mu})} \left(q_{r}^+(\nu)-q_{r}^-(\nu)\right)=\sup_{\nu\in \Lambda(\boldsymbol{\mu})}q_{r}^+(\nu)-\inf_{\nu\in \Lambda(\boldsymbol{\mu})}q_{r}^-(\nu),$$
which is strictly positive in most cases, even if all $\mu_i$ have decreasing densities.

In light of Theorems \ref{th:ra}-\ref{th:ral} and  Corollary \ref{prop:RVaR} of the current paper and Theorems 1-2 and A.1-A.2 of \cite{BLLW24}, we immediately obtain the bounds for robust IRD and the difference between two quantiles. In what follows, we only present the sharp bounds and the corresponding conditions on the marginal distributions.

Applying  Theorem \ref{th:ral} and Corollary \ref{prop:RVaR} of this paper and Theorems 1 and A.1 of \cite{BLLW24}, we immediately arrive at the following results.
\begin{proposition}\label{Prop:r1}
    Suppose $\boldsymbol{\mu} \in \M_1^n$ and $0\leq r_1<s_1\leq r_2<s_2\leq 1$. 
    \begin{enumerate}[(i)]
    \item If each of $\mu_1,\dots,\mu_n$ admits a decreasing density beyond its $r_2$-quantile and an increasing density below its $s_1$-quantile, then   \begin{align*}\sup_{\nu\in \Lambda(\boldsymbol{\mu})}\mathrm{IRD}_{[r_1,s_1],[r_2,s_2]}(\nu)=\inf_{\substack{\boldsymbol \beta\in (1-r_2)\Delta_n \\\beta_0\ge s_2-r_2}} 
\sum_{i=1}^n R_{[1-\beta_i-\beta_0, 1-\beta_i]}(\mu_i)-\sup_{\substack{\boldsymbol \beta\in s_1\Delta_n \\\beta_0\ge s_1-r_1}} 
\sum_{i=1}^n R_{[\beta_i,\beta_i+\beta_0]}(\mu_i).
    \end{align*}
    \item If each of $\mu_1,\dots,\mu_n$ admits a decreasing density beyond its $r_2$-quantile and  below its $s_1$-quantile respectively, then   \begin{align*}\sup_{\nu\in \Lambda(\boldsymbol{\mu})}\mathrm{IRD}_{[r_1,s_1],[r_2,s_2]}(\nu)&=\inf_{\substack{\boldsymbol \beta\in (1-r_2)\Delta_n \\\beta_0\ge s_2-r_2}} 
\sum_{i=1}^n R_{[1-\beta_i-\beta_0, 1-\beta_i]}(\mu_i)\\
&\quad-\sup_{\substack{\boldsymbol{\beta} \in s_1\Delta_n \\\beta_0\ge r_1}} \bigg\{ \left(1-\frac{s_1-\beta_0}{s_1-r_1}\right) \sum_{i=1}^n R_{[s_1-\beta_i-\beta_0, s_1-\beta_i]}(\mu_i) \\
	&\quad + \frac{s_1-\beta_0}{s_1-r_1} \sum_{i=1}^n R_{[0,s_1-\beta_i-\beta_0]\cup[s_1-\beta_i, s_1]}(\mu_i)\bigg\}.
    \end{align*}
    \end{enumerate}
\end{proposition}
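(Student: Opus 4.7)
The plan is to read the claim as a direct consequence of Corollary \ref{prop:RVaR}(i) once the two one-sided sharp bounds are plugged in. By Corollary \ref{prop:RVaR}(i), which applies because $s_1 \le r_2$, we have
$$
\sup_{\nu\in \Lambda(\boldsymbol{\mu})}\mathrm{IRD}_{[r_1,s_1],[r_2,s_2]}(\nu) = \sup_{\nu\in \Lambda(\boldsymbol{\mu})}R_{[r_2,s_2]}(\nu) - \inf_{\nu\in \Lambda(\boldsymbol{\mu})}R_{[r_1,s_1]}(\nu),
$$
so the problem reduces to evaluating each worst-case $R$-functional on the right separately. Note that the upper-tail optimization only depends on the marginals' behavior beyond their $r_2$-quantiles, and the lower-tail optimization only depends on the marginals' behavior below their $s_1$-quantiles, so the hypotheses in (i) and (ii) are exactly of the right form to be fed into independent sharpness results.

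For the upper tail, the hypothesis common to cases (i) and (ii)---each $\mu_i$ has a decreasing density beyond its $r_2$-quantile---is precisely the sharpness condition of the upper convolution bound \eqref{eq:convo1} of Theorem 1 of \cite{BLLW24}. Applying that theorem with $r = r_2$ and $r+s = s_2$ produces the first sum appearing in both (i) and (ii).

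For the lower tail, the two cases require different sharp bounds but are handled in parallel. In case (i), each $\mu_i$ admits an increasing density below its $s_1$-quantile, which is exactly the sharpness condition of the lower convolution bound \eqref{eq:convo11} from Theorem A.1 of \cite{BLLW24}; apply this with $r = r_1$ and $r+s = s_1$ to get the claimed second sum. In case (ii), each $\mu_i$ admits a decreasing density below its $s_1$-quantile, which is condition (i) for sharpness of our new lower bound, Theorem \ref{th:ral}, applied with the same substitution; this yields the corresponding two-part expression.

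The only substantive step is the decoupling afforded by Corollary \ref{prop:RVaR}(i), which is why the assumption $s_1 \le r_2$ is crucial: it guarantees that the upper-tail worst-case dependence (concentrating mass of the aggregate in its upper tail) and the lower-tail best-case dependence (concentrating it in the lower tail) are compatible and can be realized by a single joint distribution. Once this decoupling is in place, what remains is routine parameter matching: substituting the appropriate $(r,s)$ into each pre-existing sharp convolution bound and verifying that the tail-monotonicity hypotheses of the proposition coincide with the relevant sharpness conditions. No fundamentally new analytical obstacle is expected, since Theorem \ref{Thm:TRM} and Theorem \ref{th:ral} have already absorbed all the technical work.
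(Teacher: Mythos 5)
Your proposal follows exactly the paper's (unwritten) argument: decouple via Corollary \ref{prop:RVaR}(i), then substitute the sharp upper bound of Theorem 1 of \cite{BLLW24} at $[r_2,s_2]$ and, for the lower tail, Theorem A.1 of \cite{BLLW24} in case (i) and Theorem \ref{th:ral} in case (ii), each applied to $R_{[r_1,s_1]}$. One caveat on the ``routine parameter matching'' you defer in case (ii): substituting $r=r_1$ and $r+s=s_1$ into Theorem \ref{th:ral} yields the constraint $\beta_0\ge r_1$, the weight $\frac{s_1-\beta_0}{s_1-r_1}$, and intervals anchored at $s_1$, not the $\beta_0\ge s_1-r_1$, $\frac{r_1+s_1-\beta_0}{s_1}$ and $r_1+s_1$ printed in the proposition, so the displayed formula in (ii) appears to contain a substitution typo rather than indicating any flaw in your method.
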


In light of  Theorem \ref{th:ra} and Corollary \ref{prop:RVaR} of this paper and Theorem  A.1 of \cite{BLLW24}, we obtain the following results.
\begin{proposition}\label{Prop:r2}
    Suppose $\boldsymbol{\mu} \in \M_1^n$ and $0\leq r_1<s_1\leq r_2<s_2\leq 1$. 
    \begin{enumerate}[(i)]
    \item If each of $\mu_1,\dots,\mu_n$ admits an increasing density beyond its $r_2$-quantile and below its $s_1$-quantile, then   \begin{align*}\sup_{\nu\in \Lambda(\boldsymbol{\mu})}\mathrm{IRD}_{[r_1,s_1],[r_2,s_2]}(\nu)&=\inf_{\substack{\boldsymbol{\beta} \in (1-r_2)\Delta_n \\
    \beta_0\ge 1-s_2}} \bigg\{\left(1-\frac{1-r_2-\beta_0}{s_2-r_2}\right) \sum_{i=1}^n R_{[r_2+\beta_i, r_2+\beta_i+\beta_0]}(\mu_i) \\
	&\quad+ \frac{1-r_2-\beta_0}{s_2-r_2} \sum_{i=1}^n R_{[r_2,r_2+\beta_i]\cup[r_2+\beta_i+\beta_0, 1]}(\mu_i)\bigg\}\\
    &\quad-\sup_{\substack{\boldsymbol \beta\in s_1\Delta_n \\\beta_0\ge s_1-r_1}} 
\sum_{i=1}^n R_{[\beta_i,\beta_i+\beta_0]}(\mu_i).
    \end{align*}
    \item If each of $\mu_1,\dots,\mu_n$ admits an increasing density beyond its $r_2$-quantile and  a decreasing density below its $s_1$-quantile, then   \begin{align*}\sup_{\nu\in \Lambda(\boldsymbol{\mu})}\mathrm{IRD}_{[r_1,s_1],[r_2,s_2]}(\nu)&=\inf_{\substack{\boldsymbol{\beta} \in (1-r_2)\Delta_n \\\beta_0\ge 1-s_2}} \bigg\{\left(1-\frac{1-r_2-\beta_0}{s_2-r_2}\right) \sum_{i=1}^n R_{[r_2+\beta_i, r_2+\beta_i+\beta_0]}(\mu_i) \\
	&\quad+ \frac{1-r_2-\beta_0}{s_2-r_2} \sum_{i=1}^n R_{[r_2,r_2+\beta_i]\cup[r_2+\beta_i+\beta_0, 1]}(\mu_i)\bigg\}\\
&\quad-\sup_{\substack{\boldsymbol{\beta} \in s_1\Delta_n \\\beta_0\ge r_1}} \bigg\{ \left(1-\frac{s_1-\beta_0}{s_1-r_1}\right) \sum_{i=1}^n R_{[s_1-\beta_i-\beta_0, s_1-\beta_i]}(\mu_i) \\
	&\quad + \frac{s_1-\beta_0}{s_1-r_1} \sum_{i=1}^n R_{[0,s_1-\beta_i-\beta_0]\cup[s_1-\beta_i, s_1]}(\mu_i)\bigg\}.
    \end{align*}
    \end{enumerate}
\end{proposition}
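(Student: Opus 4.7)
The plan is to combine the decomposition from Corollary \ref{prop:RVaR}(i) with the sharp aggregation bounds of Section \ref{sec:convolution1} and Theorem A.1 of \cite{BLLW24}. Since $0 \leq r_1 < s_1 \leq r_2 < s_2 \leq 1$, Corollary \ref{prop:RVaR}(i) immediately reduces the problem to
\[
\sup_{\nu\in\Lambda(\boldsymbol\mu)}\mathrm{IRD}_{[r_1,s_1],[r_2,s_2]}(\nu)
= \sup_{\nu\in\Lambda(\boldsymbol\mu)}R_{[r_2,s_2]}(\nu)
- \inf_{\nu\in\Lambda(\boldsymbol\mu)}R_{[r_1,s_1]}(\nu),
\]
so it suffices to evaluate each of the two robust $\RVaR$ terms separately under the hypotheses on the upper and lower tails.

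For the supremum term, both parts (i) and (ii) share the same assumption: each $\mu_i$ admits an increasing density beyond its $r_2$-quantile. This is exactly the sharpness condition \eqref{item:incr} of Theorem \ref{th:ra} with $r=r_2$ and $s=s_2-r_2$. Applying that theorem gives the infimum over $\boldsymbol\beta \in (1-r_2)\Delta_n$ with $\beta_0\geq 1-s_2$ expression that appears as the first bracket in both formulae (the replacement of $s$ by $s_2-r_2$ is routine).

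For the infimum term, the two cases diverge according to the lower-tail behaviour of the $\mu_i$'s. In case (i), each $\mu_i$ has an increasing density below its $s_1$-quantile, which is precisely the sharpness condition for the lower convolution bound \eqref{eq:convo11} (Theorem A.1 of \cite{BLLW24}), with $r=r_1$ and $s=s_1-r_1$; applying it yields the supremum over $\boldsymbol\beta\in s_1\Delta_n$ with $\beta_0\geq s_1-r_1$ of $\sum_i R_{[\beta_i,\beta_i+\beta_0]}(\mu_i)$, matching (i). In case (ii), each $\mu_i$ has a decreasing density below its $s_1$-quantile, which is sharpness condition (i) of Theorem \ref{th:ral} with $r=r_1$ and $s=s_1-r_1$; applying it yields the two-term supremum appearing in (ii).

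The argument is essentially a bookkeeping assembly, and the only subtlety I anticipate is verifying that the parameter ranges and normalising constants are transcribed correctly (in particular, the substitution $s\mapsto s_2-r_2$ in Theorem \ref{th:ra} and $s\mapsto s_1-r_1$ in Theorem \ref{th:ral} / Theorem A.1 of \cite{BLLW24}, and the requirement $\boldsymbol\mu\in\M_1^n$ which allows $r_1=0$ as needed). No new dependence construction is required, since the sharpness of the two component bounds together with Corollary \ref{prop:RVaR}(i) is enough: the latter already exhibits a single coupling achieving both the sup of $R_{[r_2,s_2]}$ on the upper tail and the inf of $R_{[r_1,s_1]}$ on the lower tail simultaneously.
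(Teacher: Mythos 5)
Your proposal is correct and is exactly the paper's argument: Proposition \ref{Prop:r2} is stated as an immediate consequence of Corollary \ref{prop:RVaR}(i) for the decomposition, Theorem \ref{th:ra} (sharpness case (i), $r=r_2$, $s=s_2-r_2$) for the supremum term, and Theorem A.1 of \cite{BLLW24} resp.\ Theorem \ref{th:ral} (both with $r=r_1$, $s=s_1-r_1$) for the infimum term in parts (i) and (ii). One remark on the ``subtlety'' you flag: carrying out the substitution $r=r_1$, $s=s_1-r_1$ in Theorem \ref{th:ral} gives the constraint $\beta_0\ge r_1$, the weight $\frac{s_1-\beta_0}{s_1-r_1}$, and intervals with right endpoint $s_1$, whereas the second bracket printed in part (ii) has $\beta_0\ge s_1-r_1$, weight $\frac{r_1+s_1-\beta_0}{s_1}$, and intervals ending at $r_1+s_1$ --- i.e., it reads as though $R_{[r_1,s_1]}$ had been treated as $R_{[r_1,r_1+s_1]}$. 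Your derivation therefore produces the correct expression; the printed formula in part (ii) (and likewise in Proposition \ref{Prop:r1}(ii)) appears to be a transcription slip rather than a gap in your argument.
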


Combining Corollary \ref{prop:RVaR} of this paper with Theorems 2 and A.2 of \cite{BLLW24}, we obtain the following results.
\begin{proposition}\label{Prop:r3}
    For $\boldsymbol{\mu} \in \M^n$ and $0<r\leq s<1$, if each of $\mu_1,\dots,\mu_n$ has a density that is monotone in the same direction beyond its $s$-quantile and also monotone in the same direction  below its $r$-quantile, then we have $$\sup_{\nu\in \Lambda(\boldsymbol{\mu})} \left(q_{s}^+(\nu)-q_{r}^-(\nu)\right)=\inf_{\substack{\boldsymbol \beta\in (1-s)\Delta_n }} 
\sum_{i=1}^n R_{[1-\beta_i-\beta_0, 1-\beta_i]}(\mu_i)-\sup_{\substack{\boldsymbol \beta\in r\Delta_n}} 
\sum_{i=1}^n R_{[\beta_i,\beta_i+\beta_0]}(\mu_i).$$
\end{proposition}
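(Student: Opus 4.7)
The plan is to reduce the robust inter-quantile difference to two independent robust quantile problems via Corollary~\ref{prop:RVaR}(ii), and then to invoke the sharp convolution bounds from \cite{BLLW24} for the two resulting terms separately.

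Concretely, I would first apply Corollary~\ref{prop:RVaR}(ii), which requires no regularity on the marginals beyond $0<r\leq s<1$, to obtain the decoupling
\begin{equation*}
\sup_{\nu\in \Lambda(\boldsymbol{\mu})}\bigl(q_s^+(\nu) - q_r^-(\nu)\bigr)
= \sup_{\nu\in \Lambda(\boldsymbol{\mu})} q_s^+(\nu) \; - \; \inf_{\nu\in \Lambda(\boldsymbol{\mu})} q_r^-(\nu).
\end{equation*}
This is the workhorse: once decoupled, the two suprema/infima are robust (worst- and best-case) quantile problems, each depending only on the upper-tail or lower-tail behavior of the marginals, respectively.

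Next, I would apply Theorem~2 of \cite{BLLW24} to the first term. Under the assumption that each $\mu_i$ admits a density monotone in the same direction beyond its $s$-quantile, the convolution upper bound for $\sup_{\nu} q_s^+(\nu)$ is sharp and equals $\inf_{\boldsymbol{\beta}\in (1-s)\Delta_n}\sum_{i=1}^n R_{[1-\beta_i-\beta_0,\,1-\beta_i]}(\mu_i)$. Symmetrically, Theorem~A.2 of \cite{BLLW24} applied to the second term yields that under the lower-tail monotonicity assumption the convolution lower bound is sharp and equals $\sup_{\boldsymbol{\beta}\in r\Delta_n}\sum_{i=1}^n R_{[\beta_i,\,\beta_i+\beta_0]}(\mu_i)$. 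Subtracting gives exactly the formula in the proposition.

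The main subtlety I foresee is bookkeeping regarding the four possible combinations of monotonicity directions (decreasing/increasing upper tail, decreasing/increasing lower tail). The formula as stated reflects the ``decreasing upper, increasing lower'' pattern, which is the convolution-bound regime of \cite{BLLW24}. For the other regimes one must instead invoke Theorems~\ref{th:ra} and~\ref{th:ral} of the current paper (taken in the limit $s\downarrow 0$ so that $\RVaR$ collapses to a quantile) in place of Theorems~2 and~A.2 of \cite{BLLW24}; the algebraic form of the bound then changes accordingly, but the decoupling step above is unchanged. Thus the only real work beyond quoting existing sharp bounds is justifying the reduction in the first display, which Corollary~\ref{prop:RVaR}(ii) provides directly.
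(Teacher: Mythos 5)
Your proof is correct and follows exactly the paper's route: the paper obtains this proposition by combining Corollary \ref{prop:RVaR}(ii) (the decoupling $\sup_{\nu}(q_s^+(\nu)-q_r^-(\nu))=\sup_{\nu}q_s^+(\nu)-\inf_{\nu}q_r^-(\nu)$) with Theorems 2 and A.2 of \cite{BLLW24}, precisely as you do. The only inaccuracy is your closing worry: for quantile (as opposed to $\RVaR$) aggregation, Theorems 2 and A.2 of \cite{BLLW24} already give sharpness of the stated convolution bounds whenever the marginal densities are monotone in the same direction on the relevant tail, regardless of whether that common direction is increasing or decreasing, which is exactly the hypothesis of the proposition, so no appeal to Theorems \ref{th:ra} and \ref{th:ral} is needed here.
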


The sharp bounds given in Propositions \ref{Prop:r1}-\ref{Prop:r3} are the upper bounds that are valid for all $\boldsymbol{\mu} \in \M_1^n$ and $0\leq r_1<s_1\leq r_2<s_2\leq 1$ or $\boldsymbol{\mu} \in \M^n$ and $0<r_1<s_1\leq r_2<s_2\leq 1$. Note that the sharp bounds in  (ii) of Proposition \ref{Prop:r1} and (i) and (ii) of Proposition \ref{Prop:r2} are based on the results in Theorems \ref{th:ra} and \ref{th:ral}, showing their applications to risk aggregation for other risk measures. It is also worth mentioning that based on Theorem \ref{th:ral}, (ii) of Proposition \ref{Prop:r1} gives a sharp upper bound for $\mathrm{IRD}$ if the marginals have decreasing densities on their upper-tail parts and lower-tail parts respectively. This assumption is valid for many practically used distributions in finance and risk management, such as exponential, Pareto, and some gamma and chi distributions.

The extended convolution bounds developed in this section, particularly the sharp bounds, can be applied directly in risk management, including risk evaluation under dependence uncertainty (conservative regulatory capital calculation; e.g., \cite{EPR13} and \cite{EKP20}),  portfolio optimization under dependence uncertainty (e.g., \cite{PP18}, \cite{CLLW22} and \cite{BLLW24}) and optimal insurance/reinsurance design under dependence uncertainty (e.g.,  \cite{FHLX25}). Other applications in operations research can be seen in the  discussion in Section \ref{sec:intro}.

\section{
Risk Sharing}\label{sec:RiskSharing}

In this section, we consider the 
risk sharing problem among $n \geq 2$ agents, where their preferences are represented by some risk functionals $\rho_i:\X_1\to \R$, $i \in [n]$. For a total risk $X \in \X_1$,  the set of all possible allocations of $X$ is denoted by
$$
\A_n(X) = \left\{(X_1, \cdots, X_n) \in (\X_1)^n: \sum_{i=1}^n X_i = X \right\}.
$$
The  \emph{inf-convolution} of  $\rho_1,\dots,\rho_n$ is defined as
\begin{equation*} 
 \dsquare_{i=1}^n \rho_i (X)  = \inf\left\{\sum_{i=1}^n \rho_i(X_i): (X_1,\cdots,X_n)\in \mathbb A_n(X)  \right\},~~X\in \X_1.
\end{equation*}
In this section, we use risk functionals $R_{I_1},\dots, R_{I_n}$ to represent the agents' preferences, where $I_i = [0, \beta_i] \cup [1-\beta+\beta_i, 1]$ for all $i \in [n]$ with a parameter vector $\bsyb{\beta}: =(\beta_1, \cdots, \beta_n)\in (0, 1)^n$ satisfying $\beta=\sum_{i=1}^n\beta_i$ and $0 < \beta < 1$. 
We aim to find the \emph{optimal allocation} for the inf-convolution, i.e., finding $(X_1^*,\dots,X_n^*)\in \mathbb A_n(X)$ such that
\begin{align}\label{Eq:inf-convolution}
\sum_{i=1}^n R_{I_i
}(X_i^*) =\dsquare_{i=1}^n R_{I_i
}(X). 
\end{align}

The central object in our investigation of risk sharing is the form taken by $R_{[0, \alpha] \cup [\gamma, 1]}$ with $ 0 \leq \alpha < \gamma \leq 1$. 
In the risk sharing context, the motivation for using such preference functionals can be interpreted from the fact that for $\lambda\in (0,1)$,
\begin{equation}\label{eq:dist_risk_meas}
    \lambda \mathbb E(X)+(1-\lambda) R_{I_i}(X)=\int_0^1 q_{1-s}^-(X)\d g_{\lambda,i}(s):=\rho_{g_{\lambda,i}}(X),
\end{equation}
where \begin{align}\label{DF}
g_{\lambda, i}(s)=\lambda s+\frac{1-\lambda}{\beta}(s\wedge(\beta-\beta_i))\id_{[0, 1-\beta_i]}(s)+\frac{1-\lambda}{\beta}(s-1+\beta)\id_{[1-\beta_i,1]}(s)
\end{align}
is a distortion function. Hence, the above risk functionals $\rho_{g_{\lambda,i}}$ can be viewed as either a Yaari's dual utility or a distortion risk measure; see \cite{Y87} and Chapter 4 of \cite{FS16}. Note that the distortion function $g_{\lambda,i}$ has an inverse-S shape, exaggerating the probabilities for both large losses and large gains. If $X$ is the possible random loss in future, then the distortion risk measure defined above represents the decision maker's attitude: risk aversion for large losses and risk-seeking for small losses or gains; 
see \cite{Y87} and \cite{TK92}. The distortion risk measures with general distortion functions are also  popular in insurance pricing, performance evaluation and other applications; see e.g., \cite{Wang96}, \cite{Wang00}, \cite{CM09} and \cite{FS16}. Note that the risk sharing problem for $\lambda \mathbb E(X)+(1-\lambda) R_{I_i}(X)$ is equivalent to the risk sharing problem for $R_{I_i}$ in terms of the optimal risk allocations. This motivates us to consider the risk sharing problem for $R_{I_i}$.

For monetary risk measures\footnote{We say a mapping $\rho:\X \to \R$ is a monetary risk measure if it satisfies cash invariance, i.e., 
$\rho(X+c)=\rho(X)+c$ for $X\in \X$ and $c\in \R$, and monotonicity, i.e., $\rho(X)\le\rho(Y)$ for $X\le Y$; see e.g., Chapter 4 of \cite{FS16}.},
 optimality with respect to the sum (referred to as the sum optimality) is equivalent to Pareto optimality (e.g., Proposition 1 of \cite{ELW18}). Clearly, $R_I$ and $\rho_{g_{\lambda,i}}$ are  monetary risk measures. Further economic interpretations on the inf-convolution can be found in e.g., Chapter 10 of  \cite{R13}.

We next give the definition of comonotonicity and counter-monotonicity for $n$ random variables to interpret the optimal allocations. We say $(X_1,\dots,X_n)$ are \emph{comonotonic} if there exist increasing functions $f_i,~i\in [n]$ such that $X_i=f_i(X_1+\dots+X_n)$ for all $i\in [n]$; we say  $(X_1,\dots,X_n)$ are \emph{counter-monotonic} if $(-X_i, X_j)$ are comonotonic for any $i\neq j$.   We refer to \cite{DDGKV02} and \cite{JKLW22} for an overview on comonotonic risk sharing rules, and \cite{LLW23} for the characterization on counter-monotonicity.

Our result for Problem \eqref{Eq:inf-convolution} is displayed as follows. In what follows, $x_+:=x\vee 0$ for $x\in\R$. 


\begin{theorem}[Risk sharing for the average quantile] 
\label{thm: SumOptimalReducedNewRVaR}
	For  $X \in \X_1$ and  $\bsyb{\beta}\in (0,1)^n$ satisfying $\beta\in (0,1)$, we have
	\begin{equation}
	\label{eqn: ReducedNewRVaR}
	\dsquare_{i=1}^n R_{I_i
    }(X) = R_{[0,  \beta]}(X).
	\end{equation} 
	Moreover, if $X$ is bounded from above, an optimal allocation is given by 
	\begin{align}
X_i =  (X-t)  \id_{A_i}   + \frac{X}n \id_{A^c} + \frac{t}{n-1}\id_{A\setminus A_i},~~i\in [n], \label{eq:allocation}
\end{align}
where $t\geq  (q_1^-(X))_+$, $A = \{U_X \leq \beta \}$ and $(A_1,\dots,A_n)$ is a partition of $A$ satisfying $\p(A_i)=\beta_i$ for all  $i\in [n]$.
\end{theorem}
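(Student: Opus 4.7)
The plan is to prove the two conclusions separately: first the identity $\dsquare_{i=1}^n R_{I_i}(X)=R_{[0,\beta]}(X)$, which requires a matching lower and upper bound, and then the claim that the explicit allocation in \eqref{eq:allocation} attains the infimum.

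For the lower bound on the inf-convolution, I would apply Corollary \ref{coro: reducedRVaR} with $r=0$, $s=\beta$, and $\alpha_i=\beta_i$. Since $X\in\X_1$ forces any allocation to consist of $\X_1$-valued random variables (so the marginals lie in $\M_1^n$, covering the $r=0$ case), the corollary gives, for every $(X_1,\dots,X_n)\in\A_n(X)$,
\begin{equation*}
R_{[0,\beta]}(X) \le \sum_{i=1}^n R_{[0,\beta_i]\cup[1-\beta+\beta_i,1]}(X_i) = \sum_{i=1}^n R_{I_i}(X_i).
\end{equation*}
Taking the infimum over $\A_n(X)$ yields $\dsquare_{i=1}^n R_{I_i}(X) \ge R_{[0,\beta]}(X)$.

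For the matching upper bound, assume $X$ is bounded from above and compute $\sum_{i=1}^n R_{I_i}(X_i)$ for the allocation \eqref{eq:allocation}. The key step is to identify the quantile structure of each $X_i$ in terms of the events $A_i$, $A\setminus A_i$, and $A^c$. Using $t\ge (q_1^-(X))_+$, I would verify: on $A_i$, $X-t\le q_1^-(X)-t\le 0$; on $A\setminus A_i$, $X_i=t/(n-1)\ge 0$; and on $A^c$, $X/n\le q_1^-(X)/n\le t/(n-1)$. This shows $t/(n-1)$ is the maximum of $X_i$ (occupying mass $\beta-\beta_i$), while the values $X-t$ on $A_i$ constitute the bottom $\beta_i$-fraction of $X_i$. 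Consequently,
\begin{equation*}
\beta_i\,\LES_{\beta_i}(X_i)=\int_{A_i}(X-t)\,\d\p,\qquad (\beta-\beta_i)\,\ES_{\beta-\beta_i}(X_i)=(\beta-\beta_i)\cdot\frac{t}{n-1}.
\end{equation*}
Summing over $i\in[n]$ and using $\sum_i\beta_i=\beta$, the $t$-terms cancel:
\begin{equation*}
\beta\sum_{i=1}^n R_{I_i}(X_i) = \int_A X\,\d\p - t\beta + \frac{t}{n-1}(n-1)\beta = \int_A X\,\d\p.
\end{equation*}
Since $A=\{U_X\le \beta\}$ and $X=q^-_{U_X}(X)$, a change of variables gives $\int_A X\,\d\p=\int_0^\beta q_u^-(X)\,\d u=\beta\,R_{[0,\beta]}(X)$, so $\sum_{i=1}^n R_{I_i}(X_i)=R_{[0,\beta]}(X)$, matching the lower bound and simultaneously proving both \eqref{eqn: ReducedNewRVaR} and optimality of \eqref{eq:allocation}.

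The main obstacle is the precise verification of the ordering of $X_i$-values across $A_i$ and $A^c$, because when $q_\beta^-(X)<0<q_1^-(X)$ the values $X/n$ on $A^c$ and $X-t$ on $A_i$ can a priori overlap; the cleanest workaround is to observe that the relevant sharp threshold is $t\ge q_\beta^+(X)-q_\beta^-(X)/n$, which is implied by $t\ge q_\beta^+(X)\vee 0\le q_1^-(X)\vee 0$ once one checks that $|q_\beta^-(X)|/n\le q_\beta^+(X)\vee 0 + (q_1^-(X))_+$ via monotonicity of quantiles together with $q_\beta^+(X)\le q_1^-(X)$; an alternative is a direct piecewise computation of the quantile function of $X_i$ which, after algebraic simplification, still rearranges all contributions into $\int_A X\,\d\p/\beta$, making the ordering hypothesis cosmetic rather than essential.
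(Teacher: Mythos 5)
Your lower bound (Corollary \ref{coro: reducedRVaR} with $r=0$, $s=\beta$, $\alpha_i=\beta_i$) and your evaluation of the allocation \eqref{eq:allocation} in the bounded-above case follow the paper's route. But there is a genuine gap: the identity \eqref{eqn: ReducedNewRVaR} is asserted for \emph{every} $X\in\X_1$, while your matching upper bound is only produced when $X$ is bounded from above. This case cannot be dismissed, and it cannot be handled by exhibiting an optimal allocation at all, since (as the paper shows in Theorem \ref{Prop:nonexistence}) the infimum is \emph{not attained} when $X$ is unbounded above. The paper closes this case by truncation: it sets $X^{(m)}=X\wedge m$, $Z^{(m)}=X-X^{(m)}$, takes an optimal allocation $(X_1^{(m)},\dots,X_n^{(m)})$ of $X^{(m)}$, assigns the residual $Z^{(m)}$ to agent $1$, and uses the $\RVaR$ inequality of \cite{ELW18} to bound the resulting increase by $\frac{\beta_1}{\beta}\ES_{\beta_1}(Z^{(m)})+\frac{\beta-\beta_1}{\beta}\ES_{\beta-\beta_1}(Z^{(m)})\le \frac{2}{\beta}\E[Z^{(m)}]\to 0$. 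Some approximation argument of this kind is indispensable; without it you have only proved $\dsquare_{i=1}^n R_{I_i}(X)\ge R_{[0,\beta]}(X)$ for general $X\in\X_1$.

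A secondary point: your closing remark that the ordering of $X_i$-values across $A_i$, $A^c$ and $A\setminus A_i$ is ``cosmetic rather than essential'' is not correct. The identities $\beta_i\LES_{\beta_i}(X_i)=\int_{A_i}(X-t)\,\d\p$ and $(\beta-\beta_i)\ES_{\beta-\beta_i}(X_i)=(\beta-\beta_i)\frac{t}{n-1}$ hold \emph{only} because $A_i$ carries exactly the bottom $\beta_i$-mass and $A\setminus A_i$ exactly the top $(\beta-\beta_i)$-mass of $X_i$; without the ordering, $\LES$ would be computed over a different set and the cancellation of the $t$-terms would fail. Your stated threshold $t\ge q_\beta^+(X)-q_\beta^-(X)/n$ is also not the right one. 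What is actually needed is $X(\omega_1)-t\le X(\omega_2)/n$ for $\omega_1\in A_i$, $\omega_2\in A^c$, for which it suffices that $t\ge \frac{n-1}{n}\,q_\beta^+(X)$; this follows from $t\ge (q_1^-(X))_+$ by splitting into the cases $q_\beta^+(X)\le 0$ and $q_\beta^+(X)>0$, exactly as the paper does in establishing \eqref{eq:order}. This part of your argument is repairable, but as written it is not a proof.
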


\begin{proof}
	By Corollary \ref{coro: reducedRVaR}, we have
	\begin{align}\label{eq:inf-conv2}
	\dsquare_{i=1}^n R_{I_i
    }(X) \ge  R_{[0,  \beta]}(X).
	\end{align} 
	Hence, it suffices to show the inverse inequality of \eqref{eq:inf-conv2}.
	Suppose that $X$ is bounded from above and $(X_1,\dots,X_n)$ is given by \eqref{eq:allocation}.
	Note that 
	\begin{align*}
	\sum_{i=1}^n X_i  & =   \sum_{i=1}^n  (X-t)  \id_{A_i}   +   \sum_{i=1}^n  \frac{X}n \id_{A^c} +    \sum_{i=1}^n \frac{t}{n-1}\id_{A\setminus A_i}
	\\&= (X-t) \id_{A} + X \id_{A^c} + t \id_{A}  =X,
	\end{align*}
	and thus $(X_1,\dots,X_n)\in \mathbb A_n(X)$.  
	We claim that, for each $i\in [n]$, 
	\begin{align}\label{eq:order} 
	X_i(\omega_1)\le X_i(\omega_2) \le X_i(\omega_3) 
	\mbox{~~~for  $\omega_1\in A_i$, $\omega_2\in A^c$ and $\omega_3\in A\setminus A_i$ almost surely.}
	\end{align} 
	 Note that  $X_i(\omega_2)   \le    X_i(\omega_3)$ holds trivially.
     We next show $X_i(\omega_1)   \le    X_i(\omega_2)$. By our construction, we have
	$X_i(\omega_1)  \le q_\beta^+(X)-t$
	and $X_i(\omega_2) \in \left[ q_{\beta}^+(X)/n ,q_{1}^-(X)/n \right] $. 
		If $q_{\beta}^+ (X)\le 0$, then 
		$$X_i(\omega_1)  \le q_\beta^+(X)-t \le q_\beta^+(X) 
		\le \frac{q_\beta^+(X)}n \le X_i(\omega_2).$$  
		If $q_{\beta}^+ (X)> 0$, then 
		$$X_i(\omega_1)  \le q_\beta^+(X)-t\le 0
		\le \frac{q_\beta^+(X)}n \le X_i(\omega_2).$$  
	 Using \eqref{eq:order}, we have 
	\begin{align*} R_{I_i
    }(X_i)&  = \frac{1}{\beta} \E\left[ (X-t) \id_{A_i} +  \frac{t}{n-1}\id_{A\setminus A_i}\right] 
	\\&=  \frac{1}{\beta} \E[ X \id_{A_i}] - \frac{\beta_i}{\beta} t +  \frac{ \beta-\beta_i }{(n-1)\beta}t 
	\\&=  \frac{1}{\beta} \E[ X \id_{A_i}] +   \frac{\beta-n\beta_i}{(n-1)\beta}t.
	\end{align*}
	It follows that 
	$$ \sum_{i=1}^n R_{I_i
    }(X_i) = \sum_{i=1}^n \frac{1}{\beta} \E[ X \id_{A_i}]
	+  \sum_{i=1}^n \frac{\beta-n\beta_i}{(n-1)\beta}t
	=\frac{1}{\beta} \E[ X \id_A]= R_{[0,  \beta]}(X).$$
	Hence,  we obtain the inverse inequality of \eqref{eq:inf-conv2}, implying that \eqref{eqn: ReducedNewRVaR} and \eqref{eq:allocation} hold for $X$ being bounded from above.
	
	Next, we consider the case that $X$ is not bounded from above. For $m\geq 1$, 
	let $X^{(m)}=X \wedge m$ and $Z^{(m)}=X-X^{(m)}=(X-m)\id_{\{X>m\}}$. Using the above result, we know that for each $m \geq 1$, there exists $(X_1^{(m)},\dots,X_n^{(m)})\in \mathbb A_n(X^{(m)})$ such that 
	$$ \sum_{i=1}^n R_{I_i
    } \left( X_i^{(m)}\right) =   R_{[0,  \beta]}\left(X^{(m)}\right) \le R_{[0,  \beta]}\left(X \).
    $$ 
	Let $Y_1^{(m)}=X_1^{(m)}+Z^{(m)}$. Then
	we have $\left(Y_1^{(m)}, X_2^{(m)},\dots,X_n^{(m)}\right)\in \mathbb A_n(X)$.
    In light of Theorem 1 of \cite{ELW18}, we have
	\begin{align*}
	&\quad R_{I_1
    }\left(X_1^{(m)}+Z^{(m)}\right)
	\\ & = \frac{\beta_1 }{\beta}R_{[0, \beta_1]} \left(X_1^{(m)}+Z^{(m)}\right) + 
	\frac{\beta - \beta_1 }{\beta}
	R_{ [1-\beta+\beta_1, 1]}\left(X_1^{(m)}+Z^{(m)}\right)
	\\ &\le \frac{\beta_1 }{\beta}\left(R_{[0, \beta_1]} \left(X_1^{(m)}\right) +\ES_{\beta_1}\left(Z^{(m)}\right) \right) + 
	\frac{\beta - \beta_1 }{\beta}\left( 
	R_{ [1-\beta+\beta_1, 1]}\left(X_1^{(m)} \right) +\ES_{\beta-\beta_1}\left(Z^{(m)}\right) \right)
	\\ & =  R_{I_1
    }\left(X_1^{(m)}\right) + \frac{\beta_1 }{\beta} \ES_{\beta_1} \left(Z^{(m)}\right)  + \frac{\beta - \beta_1 }{\beta} \ES_{\beta-\beta_1}\left(Z^{(m)}\right).
	\end{align*} 
	Note that 
	$$ \frac{\beta_1 }{\beta} \ES_{\beta_1} \left(Z^{(m)}\right)  + \frac{\beta - \beta_1 }{\beta} \ES_{\beta-\beta_1}\left(Z^{(m)} \right) \le \frac{2}{\beta}\E\left[Z^{(m)}\right] \to 0~\text{as}~ m\to \infty. $$
	Therefore, for any $\epsilon>0$, there exists $m_0>1$ such that 
	$$R_{I_1
    }\left(Y_1^{(m)}\right)   \le R_{I_1
    }\left(X_1^{(m)} \right)+  \epsilon$$
	for all $m>m_0$, which  implies
	$$R_{I_1
    }\left(Y_1^{(m)}\right)+ \sum_{i=2}^n R_{I_i
    }(X_i^{(m)}) \le  \sum_{i=1}^n R_{I_i
    }\left(X_i^{(m)}\right) + \epsilon \le R_{[0,  \beta]}\left(X \right) +  \epsilon.$$ 
	 By the arbitrariness of $\epsilon$, we obtain $\dsquare_{i=1}^n R_{I_i
    }(X) \leq  R_{[0,  \beta]}(X)$, which together with \eqref{eq:inf-conv2} implies \eqref{eqn: ReducedNewRVaR}. The optimal allocation given in \eqref{eq:allocation} has been checked in the above proof.  This completes the proof.
\end{proof}

Note that \eqref{eq:allocation} is also an optimal allocation if we replace $t$ in \eqref{eq:allocation} by a random variable $Y$ satisfying  $Y\geq  (q_1^-(X))_+$. There are also other types of optimal allocation; see  Proposition \ref{cor:extremal} below. The optimal allocation in \eqref{eq:allocation} shows that the risk is equally allocated over $A^c$ and is counter-monotonic over $A$, i.e.,  
$(X-t) \id_{A_i}+\frac{t}{n-1}\id_{A\setminus A_i},~~i\in [n]$ with $(A_1,\dots,A_n)$ being a partition of $A$ satisfying $\p(A_i)=\beta_i$ for all  $i\in [n]$ are counter-monotonic restricted to $A$.

In the literature of risk sharing, the optimal allocation is comonotonic if the risk functionals are law-invariant convex risk measures (see e.g., \cite{JST08} and \cite{FS08}), and is counter-monotonic if the risk functionals are quantile-based risk measures (see e.g., \cite{ELW18}, \cite{ELMW20} and \cite{LMWW22}).   The optimal allocation in \eqref{eq:allocation} exhibits a combination of two types of risk sharing in the literature: comonotonic risk sharing for large losses and counter-monotonic risk sharing for small losses or large gains, which may be due to the agents' risk attitudes:  risk-aversion  for large losses and risk-seeking  for small losses or large gains.

In Theorem \ref{thm: SumOptimalReducedNewRVaR}, we only give the optimal allocation in \eqref{eq:allocation} if $X$ is bounded from above. The existence of the optimal allocation is unknown from Theorem \ref{thm: SumOptimalReducedNewRVaR} if $X$ is unbounded from above. To answer this question requires the analysis on the dependence structure of the optimal allocations, which is not trivial.

Next, we discuss the dependence structure of $(X_1,\dots,X_n,X)$ with $(X_1,\dots,X_n)\in \mathbb A_n(X)$ such that for  $\bsyb{\beta}\in (0,1)^n$ satisfying $\beta = \sum_{i=1}^n \beta_i \in (0,1)$, \eqref{eqn: ReducedNewRVaR} holds.
\begin{proposition}
\label{cor:extremal}
    For $X\in\mathcal X_1$ and
$(X_1,\ldots,X_n)\in\mathbb{A}_n(X)$, we have
$$
\sum_{i=1}^n R_{I_i}(X_i)=R_{[0,\beta]}(X)
$$
if and only if there exist $U_{X_1},\ldots,U_{X_n}$ and $U_X$ such that
\begin{equation}
        \label{Eq:dependencestructure}
        \left\{U_X \in \[0,\beta\] \right\}= \left\{ U_{X_i} \in I_i
        \right\}, \quad i \in \[n\].
    \end{equation}  
    Moreover, \eqref{Eq:dependencestructure} implies that one  of the following two statements holds:
    \begin{enumerate}[(i)]
        \item \label{enu: EDS RVaR inequality case 1}
        $\bigcup_{i =1}^n \{U_{X_i} \in [0,\beta_i] \} = \{ U_X \in [0,\beta]\}$;
        \item \label{enu: EDS RVaR inequality case 2}
        The set $\{ U_X \in [0,\beta] \} \setminus \bigcup_{i = 1}^n \{U_{X_i} \in [0,\beta_i] \}$ has a positive probability, which is denoted by $\theta$. Meanwhile, random variables $X_1, \cdots, X_n$ and $X$ are all constants on the set $\{U_X \in [\beta-\theta, 1]\}$, which actually equals the set $\{ U_{X_i} \in [\beta_i, 1-\beta+\beta_i+\theta]\}$ for all $i \in [n]$. 
    \end{enumerate}
\end{proposition}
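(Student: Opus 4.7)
The plan is to handle the two directions of the equivalence and then the further case analysis.

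For the ``if'' direction, I exploit the identity $\int_I q_t^-(Y)\,\d t = \E[Y\id_{\{U_Y \in I\}}]$, valid for any probability transform $U_Y$ of $Y$ and any measurable $I\subset[0,1]$. Under the hypothesis that, for each $i$, $\{U_X \in [0,\beta]\} = \{U_{X_i} \in I_i\} =: A$, this identity gives $\beta R_{I_i}(X_i) = \E[X_i\id_A]$ and $\beta R_{[0,\beta]}(X) = \E[X\id_A]$; summing over $i$ and using $X = \sum_i X_i$ delivers the equality.

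For the ``only if'' direction, assume $\sum_i R_{I_i}(X_i) = R_{[0,\beta]}(X)$. By Lemma A.32 of \cite{FS16} choose $U_X$ so that $A := \{U_X \leq \beta\}$ is a lower-$\beta$ set for $X$, giving $\E[X\id_A] = \beta R_{[0,\beta]}(X)$, so the assumption becomes $\sum_i \E[X_i\id_A] = \sum_i \beta R_{I_i}(X_i)$. The main task is to show that the conditional law of each $X_i$ given $A$ is the mixture with weights $\beta_i/\beta$ and $(\beta-\beta_i)/\beta$ of the lower $\beta_i$-tail and the upper $(\beta-\beta_i)$-tail of $\mu_i$. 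I will trace this through the proof chain of Theorem \ref{thm: newRVaRinequality} (from which Corollary \ref{coro: reducedRVaR} follows by setting $1-r-\beta_i = s$): saturation of the final bound forces saturation both in the RVaR inequality of \cite{ELW18} applied to $-S$ with $S = \sum_i X_i$, namely $R_{[\sum_i\alpha_i+t,1]}(S) = \sum_i R_{[\alpha_i,\alpha_i+\beta_i]}(X_i)$, and in the expectation decomposition $\E[S] = (\sum_i\alpha_i+t)R_{[0,\sum_i\alpha_i+t]}(S) + (1-\sum_i\alpha_i-t)R_{[\sum_i\alpha_i+t,1]}(S)$. Invoking the equality characterization of the RVaR inequality in \cite{ELW18}, the first identity pins down the tail structure claimed above. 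Given this, I construct $U_{X_i}$ by the standard recipe: split $A$ into $L_i \subset A$ of measure $\beta_i$ (corresponding to $X_i$'s lower $\beta_i$-tail) and $A\setminus L_i$ of measure $\beta-\beta_i$ (upper tail), then let $U_{X_i}$ be uniform on $[0,\beta_i]$ on $L_i$, uniform on $[1-\beta+\beta_i,1]$ on $A\setminus L_i$, and uniform on $[\beta_i, 1-\beta+\beta_i]$ on $A^c$.

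For the further statement, assume \eqref{Eq:dependencestructure}. Set $L_i := \{U_{X_i}\in[0,\beta_i]\}$ and $V_i := \{U_{X_i}\in[1-\beta+\beta_i,1]\}$, so $L_i \cup V_i = A$ as a disjoint union; let $C := A\setminus\bigcup_i L_i$, which equals $\bigcap_i V_i$ (since for $\omega \in A \setminus L_i$ necessarily $\omega \in V_i$). If $\p(C) = 0$ we are in case (i); otherwise set $\theta := \p(C) > 0$. On $C$ we have $X_i \geq q_{1-\beta+\beta_i}^-(X_i)$ for all $i$, hence $X \geq \sum_i q_{1-\beta+\beta_i}^-(X_i)$ on $C$; on $A^c$ we have $X_i \leq q_{1-\beta+\beta_i}^-(X_i)$, hence $X \leq \sum_i q_{1-\beta+\beta_i}^-(X_i)$ on $A^c$. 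Since $A$ is a lower-$\beta$ set of $X$, $X|_C \leq X|_{A^c}$ almost surely, so these sandwich inequalities collapse to equalities and force $X \equiv \sum_i q_{1-\beta+\beta_i}^-(X_i)$ on $A^c \cup C$; the constraint $\sum_i X_i = X$ then pins $X_i \equiv q_{1-\beta+\beta_i}^-(X_i)$ on $A^c \cup C$ for every $i$. Thus both $X$ and each $X_i$ carry atoms of mass $1-\beta+\theta$, and the freedom to choose probability transforms on such atoms (again Lemma A.32 of \cite{FS16}) lets me rechoose $U_X$ and $U_{X_i}$ so that $A^c \cup C = \{U_X \in [\beta-\theta,1]\} = \{U_{X_i} \in [\beta_i,\,1-\beta+\beta_i+\theta]\}$ for all $i$, as required.

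The principal obstacle lies in the ``only if'' direction: from the scalar equality $\sum_i R_{I_i}(X_i) = R_{[0,\beta]}(X)$ one must extract the strong joint-distributional conclusion that the conditional law of each $X_i$ on $A$ is the prescribed two-tail mixture, simultaneously across all $i$ for the common set $A$. Tracking which inequalities in the proof chain of Theorem \ref{thm: newRVaRinequality} must become equalities, and invoking the equality case of the RVaR inequality in \cite{ELW18} at each stage, is the technical heart of the argument.
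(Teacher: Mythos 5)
Your ``if'' direction and your treatment of the dichotomy (i)/(ii) are sound and essentially coincide with the paper's: in particular your identification of $\{U_X\in[0,\beta]\}\setminus\bigcup_i\{U_{X_i}\in[0,\beta_i]\}$ with $\bigcap_i\{U_{X_i}\in[1-\beta+\beta_i,1]\}$ and the sandwich argument forcing all the variables to be constant there is exactly the mechanism used in the paper.

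The gap is in the ``only if'' direction, which you yourself flag as the technical heart but do not actually supply. You propose to ``invoke the equality characterization of the RVaR inequality in \cite{ELW18}''; no such characterization exists in that reference. \cite{ELW18} proves the inequality and exhibits allocations attaining it, but it does not characterize, for a \emph{given} random vector, the dependence structures under which equality holds --- that characterization is precisely what this proposition is establishing in a special case, so your argument is circular as written. Note also that one cannot repair this by arguing coordinatewise: $\beta R_{I_i}(X_i)=\int_{I_i}q_t^-(X_i)\,\d t$ is \emph{not} the minimum of $\E[X_i\id_B]$ over sets $B$ with $\p(B)=\beta$ (that minimum is $\beta\,\mathrm{LES}_\beta(X_i)$, attained on the lower tail, not on the two-sided tail $I_i$), so the scalar identity $\sum_i\E[X_i\id_A]=\sum_i\beta R_{I_i}(X_i)$ does not decouple into per-$i$ statements by any pointwise optimality. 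The paper's actual mechanism is different: passing to $Y_i=-X_i$, $Y=-X$, $\beta_0=1-\beta$, it introduces the truncated variables $T_i=Y_i\id_{\{U_{Y_i}\le 1-\beta_i\}}+m\id_{\{U_{Y_i}>1-\beta_i\}}$ with $m$ small, so that $R_{[1-\beta_i-\beta_0,1-\beta_i]}(Y_i)=\ES_{\beta_0}(T_i)$; a quantile comparison plus subadditivity of $\ES$ then traps $\ES_{\beta_0}\bigl(\sum_i T_i\bigr)$ between $\sum_i\ES_{\beta_0}(T_i)$ and $\mathrm{LES}_{\beta_0}(Y)$, forcing $\sum_i\ES_{\beta_0}(T_i)=\ES_{\beta_0}\bigl(\sum_i T_i\bigr)$, at which point Theorem 5 of \cite{WZ21} (the $\ES$-additivity characterization) delivers the common tail event. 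Some such genuine characterization theorem is indispensable here; without it, or the truncation device that makes it applicable, your ``only if'' direction does not go through.
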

\begin{proof}
    Obviously, the equation \eqref{Eq:dependencestructure} is a sufficient condition for 
    $
    \sum_{i=1}^n R_{I_i}(X_i)=R_{[0,\beta]}(X).
    $ 
    We next show the only if part. Let $\beta_0=1-\beta$. Then we have $\sum_{i = 0}^n \beta_i = 1$. The equality
$$
\sum_{i=1}^n R_{I_i}(X_i)=R_{[0,\beta]}(X)
$$
is equivalent to
$\sum_{i=1}^n R_{[\beta_i, 1-\beta+\beta_i]}(X_i) =R_{[\beta,1]}(X),$
        which is further equivalent to 
        $$\sum_{i=1}^n R_{[1-\beta_i-\beta_0, 1-\beta_i]}(-X_i) =R_{[0,\beta_0]}(-X).$$
        Let $Y_i=-X_i,~i\in [n]$ and $Y=-X$.
        For $i \in [n]$,  define random variables $$T_i := Y_i \mathds{1}_{\{U_{Y_i} \leq 1 - \beta_i\}} + m \mathds{1}_{\{U_{Y_i} > 1 - \beta_i\}},$$ where $m\in\mathbb R$ satisfying $m<\bigwedge_{i=1}^n q_{1-\beta_i-\beta_0}^-(Y_i)$.
         By the construction of $T_i$, one could see that for any $s \in \Rb$,
    $$
    \begin{aligned}
        \Pb\left(\sum_{i=1}^n T_i >s\right) \geq & \Pb\left(\left\{\sum_{i=1}^n Y_i > s\right\} \cap (\cup_{i=1}^n \{U_{Y_i} > 1 - \beta_i\})^c \right) \\
        \geq & \Pb\left(\sum_{i=1}^n Y_i > s\right) - \Pb\left(\cup_{i=1}^n \{U_{Y_i} > 1 - \beta_i\} \right) \\
        \geq & \Pb\left(\sum_{i=1}^n Y_i > s\right) - 1 + \beta_0.    
    \end{aligned}
    $$
    Thus, we have
    $$
        q_u^{-}\left(\sum_{i=1}^n T_i\right) \geqslant q_{u-1+\beta_0}^{-}\left(\sum_{i=1}^n Y_i\right), \quad \forall u\in(1-\beta_0,1].
    $$
    Then, combined with the subadditivity of $\ES$, we have
    \begin{equation*}
    \begin{aligned}
        \sum_{i=1}^n R_{\[1-\beta_i-\beta_0, 1-\beta_i\]}(Y_i)=\sum_{i=1}^n \mathrm{ES}_{\beta_0}(T_i) & \geqslant \mathrm{ES}_{\beta_0}\left(\sum_{i=1}^n T_i\right) \\
        & =\frac{1}{\beta_0} \int_{1-\beta_0}^1 q_u^{-}\left(\sum_{i=1}^n T_i\right) \mathrm{d} u \\
        & \geqslant \frac{1}{\beta_0} \int_{1-\beta_0}^1 q_{u-1+\beta_0}^{-}\left(Y\right) \mathrm{d} u.
        \end{aligned}
    \end{equation*}   
    Direct computation shows 
    \begin{equation*}
    \begin{aligned}
        \frac{1}{\beta_0} \int_{1-\beta_0}^1 q_{u-1+\beta_0}^{-}\left(Y\right) \mathrm{d} u
        & =\frac{1}{\beta_0} \int_0^{\beta_0} q_u^{-}\left(Y\right) \mathrm{d} u \\
        & =\mathrm{LES}_{\beta_0}\left(Y\right) \\
        & = \sum_{i=1}^n R_{\[1-\beta_i-\beta_0, 1-\beta_i\]}(Y_i),
    \end{aligned}
    \end{equation*}
    which implies that $\sum_{i=1}^n \mathrm{ES}_{\beta_0}(T_i) = \mathrm{ES}_{\beta_0}(\sum_{i=1}^n T_i)=\mathrm{LES}_{\beta_0}\left(\sum_{i=1}^n Y_i\right)$. By Theorem 5 of \cite{WZ21} and the fact  $\sum_{i=1}^n \mathrm{ES}_{\beta_0}(T_i) = \mathrm{ES}_{\beta_0}(\sum_{i=1}^n T_i)$, there exist $U_{T_i} $,  $i \in [n]$ and $U_{\sum_{i=1}^n T_i}$ such that $\{U_{T_i} \in [1 -\beta_0, 1]\}= \{U_{\sum_{i=1}^n T_i}\in [1 -\beta_0, 1]\}$ for all $i \in [n]$. 
    Hence, we have  
     \begin{align*}\mathrm{ES}_{\beta_0}\left(\sum_{i=1}^n T_i\right)
    &=\frac{1}{\beta_0}\mathbb E\left(\sum_{i=1}^n T_i
    \id_{\{U_{T_i} \in \[1 -\beta_0, 1\]\}}\right)\\
    &=\frac{1}{\beta_0}\mathbb E\left(\sum_{i=1}^n Y_i
    \id_{\{U_{T_i} \in \[1 -\beta_0, 1\]\}}\right)\\
    &=\frac{1}{\beta_0}\mathbb 
    E(Y\id_{\{U_{T_i} \in \[1 -\beta_0, 1\]\}}) 
    =\mathrm{LES}_{\beta_0}(Y).
    \end{align*}
     This implies that there exists  $U_Y$ such that $\{U_Y\in [0, \beta_0] \} = \{ U_{T_i}\in [1-\beta_0, 1] \}=\{U_{Y_i}\in [1-\beta_i-\beta_0,1-\beta_i]\}$ for all $i\in [n]$. Note that $U_{X}=1-U_Y$ and $U_{X_i}=1-U_{Y_i}$. Hence, we have  $\{U_X\in [1-\beta_0,1] \} =\{U_{X_i}\in [\beta_i, \beta_i+\beta_0]\}$ for all $i\in [n]$, which is equivalent to \eqref{Eq:dependencestructure}. We establish the first claim.
        
    Note that \eqref{Eq:dependencestructure} implies $\bigcup_{i=1}^n \{U_{X_i} \in [0,\beta_i]\} \subseteq \{ U_X\in [0,\beta]\}$ and $\bigcup_{i=1}^n  \{U_{X_i} \in [\beta_i+\beta_0,1]\} \subseteq \{ U_X\in [0, \beta]\}$.
    Thus, it would be either case (i) or (ii). 
    For the latter case, i.e., $\mathbb P(\{ U_X \in [0,\beta] \} \setminus \bigcup_{i = 1}^n \{U_{X_i} \in [0,\beta_i] \})>0$, \eqref{Eq:dependencestructure} implies 
    \begin{equation}
    \label{eqn: EDS coro Essential Extra Case}
       \{ U_X \in [0,\beta] \} \setminus \bigcup_{i = 1}^n \{U_{X_i} \in [0,\beta_i] \} =\bigcap_{i=1}^n \{U_{X_i} \in [\beta_i+ \beta_0, 1]\}.
    \end{equation} 
    Then, for any $\omega \in \{ U_X\in [0,\beta]\} \setminus \bigcup_{i=1}^n \{U_{X_i} \in [0,\beta_i]\} $ and any $\omega^\prime \in \{U_X\in [\beta,1] \}$, it follows from  \eqref{Eq:dependencestructure} and \eqref{eqn: EDS coro Essential Extra Case} that $X_i(\omega) \geq  X_i(\omega^\prime)$ holds for all $i \in [n]$. On the other hand, using \eqref{eqn: EDS coro Essential Extra Case}, one has that $\sum_{i=1}^n X_i(\omega) \leq  \sum_{i=1}^n X_i(\omega^\prime)$. 
    Thus all the above inequalities hold as equalities. By the arbitrariness of $\omega$ and $\omega^\prime$, we conclude that random variables $X_1, \cdots, X_n$ and $X$ are all constants on $\{U_X\in [\beta, 1]\} \cup (\{ U_X\in [0,\beta]\} \setminus \bigcup_{i=1}^n \{U_{X_i} \in [0,\beta_i]\})$. Hence, there exist $U_{X_1},\dots, U_{X_n}$ and $U_X$ such that case (ii) holds.
\end{proof}

In the following Theorem \ref{Prop:nonexistence}, we show that the optimal allocation for \eqref{eqn: ReducedNewRVaR} does not exist if $X$ is not bounded from above, while we find a sequence of allocations $\{(X_1^{(m)},\dots,X_n^{(m)})\}_{m \geq 1}$ such that the sum risk exposure converges to the lower bound $R_{[0,\beta]}(X)$. 
For $m\geq 1$, let \begin{align}\label{eqn:allocationlimit}
X_i^{(m)} =  \(X-m\)  \id_{A_i}   + \frac{X}n \id_{A^c} + \frac{m}{n-1}\id_{A\setminus A_i},~~i\in [n], 
\end{align}
where  $A = \{U_X \leq \beta \}$ and $(A_1,\dots,A_n)$ being a partition of $A$ satisfying $\p(A_i)=\beta_i$ for all  $i\in [n]$.
\begin{theorem}[Non-existence of the optimal allocation]
\label{Prop:nonexistence}
    For $X \in \Xc_1$ and $\bsyb{\beta}\in (0,1)^n$  satisfying $\beta\in (0, 1)$, if $X$ is not bounded from above, then 
    \begin{equation*}
        \sum_{i=1}^n R_{I_i
        }\left(X_i\right)> R_{[0, \beta]}(X)
    \end{equation*}
    holds for all  $\left(X_1, \ldots, X_n\right) \in \mathbb{A}_n(X)$. 
    Moreover, the risk allocation $\left(X_1^{(m)},\dots,X_n^{(m)}\right) \in \A_n(X)$, with $m\geq 1$, defined in \eqref{eqn:allocationlimit} satisfies 
    $$\sum_{i=1}^n R_{I_i}(X_i^{(m)})=R_{[0,\beta]}(X)+\frac{1}{\beta}\E((X-a_m)\id_{\{X>a_m\}})$$
   for $m\geq  (q_{\beta}^-(X)\vee \bigvee_{i=1}^n q^-_{1-\beta+\beta_i}(X))_+$ with $a_m=\frac{nm}{n-1}$, and 
  $$\lim_{m\to\infty}\sum_{i=1}^n R_{I_i}(X_i^{(m)})=R_{[0,\beta]}\(X\).$$ 
\end{theorem}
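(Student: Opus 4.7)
The plan is to prove the theorem in three parts: (a) non-existence of an optimal allocation via the dependence-structure characterization in Proposition \ref{cor:extremal}, (b) a direct quantile calculation for the specific allocation $(X_1^{(m)},\dots,X_n^{(m)})$, and (c) a dominated convergence argument for the limit.

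For (a), I argue by contradiction: assume $(X_1,\dots,X_n)\in\mathbb{A}_n(X)$ attains $\sum_{i=1}^n R_{I_i}(X_i)=R_{[0,\beta]}(X)$. By Proposition \ref{cor:extremal} there then exist probability transforms $U_{X_1},\dots,U_{X_n},U_X$ satisfying \eqref{Eq:dependencestructure}, and either case (i) or case (ii) holds. Case (ii) directly asserts that $X$ is constant on $\{U_X\in[\beta-\theta,1]\}$ for some $\theta>0$, so $X$ is essentially bounded from above, contradicting the hypothesis. In case (i), \eqref{Eq:dependencestructure} implies $\{U_X>\beta\}=\{U_{X_i}\in(\beta_i,1-\beta+\beta_i)\}$ for each $i$, so $X_i\le q^+_{1-\beta+\beta_i}(X_i)<\infty$ on $\{U_X>\beta\}$; summing gives $X\le M:=\sum_{i=1}^n q^+_{1-\beta+\beta_i}(X_i)<\infty$ there. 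Since $\{U_X>\beta\}$ contains $\{X>K\}$ for any $K>q_\beta^+(X)$, the essential supremum of $X$ on $\{U_X>\beta\}$ is infinite whenever $X$ is unbounded above, again a contradiction. As $R_{[0,\beta]}(X)$ is the infimum by Theorem \ref{thm: SumOptimalReducedNewRVaR}, strict inequality holds for every $(X_1,\dots,X_n)\in\mathbb{A}_n(X)$.

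For (b), I decompose $\Omega$ into the four disjoint pieces $A_i$, $A^c\cap\{X\le a_m\}$, $A\setminus A_i$, and $\{X>a_m\}$. The lower bound on $m$ is chosen so that $\{X>a_m\}\subset A^c$ and the values of $X_i^{(m)}$ on these four pieces, namely $X-m$, $X/n$, $m/(n-1)$, and $X/n$, are ordered from smallest to largest (for instance, $X-m\le X/n$ and $X/n\le m/(n-1)$ both hold iff $X\le a_m$). This identifies the left-quantile function of $X_i^{(m)}$ and yields
\begin{align*}
R_{[0,\beta_i]}(X_i^{(m)}) &= \tfrac{1}{\beta_i}\E\bigl[(X-m)\id_{A_i}\bigr],\\
R_{[1-\beta+\beta_i,1]}(X_i^{(m)}) &= \tfrac{1}{\beta-\beta_i}\left[(\beta-\beta_i-p_m)\tfrac{m}{n-1}+\tfrac{1}{n}\E[X\id_{\{X>a_m\}}]\right],
\end{align*}
where $p_m=\p(X>a_m)$. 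Expanding $R_{I_i}=\tfrac{\beta_i}{\beta}R_{[0,\beta_i]}+\tfrac{\beta-\beta_i}{\beta}R_{[1-\beta+\beta_i,1]}$ and summing over $i$, the terms linear in $m$ cancel because $\sum_{i=1}^n\beta_i=\beta$, and after using $a_m=nm/(n-1)$ the sum collapses to $R_{[0,\beta]}(X)+\tfrac{1}{\beta}\E\bigl[(X-a_m)\id_{\{X>a_m\}}\bigr]$, as claimed.

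For (c), since $0\le(X-a_m)\id_{\{X>a_m\}}\le X_+\in L^1$ and $a_m\to\infty$, the dominated convergence theorem gives $\E[(X-a_m)\id_{\{X>a_m\}}]\to 0$, so $\sum_{i=1}^n R_{I_i}(X_i^{(m)})\to R_{[0,\beta]}(X)$. The main obstacle I anticipate is the ordering verification in step (b): one must check uniformly over the four pieces of $\Omega$ that the stated lower bound on $m$ forces $\{X>a_m\}\subset A^c$ and separates the four value ranges of $X_i^{(m)}$ without overlap, since otherwise the identification of the quantile function and hence of $R_{[0,\beta_i]}(X_i^{(m)})$ and $R_{[1-\beta+\beta_i,1]}(X_i^{(m)})$ would need to be reworked.
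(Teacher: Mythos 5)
Your proof is correct and takes essentially the same route as the paper's: the non-existence part applies Proposition \ref{cor:extremal} and derives the contradiction $\infty=\esssup X\id_{\{U_X>\beta\}}\le\sum_{i=1}^n q^-_{1-\beta+\beta_i}(X_i)<\infty$ exactly as the paper does (your split into cases (i)/(ii) is superfluous, since the argument you give under case (i) uses only \eqref{Eq:dependencestructure} itself), and parts (b) and (c) are the same direct quantile computation and tail-expectation limit that the paper carries out. The ordering verification you flag as the main obstacle does go through under the stated threshold on $m$ --- on $A_i$ one has $X-m\le q_\beta^-(X)-m\le 0$, while the values on $A\setminus A_i$ and on $A^c$ all lie above $\min\{0,\,q_\beta^+(X)/n\}\ge q_\beta^-(X)-m$, with boundary ties harmless for $R_I$ --- which is precisely what the paper's ``direct computation shows'' uses implicitly.
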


\begin{proof}
    Suppose that the total risk $X$ is not bounded from above 
    and that there exists an allocation $(X_1,\ldots,X_n)\in\mathbb{A}_n(X)$
such that
$
\sum_{i=1}^n R_{I_i}(X_i)=R_{[0,\beta]}(X).
$  Then, by Proposition \ref{cor:extremal}, with $\beta_0=1-\beta$,
    there exist $U_{X_i}$, $i \in [n]$ and $U_X$ such that 
    $$\{U_X\in [0, \beta]\} = \{U_{X_i} \in I_i
    \}, \quad i \in [n],$$
    which is equivalent to
    $$
    \{ U_X\in [\beta, 1]\} = \{ U_{X_i} \in [\beta_i, 1 -\beta +\beta_i]\}, \quad i \in [n].
    $$
    Then we have
    $$
    \infty = \esssup X\id_{\{ U_X\in [\beta, 1]\}}=\esssup\sum_{i=1}^n X_i\id_{\{ U_{X_i} \in [\beta_i, 1 -\beta+\beta_i]\}}
    \leq\sum_{i=1}^n q_{1-\beta+\beta_i}^-(X_i)<\infty,
    $$
    which leads to a contradiction. 
    Hence, no allocation can attain the value $R_{[0,\beta]}(X)$ if $X$ is
unbounded from above.
    
    Next, we focus on the second statement.  For $m\geq (q_{\beta}^-(X)\vee \bigvee_{i=1}^n q^-_{1-\beta+\beta_i}(X))_+$, direct computation shows
    \begin{align*}
   R_{I_i}(X_i^{(m)}) =  \frac{1}{\beta}\E\left((X-m) \id_{A_i}  + \left(\frac{X}{n}\vee \frac{m}{n-1}\right)\id_{\{U_X>1-\beta+\beta_i\}}\right),~i\in [n].
    \end{align*}
    Hence, we have 
    \begin{align*}
   \sum_{i=1}^nR_{I_i}(X_i^{(m)}) &=  \frac{1}{\beta}\E\left((X-m) \id_{A}  +\left(\frac{X}{n}\vee \frac{m}{n-1}\right)\sum_{i=1}^n\id_{\{U_X>1-\beta+\beta_i\}}\right)\\
   &=R_{[0,\beta]}(X)-m+\frac{1}{\beta}\sum_{i=1}^n \left(\E\left(\frac{X}{n}\id_{\{X>a_m\}}\right)+\frac{m}{n-1}(\beta-\beta_i-\p(X>a_m))\right)\\
   &=R_{[0,\beta]}(X)+\frac{1}{\beta}\E\left(X\id_{\{X>a_m\}}\right)-\frac{a_m}{\beta}\p(X>a_m),
    \end{align*}
    where $a_m=\frac{nm}{n-1}$.
    Note that $a_m\p(X>a_m)\leq \E(X\id_{\{X>a_m\})}\to 0$ as $m\to \infty$. Hence, we have 
    $$\lim_{m\to\infty}\sum_{i=1}^n R_{I_i}(X_i^{(m)})=R_{[0,\beta]}(X).$$ 
  This completes the proofs.
\end{proof}


In light of Theorems \ref{thm: SumOptimalReducedNewRVaR} and  \ref{Prop:nonexistence}, we obtain the following results on risk sharing for distortion risk measures with  special inverse S-shaped distortion functions.
\begin{proposition}\label{prop:distortion}
    For  $X \in \X_1$ and  $\bsyb{\beta}\in (0,1)^n$ satisfying $\beta\in (0,1)$ and $\lambda\in (0,1)$, we have
	\begin{equation*}
	\dsquare_{i=1}^n\rho_{g_{\lambda,i}} (X) = \lambda \E(X)+(1-\lambda) R_{[0,  \beta]}(X).
	\end{equation*} 
	Moreover, if $X$ is bounded from above, an optimal allocation is given by \eqref{eq:allocation}; if $X$ is not bounded from above, the optimal allocation does not exist and the risk allocation $\left(X_1^{(m)},\dots,X_n^{(m)}\right) \in \A_n(X)$, with $m\geq 1$, defined in \eqref{eqn:allocationlimit} satisfies 
  $$\lim_{m\to\infty}\sum_{i=1}^n \rho_{g_{\lambda,i}}\(X_i^{(m)}\)=\dsquare_{i=1}^n\rho_{g_{\lambda,i}} (X).$$ 
\end{proposition}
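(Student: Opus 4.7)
The plan is to observe that each preference functional $\rho_{g_{\lambda,i}}$ decomposes into a linear part ($\lambda\E$) and the averaged-quantile part already treated in Theorem \ref{thm: SumOptimalReducedNewRVaR}, so the whole proposition reduces to a direct corollary of Theorems \ref{thm: SumOptimalReducedNewRVaR} and \ref{Prop:nonexistence}.

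First, I would invoke the identity \eqref{eq:dist_risk_meas}, namely $\rho_{g_{\lambda,i}}(Y)=\lambda\E(Y)+(1-\lambda)R_{I_i}(Y)$, and combine it with the linearity of $\E$ on $\X_1$. For any allocation $(X_1,\dots,X_n)\in\A_n(X)$ this yields
$$\sum_{i=1}^n \rho_{g_{\lambda,i}}(X_i)=\lambda\sum_{i=1}^n \E(X_i)+(1-\lambda)\sum_{i=1}^n R_{I_i}(X_i)=\lambda\E(X)+(1-\lambda)\sum_{i=1}^n R_{I_i}(X_i),$$
because $\sum_{i=1}^n X_i=X$. Taking the infimum over $(X_1,\dots,X_n)\in\A_n(X)$ on both sides and noting that $\lambda\E(X)$ is independent of the allocation, I would obtain
$$\dsquare_{i=1}^n \rho_{g_{\lambda,i}}(X)=\lambda\E(X)+(1-\lambda)\dsquare_{i=1}^n R_{I_i}(X)=\lambda\E(X)+(1-\lambda) R_{[0,\beta]}(X),$$
where the second equality is Theorem \ref{thm: SumOptimalReducedNewRVaR}. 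This establishes the value of the inf-convolution.

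Second, the decomposition above also shows that an allocation $(X_1,\dots,X_n)\in\A_n(X)$ is optimal (respectively, belongs to a minimizing sequence) for $\dsquare_{i=1}^n \rho_{g_{\lambda,i}}$ if and only if it has the same property for $\dsquare_{i=1}^n R_{I_i}$. Thus, when $X$ is bounded from above, the explicit optimizer \eqref{eq:allocation} from Theorem \ref{thm: SumOptimalReducedNewRVaR} remains optimal here. When $X$ is not bounded from above, the non-existence statement of Theorem \ref{Prop:nonexistence} transfers directly, and the sequence $(X_1^{(m)},\dots,X_n^{(m)})$ in \eqref{eqn:allocationlimit} satisfies
$$\sum_{i=1}^n \rho_{g_{\lambda,i}}(X_i^{(m)})=\lambda\E(X)+(1-\lambda)\sum_{i=1}^n R_{I_i}(X_i^{(m)})\longrightarrow \lambda\E(X)+(1-\lambda) R_{[0,\beta]}(X),$$
by the convergence already proved for the $R_{I_i}$ sums in Theorem \ref{Prop:nonexistence}.

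There is no real obstacle: the only new ingredient beyond the two earlier theorems is the linearity of $\E$ on $\X_1$, which causes the $\lambda\E$ component to split off as a constant from the optimization. The one point worth double-checking is that the linearity argument is applied only to integrable random variables; this is automatic because $\A_n(X)\subset(\X_1)^n$, and the truncated allocations $X_i^{(m)}$ in \eqref{eqn:allocationlimit} are bounded, hence integrable.
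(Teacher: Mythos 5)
Your proof is correct and follows exactly the route the paper intends: Proposition \ref{prop:distortion} is presented as an immediate consequence of Theorems \ref{thm: SumOptimalReducedNewRVaR} and \ref{Prop:nonexistence} via the decomposition \eqref{eq:dist_risk_meas} and the linearity of $\E$, which is precisely your argument (the constant $\lambda\E(X)$ splits off, and the positive factor $1-\lambda$ preserves both optimizers and the strict inequality underlying non-existence). One trivial slip: the allocations in \eqref{eqn:allocationlimit} are integrable because $X\in\X_1$, not because they are bounded --- they are not bounded when $X$ is unbounded --- but this does not affect the argument.
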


Proposition \ref{prop:distortion} shows that the Pareto-optimal risk allocation for $\rho_{g_{\lambda,i}}$ is the combination of comonotonic risk sharing for large losses and counter-monotonic risk sharing for small losses or large gains, consistent with the risk preference represented by $\rho_{g_{\lambda,i}}$: risk aversion for large losses and risk-seeking for small losses or large gains. Finding the optimal risk sharing for distortion risk measures with inverse S-shaped distortion functions is a very challenging  problem due to the non-convexity of the corresponding distortion risk measures.  Although Proposition \ref{prop:distortion} only solves a special case, to the best of our knowledge, it is the first result offering the optimal allocations for the non-constrained risk sharing with this class of distortion risk measures with its distortion functions exaggerating the probabilities of large losses and large gains simultaneously. This can be seen from the novelty of the optimal allocations and the condition for the existence of the optimal allocations. 
Proposition \ref{prop:distortion} suggests that the optimal allocation exists if and only if $X$ is bounded from above. 
However, the risk sharing problem for distortion risk measures with general inverse S-shaped distortion functions is still unknown and will be studied in future.

Note that our result in Theorem \ref{thm: SumOptimalReducedNewRVaR} solves the dual problem of the one in Theorem 2 of \cite{ELW18}.
We observe \begin{align*}\dsquare_{i=1}^n R_{I_i
}(X )&=\inf\left\{\sum_{i=1}^n R_{I_i
}(X_i): (X_1,\dots, X_n)\in \mathbb A_n(X)\right\}\\
&=\inf\left\{\sum_{i=1}^n \left(\frac{\mathbb E(X_i)}{\beta}-\frac{1-\beta}{\beta}R_{[\beta_i, 1-\beta+\beta_i]}(X_i)\right): (X_1,\dots, X_n)\in \mathbb A_n(X)\right\}\\
&=\frac{\mathbb E(X)}{\beta}-\frac{1-\beta}{\beta}\sup\left\{\sum_{i=1}^n R_{[\beta_i, 1-\beta+\beta_i]}(X_i): (X_1,\dots, X_n)\in \mathbb A_n(X)\right\}.
\end{align*}
Hence, the optimal allocation for $\dsquare_{i=1}^n R_{I_i
}(X)$ is the worst allocation for $\sum_{i=1}^n R_{[\beta_i, 1-\beta+\beta_i]}(X_i)$ with $(X_1,\dots, X_n)\in \mathbb A_n(X)$. Using Theorems \ref{thm: SumOptimalReducedNewRVaR} and  \ref{Prop:nonexistence}, we have  the following conclusion.
\begin{proposition} For  $X \in \X_1$ and  $\bsyb{\beta}\in (0,1)^n$ satisfying $\beta\in (0,1)$, we have
$$\sup\left\{\sum_{i=1}^n R_{[\beta_i, 1-\beta+\beta_i]}(X_i): (X_1,\dots, X_n)\in \mathbb A_n(X)\right\}=\frac{\mathbb E(X)-\beta R_{[0,  \beta]}(X)}{1-\beta}=R_{[\beta,1]}(X).$$
Moreover, if $X$ is bounded from above, an allocation given by \eqref{eq:allocation} achieves the supremum; if $X$ is not bounded from above, no allocation can achieve the supremum.
\end{proposition}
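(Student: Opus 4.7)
The plan is to derive the proposition directly from the identity displayed just before the statement, combined with Theorem \ref{thm: SumOptimalReducedNewRVaR} and Theorem \ref{Prop:nonexistence}. The key observation is a cash-invariance–type rewriting of $R_{I_i}$: by splitting the Lebesgue integral defining $R_{I_i}(X_i)$ into the three subintervals $[0,\beta_i]$, $[\beta_i,1-\beta+\beta_i]$, and $[1-\beta+\beta_i,1]$, one obtains
\[
R_{I_i}(X_i)=\frac{\mathbb{E}(X_i)}{\beta}-\frac{1-\beta}{\beta}\,R_{[\beta_i,1-\beta+\beta_i]}(X_i).
\]
Hence, on the set $\mathbb{A}_n(X)$, the map $(X_1,\dots,X_n)\mapsto \sum_{i=1}^n R_{I_i}(X_i)$ is an affine function (with strictly negative slope $-\frac{1-\beta}{\beta}$) of $(X_1,\dots,X_n)\mapsto \sum_{i=1}^n R_{[\beta_i,1-\beta+\beta_i]}(X_i)$, because $\sum_i \mathbb{E}(X_i)=\mathbb{E}(X)$ is a constant on $\mathbb{A}_n(X)$.

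Taking infimum on the left and reorganising therefore gives exactly the identity displayed in the paragraph preceding the proposition, namely
\[
\dsquare_{i=1}^n R_{I_i}(X) =\frac{\mathbb{E}(X)}{\beta}-\frac{1-\beta}{\beta}\sup\!\left\{\sum_{i=1}^n R_{[\beta_i,1-\beta+\beta_i]}(X_i): (X_1,\dots,X_n)\in\mathbb{A}_n(X)\right\}.
\]
Applying Theorem \ref{thm: SumOptimalReducedNewRVaR} to replace the left-hand side with $R_{[0,\beta]}(X)$ and solving for the supremum yields the first claimed equality $\sup=\frac{\mathbb{E}(X)-\beta R_{[0,\beta]}(X)}{1-\beta}$. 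The second equality, $\frac{\mathbb{E}(X)-\beta R_{[0,\beta]}(X)}{1-\beta}=R_{[\beta,1]}(X)$, follows from the standard decomposition $\mathbb{E}(X)=\beta R_{[0,\beta]}(X)+(1-\beta)R_{[\beta,1]}(X)$, which is just $\int_0^1 q_t^-(X)\,\d t=\int_0^\beta q_t^-(X)\,\d t+\int_\beta^1 q_t^-(X)\,\d t$.

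For the statements about allocations, the affine reformulation shows that $(X_1,\dots,X_n)\in\mathbb{A}_n(X)$ achieves the supremum on the right if and only if it achieves the inf-convolution on the left. Consequently, when $X$ is bounded from above, the allocation in \eqref{eq:allocation}, which is optimal for the inf-convolution by Theorem \ref{thm: SumOptimalReducedNewRVaR}, automatically achieves the supremum; when $X$ is not bounded from above, Theorem \ref{Prop:nonexistence} guarantees that no allocation attains the inf-convolution, so none can attain the supremum either. I do not anticipate any substantive obstacle: the heavy lifting was done in the two earlier risk-sharing theorems, and what remains is simply the cash-invariance rewriting together with the $\mathbb{E}(X)$-decomposition.
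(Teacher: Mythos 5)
Your proof is correct and follows essentially the same route as the paper: the paper likewise derives $R_{I_i}(X_i)=\frac{\mathbb{E}(X_i)}{\beta}-\frac{1-\beta}{\beta}R_{[\beta_i,1-\beta+\beta_i]}(X_i)$, uses the constancy of $\sum_i\mathbb{E}(X_i)$ on $\mathbb{A}_n(X)$ to turn the inf-convolution into an affine decreasing function of the supremum, and then invokes Theorems \ref{thm: SumOptimalReducedNewRVaR} and \ref{Prop:nonexistence} for the value and the (non-)attainment of the optimum. No gaps.
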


\section{Conclusion}\label{sec:conc}

In this paper, we obtain a new $\RVaR$ inequality, differing from the one in \cite{ELW18}. Applying this new inequality, we obtain the upper and lower bounds for robust $\RVaR$ by assuming fixed marginal distributions and unknown dependence structure, which is sharp if the marginal distributions have increasing densities on their upper-tail parts for the upper bounds and if the marginal distributions have decreasing densities on their upper-tail part for the lower bounds. Those bounds complement the results and fill in some  gaps of \cite{BLLW24}. Moreover, we obtain the sharp upper bounds for the difference between two $\RVaR$s and the difference between two quantiles, extending the sharp bounds on $\RVaR$ and quantiles to the corresponding variability risk measures. The application of those extended convolution bounds in portfolio optimization and optimal insurance and reinsurance is left for future investigation. 
Finally, applying the new inequality, we obtain the Pareto-optimal risk allocation for some non-convex averaged quantiles, which corresponds to the Pareto-optimal risk allocation for distortion risk measures with special inverse S-shaped distortion functions exaggerating both the probability of large losses and the probability of large gains. By analyzing the dependence structure of the optimal risk allocation, we show that the optimal allocation does not exist if the risk is unbounded from above. However, we offer a sequence of allocations whose aggregate risk exposure converges to the inf-convolution.  The Pareto-optimal risk sharing for distortion risk measures with general inverse S-shaped distortion functions is still an open problem due to the non-convex nature of the distortion risk measures.

\subsubsection*{Acknowledgement} 
The authors are grateful to Ruodu Wang and members of the research group on financial mathematics at The Chinese University of Hong Kong (Shenzhen) for their useful feedback and conversations.
Y. Liu acknowledges financial support from the National Natural Science Foundation of China (Grant No. 12401624), The Chinese University of Hong Kong (Shenzhen) University Development Fund (Grant No. UDF01003336), Guangdong Science and Technology Program (Grant No. 2024QN11X076) and Shenzhen Science and Technology Program (Grant No. RCBS20231211090814028, JCYJ20250604141203005, 2025TC0010) and is partly supported by the Guangdong Provincial Key Laboratory of Mathematical Foundations for Artificial Intelligence (Grant No. 2023B1212010001). 

\end{document}